\newcommand{\td}{{\mathsf{td}}}
\newcommand{\px}{P\bm{x}}
\newcommand{\D}{\mathsf{D}}
\newcommand{\tom}{\bigcirc}
\newcommand{\LTD}{{\mathsf{LTD}}}
\newcommand{\LDTV}{{\mathsf{LDTV}}}
\newcommand{\LFD}{{\mathsf{LFD}}}
\newcommand{\FOL}{{\mathsf{FOL}}}
\newcommand{\ETL}{{\mathsf{ETL}}}
\newcommand{\PAL}{{\mathsf{PAL}}}
\newcommand{\term}{{\mathsf{Term}}}
\newcommand{\bv}{{\bm v}}
\newcommand{\bM}{{\mathbf M}}
\newcommand{\bS}{{\mathbf S}}
\newcommand{\ot}{\leftarrow}
\newtheorem{theorem}{Theorem}[section]
\newtheorem{fact}[theorem]{Fact}
\newtheorem{definition}[theorem]{Definition}
\newtheorem{example}[theorem]{Example}
\newtheorem{lemma}[theorem]{Lemma}
\newtheorem{remark}[theorem]{Remark}
\journal{arXiv}
\begin{document}

\begin{frontmatter}



\title{Dependence Logics in Temporal Settings}

\author[label1]{Alexandru Baltag}

 \author[label1,label2,label3]{Johan van Benthem}

  \author[label4,label5]{Dazhu Li}
 
 \affiliation[label1]{organization={Institute for Logic, Language and Computation, University of Amsterdam},
           city={Amsterdam},
           country={The Netherland}
           }

 \affiliation[label2]{organization={Department of Philosophy, Stanford University},
             city={Stanford},
             country={USA}
             }

 \affiliation[label3]{organization={The Tsinghua-UvA JRC for Logic, Department of Philosophy, Tsinghua University},
            city={Beijing},
           country={China}}

\affiliation[label4]{organization={Institute of Philosophy,
Chinese Academy of Sciences},
            city={Beijing},
           country={China}}

\affiliation[label5]{organization={Department of Philosophy, University of Chinese Academy of Sciences},
            city={Beijing},
           country={China}}



\begin{abstract}
Many forms of dependence manifest themselves over time,  with behavior of variables in dynamical systems as a paradigmatic example. This paper studies temporal dependence in dynamical systems from a logical perspective, by enriching a minimal modal base logic of static functional dependencies. We first introduce a logic for dynamical systems featuring temporalized variables,  provide a complete axiomatic proof calculus, and show that its satisfiability problem is decidable. Then, to capture explicit reasoning about dynamic transition functions, we enhance the framework with function symbols and term identity. Next we combine temporalized variables with a modality for next-time truth from standard temporal logic, where modal correspondence analysis reveals the principles needed for a complete and decidable logic of timed dynamical systems supporting reductions between the two ways of referring to time. Our final result is an axiomatization of a general decidable logic of dependencies in arbitrary dynamical systems. We conclude with a brief outlook on how the systems introduced here mesh with richer temporal logics of system behavior, and with dynamic topological logic.

\end{abstract}



\begin{keyword}
 Dynamical system \sep Temporalized variables \sep  Functional dependence \sep  Logic of temporal dependence



\end{keyword}

\end{frontmatter}



\section{Introduction: dependencies over time}\label{sec:motivation}
Dependence is a basic phenomenon in many areas, from dynamical systems in the natural sciences to  databases, AI systems, or games. In physical systems, variables like distance typically depend on others, say, velocity and time, and the laws that govern the evolution of such systems reflect real  dependencies between  physical phenomena in the world. But dependence also underlies social behavior. Adopting the well-known strategy of \textit{Tit-for-Tat} in an iterated game means that you copy what your opponent did in the previous round \citep{games}, creating a strong one-step dependence. Or, an agent in a social network may adopt a given behavior or opinion in the next step depending on the proportion of friends who  currently have that behavior, and a global evolution of behaviors for the whole group then again unfolds in a dynamical system \citep{diffusion}. There are suggestive logical aspects to dependence of variables and the associated notion of independence, and by now,  logicians have come up with various frameworks such as \cite{dependence-logic,independence-friendly,AJ-Dependence}. These available frameworks
mostly focus on static dependencies between variables at the same moment in time. We now proceed to investigate the ubiquitous phenomenon of dependencies that occur over time. One might think that this is merely a matter of adding a new variable for time to earlier dependence logics, but that would not reveal much. Instead, we chart the new semantic notions and valid patterns of reasoning that arise when adding time in  more sensitive formalisms.

Our analysis will focus on discrete dynamical systems $\bS=(S,g)$, given by an abstract \emph{state space} $S$ together with a dynamic {transition function} $g$, that maps every state to a `next state'. However, in practice states may be structured (being typically represented as tuples of values in some multi-dimensional space), and our logics reflect this by considering a number of \emph{variables} that may take different values in various states. Thus, each state comes with an associated \emph{assignment} of values to variables. Such variable assignments are a crucial device in the semantics of first-order logic $\FOL$. But crucially, not all possible assignments are necessarily realized at states, and this is what leads to dependencies: a change in value for one variable may only be possible in the system by also changing the value of some other variable. This  interpretation of `assignment gaps' as modeling dependence and correlation is well-known from the logical literature \citep{ABN-1998}.

To talk about dynamic dependencies that manifest themselves over time, we will present a \emph{Logic of Dependence with Temporalized Variables} ($\mathsf{LDTV}$) that combines two components, (i) the modal logic $\mathsf{LFD}$ of static functional dependence from \cite{AJ-Dependence}, and (ii) basic vocabulary suggested by that of temporal logic. The  $\mathsf{LFD}$ component gives us dependence atoms $D_Xy$ expressing dependence of the current value of $y$ on the current values of the variables in $X$, plus modalities $\D_X\varphi$ that express which propositions are forced to be true by the current values of the variables in $X$. Our temporal component employs  temporal terms rather than propositional temporal modalities, e.g., $\tom x$ referring to the `next value' of  variable $x$, i.e., its value at the next state of the system. Combining these devices, formulas  can express how future values of variables depend on earlier ones -- the typical pattern in recursively defining a  transition function for a dynamical system. For instance, consider the recursion equation
\begin{equation*}
  y(t+1)=2x(t),  
\end{equation*}
which states that the value of $y$ in the next step equals two times the current value of $x$. Its dependence pattern can be captured by $D_{\{x\}}\tom y$.

 \begin{remark}\label{remark:different-from-causality}
The recursion  $y(t+1)=2x(t)$ also supports the `backward dependence' $D_{\{\tom y\}} x$. Unlike causality, temporal dependence has no unique  direction.
 \end{remark}

 A slightly more intricate example with interacting variables runs as follows:
 \begin{equation*}
\begin{cases}
x(t+1)=x(t)+y(t)\\
     y(t+1)=2x(t) 
  \end{cases}
\end{equation*}
 These equations give the dependence patterns $D_{\{x,y\}}\tom x$ and $D_{\{x\}}\tom y$. Also, from the two equations, we can infer that $$x(t+1)=x(t)+2x(t-1)$$  inducing the  pattern $D_{\{x,\tom x\}}\tom\tom x$. As we shall show later, the latter can  be formally derived from $D_{\{x,y\}}\tom x$ and $D_{\{x\}}\tom y$ using a  Hilbert-style proof system for the logic $\mathsf{LDTV}$. Over abstract state models for dynamical systems, this system  already expresses interesting basic facts about temporal dependence. 

In a next step, we extend the logic with function symbols and term identity, allowing for the specification of concrete laws defining  dynamical systems. We show that both $\LDTV$ and this functional extension have a complete axiomatization and  a decidable satisfiability problem. We prove these results by spelling out reductions to the dependence logic $\LFD$ which also yield the new insight that the addition of global (as opposed to local) term identities preserves decidability.

Next, moving closer to   ordinary temporal logics, we analyze a general intuition about propositions involving temporalized variables, viz. {\em future truth about variable values is the same as current truth about future values of these variables}. To make this precise we add a modality  $\tom$  to our language for truth at the next moment of system evolution. Via a modal correspondence analysis of the intuition, we then identify the structural conditions needed to make it true. As we shall see, this forces us to shift the perspective slightly from state spaces to  temporal models with histories of executions for dynamical systems. We axiomatize the resulting dynamic dependence logic of `timed models' plus its extension with function symbols and global term identity, and show  decidability and axiomatizability using a reduction algorithm to our previous system $\LDTV$ inspired by  dynamic-epistemic logic \citep{BMS,Johan-del,hans-del}. 

Finally, we reach the point where merely adding temporalized variables to static dependence logics  will no longer suffice in the analysis of dynamical systems. We introduce a more general temporal dependence logic $\LTD$ valid for arbitrary dynamical systems which does not support a reduction from the propositional temporal modality to non-modal statements about temporalized variables: the two ways of referring to time have  become mutually irreducible. We axiomatize this final logic as well and show its decidability: now no longer via reduction techniques, but with a modal filtration-based approach, a much more elaborate version of the one  used for analyzing the original dependence logic $\LFD$.

Our analysis in this article is just a start, and in a final part  we briefly discuss some extensions that make sense. First, one can view our systems as  fragments of richer temporal dependence languages for dynamical systems which can say more about system evolution over time, in the spirit of multi-agent epistemic-temporal logics \citep{reason-about-knowledge}. Another natural extension is topological, since dynamical systems often come with topologies on the sets of values for variables, or on the state space itself. This line connects our systems with temporal-topological logic of dynamical systems \citep{handbook-dtl,artemov-1997}, but also with  richer logics of continuous dependence \citep{AJ-CD}. Finally, given the natural alternative interpretation of modal dependence logic as a  logic of information and knowledge, we point out some interesting connections with epistemic-temporal logics of agency \citep{reason-about-knowledge, Benthem2006TheTO,etl-1989}.

\vspace{3mm}

\noindent{\bf Structure of the paper}\; Section \ref{sec:o-term-logic} introduces the logic of dependence with temporalized variables $\LDTV$ and offers various results, including decidability and a complete Hilbert-style  proof system. Section \ref{sec:global-equality} defines an extended system $\LDTV^{\mathsf{f},\equiv}$  with function symbols  and a  global equality. We show that  decidability and axiomatizability remain. Next, connecting to standard  temporal logics,  Section \ref{sec:next-time-modality} analyzes a way of reducing future truth to current truth and determines the class of dynamical systems for which this holds. The resulting logic $\LTD^{\mathsf{t}, \mathsf{f}, \equiv}$ is axiomatized and shown decidable in Section \ref{sec:next-time-modality-axiomatization-timed-semantics}. Finally, in  Section \ref{sec:next-time-modality-non-timed-semantics}, we define a general temporal dependence logic $\LTD$ for arbitrary dynamical systems models, and show its completeness and decidability in the most complex proof in this paper. Wrapping up, Section \ref{sec:further-directions} closes the article with further directions deserving to be explored in the future. A number of detailed technical proofs for results stated in the preceding sections have been collected in a series of Appendices.

\section{Logic of dependence with temporalized variables}\label{sec:o-term-logic}

 In this section, we introduce the system $\mathsf{LDTV}$, the Logic of Dependence with Temporalized Variables. We will  show that its satisfiability problem is decidable and provide a complete Hilbert-style proof system. But first, here are the basics.

\subsection{Language and semantics of $\LDTV$}
The language of $\mathsf{LDTV}$ is based on a  \emph{vocabulary} $\mathsf{V}=(V, Pred, ar)$, where $V=\{v_1, \ldots, v_N\}$  is a finite set  of symbols denoting \emph{basic variables};  $Pred$ is a set of \emph{predicate} symbols, and $ar: Pred\to \mathbb{N}$ is an \emph{arity} function assigning to each  $P\in Pred$ some arity $ar(P)\in
\mathbb{N}$.

\begin{definition}\label{def:language}
Given a vocabulary $\mathsf{V}=(V, Pred, ar)$, we construct \emph{terms} $x$ denoting `dynamical variables', as well as \emph{formulas}
$\varphi\in\mathcal{L}$ via the following:
\begin{center}
 $x ::= v \mid \tom x$\\
$\varphi ::= P(x_1,\ldots , x_k)\mid\neg\varphi\mid \varphi\land\varphi\mid  \D_X\varphi\mid D_X x$
\end{center}

\noindent where $v\in V$ is any basic variable symbol, $x, x_1, \ldots, x_k$ are terms, $P$ is a predicate symbol of some arity $k$, and $X$ is any non-empty finite set of terms.\footnote{In the dependence logic $\LFD$ sets of variables can also be empty, which allows us to define a universal modality in that setting. For concrete reasoning about  dynamical systems, non-empty sets make more sense, and there are also some technical issues with temporal dependence logics containing a universal modality which we will not consider in this paper.}  We use the notation $\term$ for {\em the set of all terms} and $\widehat{\D}_X\varphi$ for {\em the dual of} $\D_X\varphi$.
\end{definition}

Abbreviations $\top,\bot,\lor,\to,\leftrightarrow$ are as usual. The dynamical term $\tom x$ captures the \emph{next value of $x$}, i.e., the value that $x$ will have at the next moment in time. The readings of atomic formulas $\px$ and Boolean connectives $\neg,\land$ are also as usual. Formula $\D_X\varphi$ states that {\em the current values of the variables in $X$ fix the truth of $\varphi$} and $D_Xy$ says that {\em the current values of the variables in $X$ fix the value of $y$}.

Next, for any non-empty $X\subseteq \term$ and  $n\in\mathbb{N}$, we define $\tom X := \{\tom x: x\in X\}$, $\tom^0 X := X$, and $\tom^{n+1} X:= \tom \tom^n X$. For  arbitrary formulas $\varphi$, we use $\varphi[V/ \tom^n V]$ for the formula obtained by replacing each $v\in V$ in $\varphi$ with $\tom^n v$. We measure nestings of $\tom$ in a term or a set of terms using the following  notion:
 
 \begin{definition}\label{def:temporal-depth-terms-LTD}
 For any $v\in V$, $x\in \term$ and finite $X\subseteq \term$, we define the {\em temporal depth} of terms and finite sets $X$ of terms   in the following:
 \begin{center}
   $\mathsf{td}(v):=0$, \quad $\mathsf{td}(\tom x):=1+\mathsf{td}(x)$, \quad $\td(X):={\rm{max}}\{\td(x): x\in X\}$.  
 \end{center}
 \end{definition}

Having presented the basic syntax, we now move to the semantics. 

 \vspace{2mm}

\par\noindent\textbf{Dynamical dependence models}\; A \emph{typed $\FOL$-model} is a multi-typed structure $M=(\mathbb{D}_v, I)_{v\in V}$, indexed by variables $v\in V$ (each thought as having its own distinct type), where: for each $v\in V$, $\mathbb{D}_v$ is the range of values of variable $v$; and $I$ is an interpretation function, mapping each $P\in Pred$ of arity $n$ into some $n$-ary relation on the union of all values $\bigcup_{v\in V} \mathbb{D}_v$.

Given a vocabulary $\mathsf{V}=(V, Pred, ar)$, a \emph{dynamical dependence model} for $\mathsf{V}$ (or `{\em dynamical model}', for short) is a structure $\bM=(M, \bS, \bm{v})_{v\in V}$, consisting of: a dynamical system $\bS=(S,g)$; a typed $\FOL$-model $M=(\mathbb{D}_v, I)_{v\in V}$; and, for each variable $v\in V$, a corresponding dynamical variable, i.e., a map $\bm{v}:S\to \mathbb{D}_v$. Moreover, our dynamical systems will be fully specified in terms of the values of the basic
variables. This means that {\em dynamical states are uniquely determined by the values of all the basic variables}, i.e., they satisfy the following condition:
\begin{center}
    if $\bm{v}(s)=\bm{v}(t)$ for all $v\in V$, then $s=t$.
\end{center}
Then, each state $s\in S$ corresponds to a unique variable assignment $\overline{s}$, assigning to each basic variable
$v\in V$ its value $\overline{s}(v):=\bm{v}(s)$. In other words, states can be identified with basic variable assignments.

\vspace{2mm}

\par\noindent\textbf{Semantics: interpretation of (sets of) terms}\; Given a dynamical model $\bM=(M, \bS, \bm{v})_{v\in V}$, we can extend the interpretation $\bm{v}$ of basic variables in $V$ to arbitrary terms $x$, as well as
finite sets of terms $X$. These are interpreted as dynamical variables $\bm{x}: S\to \mathbb{D}_x$, $\bm{X}:S\to \mathbb{D}_X$, as
follows: for basic variable symbols $v\in V$, $\mathbb{D}_v$ and $\bm{v}$ are already given by the dynamical model; for other single terms $\tom x$, we recursively put
$$\mathbb{D}_{\tom x}:=\mathbb{D}_x,  \,\, \, (\tom \bm{x})(s):=\bm{x}(g(s));$$
and for sets of terms $X$, we take
\begin{center}
 $\mathbb{D}_{X}:=\Pi_{x\in X} \mathbb{D}_x, \,\,\, \bm{X}(s):= ({\bm x}(s))_{x\in X}.$
\end{center}

\par\noindent\textbf{Value  agreement}\; Let $\bS=(S,g)$ be a dynamical system and $X$ be a finite set of terms. We can introduce an
equivalence relation $=_X$ on $S$, called \emph{$X$-value agreement}, by putting
\begin{center}
  $w =_X w'$ iff   $\bm{x}(w)=\bm{x}(w')$ for all $x\in X$.
\end{center}
In particular, for singletons $\{x\}$, we use $=_x$ as an abbreviation for $=_{\{x\}}$.

\vspace{2mm}

\noindent\textbf{Semantics: interpretation of formulas}\; The semantics of $\mathsf{LDTV}$ is given by the same clauses as in the semantics of $\LFD$ \cite{AJ-Dependence}, but applied to the richer language $\mathcal{L}$:

\begin{definition}\label{def-semantics-O-term-logic} Given a dynamical model $\bM$, we define the \emph{truth of a formula} $\varphi\in\mathcal{L}$ \emph{in} $\bM$ \emph{at} a state $s\in S$, written  $s\vDash_{\bM}\varphi$, in the following recursive format:
\begin{center}
\begin{tabular}{r c l }
$s\vDash_{\bM} P(x_1,\ldots , x_n)$ & \quad iff  &\quad $(\bm{x_1}(s),\ldots,\bm{x_n}(s))\in I(P)$\\
 $s\vDash_{\bM}\neg\varphi$    &\quad  iff &\quad  not $ s \vDash_{\bM}\varphi$\\
 $s\vDash_{\bM} \varphi\land\psi $&\quad  iff &\quad $s\vDash_{\bM}\varphi \, \,and \, \, s\vDash_{\bM}\psi$\\
 $s\vDash_{\bM} D_X y$&\quad  iff &\quad  for every $w\in S$, $s=_X w$ implies $s=_y w$\\
 $s\vDash_{\bM} \D_X\varphi$&\quad iff &\quad for every $w\in S$, $s=_X w$ implies $w\vDash_{\bM}\varphi$
\end{tabular}
\end{center}
\end{definition}

To ease notational clutter, we often omit the subscript on the truth relation denoting the model when the latter is clear from the context. Also, notions of {\em satisfiability}, {\em validity} and {\em logical consequence} are defined as usual. 

Since the system $\LDTV$ is a direct extension of $\LFD$, the validities of the latter are evidently still valid in the new setting, including
\begin{itemize}
\item[$\bullet$] $ D_XY\land D_ZU\to D_{X\cup Z}(Y\cup U)$ \hfill  (Additivity of Dependence)
\item[$\bullet$] $  D_Xy \to D_Zy$, \quad for $X\subseteq Z$ \hfill  (Monotonicity of Dynamic Dependence)
 \item[$\bullet$] $ \D_X\varphi\to \D_Y\varphi$,\quad for $X\subseteq Y$ \hfill (Monotonicity of Dependence Quantifiers)
\end{itemize}
  But it is important to notice that these are now  about complex terms with temporalized variables. Besides the above, the formulas below are also valid in $\LDTV$:
\begin{itemize}
    \item $D_{\tom^n V}\tom^{n+m} v$, where $V$ is the whole set of variables and $m,n\in\mathbb{N}$.
    \item  $\D_V\varphi\leftrightarrow \varphi$, where $V$ is the whole set of variables.
\end{itemize}

\subsection{Decidability of  the satisfiability problem for
 $\mathsf{LDTV}$}\label{sec:o-term-logic-decidability}
To prove decidability of $\mathsf{LDTV}$, we present a reduction to static dependence logic which may have some independent interest as a general method. We will embed $\mathsf{LDTV}$ into {\em $\LFD^{\mathsf{f}}$ that extends $\mathsf{LFD}$ with function symbols}, which is decidable \cite{AJ-Dependence}. More precisely, we can faithfully translate $\mathsf{LDTV}$ into an $\LFD$-language with  $N$ functional symbols $\{f_v: v\in V\}$ of arity $N$, where  $V=\{v_1, \ldots, v_N\}$ is the set of all variables in the vocabulary of $\mathcal{L}$.  

 \vspace{2mm}
 
 \noindent{\bf Translation from $\mathcal{L}$-formulas into $\LFD^{\mathsf{f}}$-formulas}\; First, we associate to each $x\in \term$
 a translation $\textbf{Tr}(x)$ as a functional term in $\LFD^{\mathsf{f}}$, and associate to each $\varphi\in\mathcal{L}$ a
 translation $\textbf{Tr}(\varphi)$ into $\LFD^{\mathsf{f}}$, by the following inductive clauses:
\begin{center}
$\textbf{Tr}(v):= v$, \quad $\textbf{Tr}(\tom^{n+1} v):= f_v (\textbf{Tr}(\tom^nv_1), \ldots, \textbf{Tr}(\tom^nv_N))$  \\
\vspace{2mm}
$\textbf{Tr}(P x_1,\ldots, x_n):= P (\textbf{Tr}(x_1),\ldots, \textbf{Tr}(t_n))$, \quad $\textbf{Tr}(D_X y): =
D_{\textbf{Tr}(X)} \textbf{Tr}(y)$,\\
\vspace{1mm}
$\textbf{Tr}(\neg \varphi) :=\neg \textbf{Tr}(\varphi)$, \quad $\textbf{Tr}(\varphi\wedge \psi):=\textbf{Tr}(\varphi) \wedge
\textbf{Tr}(\psi)$,\\
\vspace{1mm}
$\textbf{Tr}(\D_X\varphi):=\D_{\textbf{Tr}(X)}\textbf{Tr}(\varphi)$,
\end{center}
\noindent where $\textbf{Tr}(X)=\{\textbf{Tr}(x): x\in X\}$. 

\vspace{2mm}
 
\noindent{\bf From $\LFD^{\mathsf{f}}$-models to  dynamical models}\; Let $\bM=(\mathcal{O},I,A)$ be a model for
the
$\LFD^{\mathsf{f}}$-language described above. We can convert it into a  
dynamical model
$\bM^{\downarrow}=(\mathbb{D}_v,I',S,g,{\bm v})_{v\in V}$ over variables $V$, by taking:
\begin{itemize}
\item[$\bullet$] For all $v\in V$, $\mathbb{D}_{v}:=\mathcal{O}\times \mathbb{N}$.
\item[$\bullet$] $I'$ is the interpretation map for predicate symbols, and for each $P$ of arity $n$ and $o_1,\dots,o_n\in \mathcal{O}$,
\begin{center}
    $((o_1,i_1),\dots, (o_n,i_n))\in I'(P)$ iff $(o_1,\dots, o_n)\in I(P)$
\end{center}
where $i_1,\ldots,i_n\in\mathbb{N}$.
\item[$\bullet$] $S:=\{(s, i): i\in\mathbb{N}\;\textit{and}\; s\in A  \}$.
\item[$\bullet$] $g(s, i):= (s,i+1)$.
\item[$\bullet$] For variables $v\in V$,  ${\bm v}(s, i)$ are recursively defined by putting:
\begin{align*}
 {\bm v}(s,0)   & :=({\bm v}(s),0)\\
 {\bm v}(s,n+1) & :=(I(f_v)([{\bm v_1}(s,n)]^-,\dots,[{\bm v_N}(s,n)]^-),n+1)
\end{align*}
where for each $v_m$, $[{\bm v_m}(s,n)]^-$ is the operation that removes the natural number from the tuple ${\bm v_m}(s,n)$ (e.g., $[{\bm v}(s,0)]^-=[({\bm v}(s),0)]^-={\bm v}(s)$), and $I(f_v)$ is the interpretation of $f_v$ as an actual $N$-ary function in $\bM$.
\end{itemize}

By construction, for any $(s,n),(t,m)\in S$ and term $x$, $(s,n)=_x(t,m)$ implies $n=m$. Also, we have the following observations on this construction:

\begin{fact}\label{fact:timed-LTD-model-transition-1}
In the resulting $\bM^{\downarrow}$, for all $x\in \term$ and $(s,n)\in S$, it holds that: $${\bm x}(s,n)=({\bf Tr}(\tom^n  { \bm x})(s), n+\mathsf{td}(x) ).$$
\end{fact}
\begin{proof}
    See  \ref{appendix:proof-for-fact:timed-LTD-model-transition-1}.
\end{proof}

\begin{fact}\label{fact:timed-LTD-model-transition-2}
For all sets of terms $X$, assignments $s,w\in A$ and  $n,m\in\mathbb{N}$,
\begin{center}
$(s,n)=_X(w,m)$ \quad iff \quad $n=m$ and $s=_{{\bf Tr}(\tom^n X)} w$.  
\end{center}
\end{fact}

\begin{proof}
It holds by the definition of $\bM^{\downarrow}$ and Fact \ref{fact:timed-LTD-model-transition-1}.
\end{proof}

In what follows, for any formula $\varphi$, we denote by $\varphi[v_1/x_1,\ldots,v_{N}/x_N]$ the formula obtained by replacing every basic variable $v_i$ occurring in $\varphi$ with $x_i$.     Now, it is crucial to see that this construction is truth-preserving for $\mathsf{LDTV}$:

\begin{fact}\label{fact:LTD-t-lfd-1}
For all formulas $\varphi\in\mathcal{L}$ and all $\LFD^{\mathsf{f}}$-models $\bM$ of the appropriate type and all pairs $(s,n)\in S$:
\begin{center}
$s\vDash_{\bM} {\bf Tr}(\varphi[V/\tom^n V])$\; iff\; $(s,n)\vDash_{\bM^{\downarrow}} \varphi$.
\end{center}
\end{fact}
\begin{proof}
    See  \ref{appendix:proof-for-fact:LTD-t-lfd-1}.
\end{proof}
 
\noindent{\bf From dynamical dependence models to $\LFD^{\mathsf{f}}$-models}\; Conversely, given a dynamical model
$\mathcal{M}=(\mathbb{D}_v,I,S,g,{\bm v})_{v\in V}$, we can convert it into an $\LFD^{\mathsf{f}}$-model $\mathcal{M}^+=(\mathcal{O},I^+,A)$
for a language with $N$ variables $V$ and $N$ $N$-ary functions $\{F_v: v\in V\}$ as follows:
\begin{itemize}
\item[$\bullet$] $\mathcal{O}:= \{\infty\}\cup\bigcup_{v\in V}\mathbb{D}_v$, where $\infty$ is a fresh object.
\item[$\bullet$] $A:=\{s^+: s\in S\}$ s.t. ${\bm v}(s^+):= {\bm v}(s)$ for each $v\in V$.
\item[$\bullet$] For all $P$, $I^+(P):=I(P)$.

For all $F_v$, if there is $s\in S$ such that ${\bm v_1}(s)=o_1, \ldots, {\bm v_N}(s)=o_N$, then $I^+(F_v) (o_1, \ldots, o_N):= {\bm v}(g(s))$, and otherwise, if no such $s$ exists, putting $I^+(F_v) (o_1, \ldots, o_N):= \infty$.
\end{itemize}

For any $F_v$, the interpretation $I^+(F_v)$ is well-defined: to see this, one just needs to notice that dynamical states are uniquely
determined by the values of all the basic variables. Again, this construction is truth-preserving:

\begin{fact}\label{fact:LTD-t-lfd-2}
For all $\varphi\in \mathcal{L}$ and all  dynamical models $\mathcal{M}$ of the appropriate type and all admissible assignments $s$ of $\mathcal{M}^+$, we have:
\begin{center}
  $s\vDash_{\mathcal{M}} \varphi$ \,  iff  \, $ s^+\vDash_{\mathcal{M}^+}  {\bf Tr}(\varphi)$.
\end{center}
\end{fact}

\begin{proof}
The proof goes by induction on formulas $\varphi\in\mathcal{L}$. The cases for atoms $\px$ and Boolean connectives $\land,\neg$ are
straightforward. Here are the other cases:

(1). Formula $\varphi$ is of the form $D_Xy$. To see this, one just needs to notice that for any set $X$ of $\mathsf{LDTV}$-terms,  $s^+=_{{\bf{Tr}}(X)} t^+$ iff $s=_Xt$.

(2). Formula $\varphi$ is of the form $\D_X\psi$. Using the observation in (1) above plus the inductive hypothesis, we  obtain the required equivalence immediately.
\end{proof}

Facts \ref{fact:LTD-t-lfd-1} and \ref{fact:LTD-t-lfd-2} show that $\mathsf{LDTV}$ can be thought of as a fragment of $\LFD^{\mathsf{f}}$,
and also that $\mathsf{LDTV}$-satisfiability of a formula is the same as $\LFD^{\mathsf{f}}$-satisfiability of its translation. Immediately, this gives us the following:

\begin{theorem}\label{theorem:decidability-LTD-t}
The satisfiability problem for $\mathsf{LDTV}$ is decidable.
\end{theorem}

\subsection{Axiomatization of $\mathsf{LDTV}$}\label{sec:o-term-logic-axiomatization}

In addition to its  decidability, the system $\mathsf{LDTV}$ is also finitely axiomatizable, and a proof system $\mathbf{LDTV}$ for the logic is as follows:  

\vspace{1.5mm}

\noindent{\bf Proof system $\mathbf{LDTV}$}\; The details of the calculus are given in Table \ref{table:axiomatization-LTD}. It extends the calculus  $\mathbf{LFD}$ for $\LFD$ \cite{AJ-Dependence} with an axiom of {\em Determinism}, stating that {\em whenever we fix a state, we can fix the values of all variables at the next stage}, i.e., transitions only depend on global system states, not on when these states occur. Also, the axiom {\em $\D$-Distribution} is standard for normal modalities and the dependence quantifiers $\D_X\varphi$ satisfy \textbf{S5}-axioms.
The notions of \emph{syntactical derivation} and \emph{provability}
are  as usual.

 \begin{table}
\caption{The proof system {\bf LDTV}}
\label{table:axiomatization-LTD}
\centering
\begin{tabular}{lll}
\cline{1-3}
\textbf{I} &\quad& \textbf{Axioms and Rules of  Classical Propositional Logic} \\
\cline{1-3}
\textbf{II}&\quad& \textbf{Axioms and Rules for $\D_X\varphi$} \\
\cline{1-3}
 $\D$-Distribution  &\quad& $\D_X(\varphi\to\psi)\to(\D_X\varphi\to \D_X\psi)$\vspace{1mm} \\
 $\D$-Introduction$_1$ &\quad& $P(x_1,\ldots, x_n)\to \D_{\{x_1,\ldots,x_n\}}P(x_1,\ldots, x_n)$\vspace{1mm} \\
 $\D$-Introduction$_2$ &\quad& $D_Xy\to \D_XD_Xy$\vspace{1mm} \\
$\D$-T &\quad& $\D_X\varphi\to\varphi$\vspace{1mm} \\
$\D$-4 &\quad& $\D_X\varphi\to \D_X\D_X\varphi$\vspace{1mm} \\
$\D$-5 &\quad& $\neg \D_X\varphi\to \D_X\neg \D_X\varphi$\vspace{1mm} \\
$\D$-Necessitation &\quad& From $\varphi$, infer $\D_X\varphi$\vspace{1mm} \\
\cline{1-3}
\textbf{III}&\quad& \textbf{Axioms for $D_Xy$} \\
\cline{1-3}
Dep-Ref &\quad& $D_Xx$ \, for all $x\in X$\vspace{1mm} \\
Dep-Trans &\quad& $D_XY\land D_YZ\to D_XZ$\vspace{1mm} \\
Determinism  &\quad& $D_{V}\tom v$,\, for all variables $v\in V$ \vspace{1mm} \\
\cline{1-3}
\textbf{IV}&\quad& \textbf{Interaction Axiom} \\
\cline{1-3}
Transfer  &\quad&$D_XY\land\D_Y\varphi\to \D_X\varphi$\vspace{1mm} \\
\cline{1-3}
\textbf{V}&\quad&\textbf{Substitution for Temporalized Variables} \\
\cline{1-3}
Substitution$_{\tom}$  &\quad&  From $\varphi$, infer $\varphi[V/\tom V]$\vspace{1mm} \\
\cline{1-3}
\end{tabular}
\end{table}

\vspace{1.5mm}

At a first glance, one may think that the given proof system misses some important axioms, including $\neg P(x_1,\ldots, x_n)\to \D_{\{x_1,\ldots,x_n\}}\neg P(x_1,\ldots, x_n)$. However,  the latter principle is provable from the system, and so are all  earlier-mentioned semantic validities. The latter can then be used  to show that the formula $D_{\{x,\tom x\}}\tom\tom x$ in Section \ref{sec:motivation} can be proved  from the premises $D_{\{x\}}\tom y$ and $D_{\{x,y\}}\tom x$: 

(1). by Dep-Ref we have $D_{\{\tom x\}}\tom x$; (2). by Additivity of Dependence, $D_{\{x\}}\tom y\land D_{\{\tom x\}}\tom x \to D_{\{x,\tom x\}} \{\tom y,\tom x\}$; (3). using propositional logic, we get $D_{\{x,\tom x\}} \{\tom y,\tom x\}$; (4). applying Substitution$_{\tom}$ to $D_{\{x,y\}}\tom x$ gives  $D_{\{\tom x,\tom y\}}\tom\tom x$; finally,  (5). using propositional logic, we  obtain $D_{\{x,\tom x\}} \tom\tom x$ with Transfer and the formulas derived in steps (4) and (5), as desired.

\smallskip
The soundness of  {\bf LDTV} is easy to see:

\begin{fact}\label{fact:soundness-LTD}
   The proof system  {\bf LDTV} is sound w.r.t. dynamical dependence models.
\end{fact}

The calculus is also complete w.r.t. dynamical dependence models. To show this, we will make use of $\LFD^{\mathsf{f}}$ again. As was shown in \cite{AJ-Dependence}, a complete proof system ${\bf LFD^{f}}$ for this logic can obtained by extending the proof system
{\bf LFD} with the following axiom, for all terms $x_1, \ldots, x_n$ and $n$-ary function symbols $f$:\footnote{The  axiomatization of   $\LFD^{\mathsf{f}}$  in \citep{AJ-Dependence} also contains a Term Substitution Rule `\emph{from  $\varphi$, infer $\varphi[x/t]$}', where $\varphi[x/t]$ is the formula obtained by replacing $x$ in $\varphi$ with an arbitrary term $t$. As we already stated the axioms of $\bf{LFD}$ with arbitrary terms (see Table \ref{table:axiomatization-LTD}), we can omit this rule.}
\begin{align*}
 &D_{\{x_1,\ldots, x_n\}}f(x_1, \ldots, x_n).
\end{align*}

\noindent By simply inspecting the proof of completeness in \citep{AJ-Dependence}, and noting that all steps go through if we restrict both the language and the axioms to terms in some sub-term-closed set of terms $T$, we obtain:

\begin{fact}\label{fact:completeness-lfd}
Let $T$ be any (finite or infinite) set of $\LFD^{\mathsf{f}}$-terms that is closed under sub-terms. The logic
$\LFD^{\mathsf{f}}_{\mathsf{T}}$ with functional terms restricted to $T$ is completely axiomatized by the system ${\bf LFD^f_T}$ obtained by
restricting ${\bf LFD^f}$ to formulas that use only terms in the set $T$.
\end{fact}

Now we can show the completeness of the calculus {\bf LDTV} w.r.t. dynamical dependence models. A first step towards this is as follows:

\begin{theorem}\label{theorem:theorem-preserving}
   The translation ${\bf Tr}$ from formulas of $\LDTV$ to formulas of $\LFD$  is theorem-preserving, i.e.,  $\varphi$ is a theorem in {\bf LDTV} iff ${\bf Tr}(\varphi)$ is a theorem in ${\bf LFD^f}$.
\end{theorem}

\begin{proof}
We put $T:=\{{\bf Tr}(x): x\in\term\}$, which is closed under sub-terms. The earlier-defined translation map {\bf Tr} from $\term$ to $T$ is bijective. Moreover, the map {\bf Tr} from $\LDTV$-formulas to its range is bijective.  

\vspace{1mm}

Now, we suppose that $\not\vdash_{{\bf LFD^f_T}} {\bf Tr}(\varphi)$. Then, from the completeness of ${\bf LFD^f_T}$ (Fact
\ref{fact:completeness-lfd}), we know that ${\bf Tr}(\neg\varphi)$ is satisfiable. By Fact \ref{fact:LTD-t-lfd-1}, $\neg\varphi$ is satisfiable. By the soundness of ${\bf LDTV}$, it holds
that $\not\vdash_{{\bf LDTV}}\varphi$.

\vspace{1mm}

For the converse direction, note that for every axiom $AX$ of ${\bf LFD^f_T}$,  ${\bf Tr}^{-1}(AX)$ is an axiom or a theorem of ${\bf
LDTV}$; and every correct application  of a rule of ${\bf LFD^f_T}$ is  mapped by ${\bf Tr}^{-1}$ into a correct
application of a rule in ${\bf LDTV}$. It follows that, if $\vdash_{{\bf LFD^f_T}}{\bf Tr}(\varphi)$, then $\vdash_{{\bf LDTV}}
{\bf Tr}^{-1}({\bf Tr}(\varphi))$, i.e., $\vdash_{{\bf LDTV}}\varphi$.
\end{proof}

As an immediate corollary, we have what we want:

\begin{theorem}\label{theorem:completeness-LTD-t}
{\bf LDTV} is complete w.r.t. dynamical dependence models. 
\end{theorem}







\section{Global equality and functions}\label{sec:global-equality}

\subsection{Introducing the system} As we have seen in our motivating examples, dynamical systems are driven by laws that determine the transition function, and these laws often have an equational format.   We can explicitly express such functional dynamic laws without losing decidability, and this section will show how. We will extend $\LDTV$ with \emph{function symbols} $f(x_1, \ldots, x_n)$ (and in particular constants $c$), as well as \emph{term identity}, i.e., global equality formulas $x\equiv y$  (stating that $x$ and $y$ have the same value in \emph{all} states of the system).\footnote{We thus depart from standard $\LFD$ methodology, by adding global statements $\D_\emptyset \, x = y$
without their local versions $x=y$. The reason for this departure is that it is known that the extension of $\LFD$ with explicit equality $v=v'$, interpreted locally at states, leads to
undecidability \citep{Aachen1}. Since $\LFD$ can be embedded in our logic $\LDTV$, the same applies to
its extension with explicit equality. In contrast, adding global equality formulas $x\equiv y$ between terms $x,y$ is an innocuous move, that preserves decidability and axiomatizability.} We will write $\LDTV^{\mathsf{f},\equiv}$ for {\em the resulting logic} and $\mathcal{L}^{f, \equiv}$ for {\em its language}.\footnote{A vocabulary  $\mathsf{V}=(V, Pred, ar,
Funct)$ for $\mathcal{L}^{f, \equiv}$ extends that for $\mathcal{L}$ with a set $Funct$ of function symbols, where the arity function $ar$, besides applying to predicate symbols, also assigns each function symbol $f$ an arity $ar(f)\in\mathbb{N}$.} Also,  the notion of {\em temporal depth} for $\LDTV^{\mathsf{f},\equiv}$-terms extends that of $\LDTV$-terms (Definition \ref{def:temporal-depth-terms-LTD}) with the following clause:
\begin{center}
   $ \mathsf{td}(f(x_1,\dots,x_n)):=\mathsf{min}\{\mathsf{td}(x_1),\dots,\mathsf{td}(x_n)\}$
\end{center}

\noindent\textbf{Semantics}\;
We interpret $\mathcal{L}^{f, \equiv}$ on appropriate \emph{dynamical dependence models with functions}: essentially, the only change is that in the
notion of typed $\FOL$-model $M=(\mathbb{D}_v, I)_{v\in V}$, the interpretation map $I$ has to be extended to function symbols, mapping each
such symbol $f$ of arity $n$ into some $n$-ary function $I(f): (\bigcup_{v\in V}\mathbb{D}_v)^n \to \bigcup_{v\in V}\mathbb{D}_v$. Given this, the semantics of
$\LDTV^{\mathsf{f},\equiv}$ is
obtained by extending the semantics of $\LDTV$ with the following: for terms $f(x_1, \ldots, x_n)$, 
\begin{center}
 $\mathbb{D}_{f(x_1, \ldots, x_n)}:= \bigcup_{v\in V}\mathbb{D}_v,  \,\, \,\,\, \, \bm{f(x_1, \ldots, x_n)} (s) := I(f) (\bm{x_1}(s),
\ldots, \bm{x_n} (s)),$
\end{center}
while for term-identity formulas we put
\begin{center}
 $s\vDash_\bM x\equiv y \,\, \mbox{ iff } \,\, \bm{x}(w)=\bm{y}(w) \mbox{ for all $w\in S$}.$
\end{center}

\medskip

\par\noindent\textbf{Application: specifying dynamical laws}\;  The system  $\LDTV^{\mathsf{f},\equiv}$ can specify the
one-step dynamical laws governing a given dynamical system. For example, the equations $x(t+1)=x(t)+y(t)$ and 
 $y(t+1)=2x(t)$ can be defined with, e.g. $\tom x \equiv f_1(x,y)$ and $\tom y=f_2(x)$, respectively, where $f_1$  stands for the sum of two terms and $f_2$ for multiplication by 2. The equation $x(t+1)=x(t)+2x(t-1)$ following from the above two equations is then   captured by $\tom x\equiv f_1(\tom x, f_2(x))$. In the remainder of the section, we will  provide a complete Hilbert-style proof system for the logic of this sort of reasoning, and in particular,  $\tom x\equiv f_1(\tom x, f_2(x))$ can then be proved formally from  $\tom x \equiv f_1(x,y)$ and $\tom y=f_2(x)$.

\subsection{Axiomatization and decidability}\label{sec:global-equality-axiomatization-decidability}

In what follows, we prove that  $\LDTV^{\mathsf{f},\equiv}$ is decidable and completely axiomatizable by extending the  reduction method  for $\LDTV$. We first consider the logics $\LFD^{\equiv}$ and $\LFD^{\mathsf{f}, \equiv}$ obtained by adding to the  language of $\LFD$
identity atoms $x\equiv y$ to obtain $\LFD^{\equiv}$, and then further adding
function symbols to obtain $\LFD^{\mathsf{f}, \equiv}$. The corresponding proof systems $\textbf{LFD}^{\equiv}$ and $\bf{LFD^{f, \equiv}}$
are obtained by adding to $\textbf{LFD}$ the principles {\em Reflexivity of Identity} and {\em Substitution of Identicals} in Table \ref{table:aximotization-LTD-equality} in the
first case, and  also adding a {\em Functionality Axiom} in the  second case. The soundness of the two proof systems is obvious. We now
establish  decidability and  completeness  for $\LFD^{\mathsf{f}, \equiv}$ and $\LFD^{\equiv}$, starting with the latter system.

\begin{fact}\label{fact:decidability-lfd-equiv}
The satisfiability problem for $\LFD^{\equiv}$ is decidable. Moreover, the proof system ${\bf LFD^{\equiv}}$ is complete for this logic.
\end{fact}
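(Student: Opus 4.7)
The plan is to reduce both decidability and completeness of $\LFD^{\equiv}$ to the corresponding known properties of $\LFD$ by exploiting the fact that identity atoms $x \equiv y$ are \emph{global}: their truth value in any model is independent of the state at which we evaluate. Given $\varphi \in \LFD^{\equiv}$, let $V_\varphi$ be the finite set of variables occurring in $\varphi$. For every equivalence relation $\sim$ on $V_\varphi$, fix a representative $[v]$ for each $\sim$-class and set
\[
\chi_\sim \; := \; \bigwedge_{v \sim w} (v \equiv w) \;\wedge\; \bigwedge_{v \not\sim w} \neg (v \equiv w),
\]
and let $\varphi_\sim$ be the $\LFD$-formula obtained from $\varphi$ by substituting $[v]$ for every occurrence of $v$ and then replacing each remaining identity atom $[v] \equiv [w]$ by $\top$ or $\bot$ according to whether the representatives coincide.

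The central step is to show the provable equivalence
\[
\mathbf{LFD^{\equiv}} \vdash \varphi \;\leftrightarrow\; \bigvee_{\sim} (\chi_\sim \wedge \varphi_\sim),
\]
where $\sim$ ranges over equivalence relations on $V_\varphi$. This follows by iterated use of Reflexivity of Identity and Substitution of Identicals (to rewrite $\varphi$ as $\varphi_\sim$ under the hypothesis $\chi_\sim$), combined with the propositional tautology $\bigvee_\sim \chi_\sim$ obtained by case analysis on the pattern of identity atoms over $V_\varphi$. With this equivalence in hand, $\LFD^{\equiv}$-satisfiability of $\varphi$ reduces to satisfiability of some disjunct $\chi_\sim \wedge \varphi_\sim$, which in turn amounts to the existence of an $\LFD$-model of $\varphi_\sim$ over the reduced vocabulary of representatives in which distinct representatives are globally distinct. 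As there are only finitely many equivalence relations on the finite set $V_\varphi$, decidability of $\LFD$ yields decidability of $\LFD^{\equiv}$.

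For completeness, assume $\varphi$ is $\LFD^{\equiv}$-valid. By the above equivalence $\chi_\sim \to \varphi_\sim$ is valid for every $\sim$, and by the satisfiability reduction of the preceding paragraph $\varphi_\sim$ must itself be $\LFD$-valid over the reduced vocabulary. Completeness of $\mathbf{LFD}$ then yields derivations of each $\varphi_\sim$ which, lifted into $\mathbf{LFD^{\equiv}}$ and recombined through the key equivalence, give a proof of $\varphi$.

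The main obstacle I expect lies in the model-theoretic direction of the satisfiability reduction: showing that every $\LFD$-satisfiable $\varphi_\sim$ is realized in some $\LFD$-model where the chosen representatives are globally pairwise distinct. The delicacy is that $\D_\emptyset$ acts as a universal modality already inside $\LFD$, so naively adjoining new assignments to witness inequalities between representatives can falsify true $\D_\emptyset$-subformulas of $\varphi_\sim$. The plan is to pad the satisfying model by introducing fresh domain elements and carefully tailored discriminating assignments, extending predicate interpretations in such a way that every $\D_\emptyset$-subformula of $\varphi_\sim$ continues to hold at the new assignments; this is essentially a standard witnessing extension, but its verification is the only non-routine part of the argument.
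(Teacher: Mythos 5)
Your overall architecture coincides with the paper's: the characteristic formulas $\chi_{\mathcal{E}}$ indexed by equivalence relations on the (finite) variable set, the substitution translation $T_{\mathcal{E}}(\varphi)$ that collapses each class to a representative and evaluates identity atoms to $\top$ or $\bot$, the provable equivalence $\varphi \leftrightarrow \bigvee_{\mathcal{E}} (T_{\mathcal{E}}(\varphi)\wedge\chi_{\mathcal{E}})$, and the finite case split giving both decidability and completeness. Your completeness argument runs contrapositively from validity where the paper runs from consistency, but these are interchangeable.

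The genuine divergence is in the step you yourself flag as the only non-routine one: showing that an $\LFD$-satisfiable $T_{\mathcal{E}}(\varphi)$ can be realized together with $\chi_{\mathcal{E}}$. You propose to \emph{pad} an arbitrary satisfying model with fresh domain elements and new discriminating assignments while re-engineering the predicate interpretations so that all $\D_\emptyset$-subformulas survive. The paper avoids any such extension: it invokes the known fact that every satisfiable $\LFD$-formula is satisfiable in a \emph{distinguished} dependence model, one in which distinct variables have mutually disjoint value ranges, so the negative conjuncts $\neg(v\equiv v')$ of $\chi_{\mathcal{E}}$ hold automatically; the positive conjuncts are then obtained by overwriting, at every state, the values of the variables \emph{not} occurring in $T_{\mathcal{E}}(\varphi)$ with the values of their class representatives, and Locality of $\LFD$ guarantees that this change does not disturb the truth of $T_{\mathcal{E}}(\varphi)$. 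This buys a one-line verification in place of a witnessing construction. Your padding route is not merely harder; as stated it is at risk of failing, for exactly the reason you identify: if the given model happens to make two distinct representatives take the same value at every state and $T_{\mathcal{E}}(\varphi)$ contains $\D_\emptyset$- or $D_\emptyset$-constraints pinning those values down, no conservative addition of assignments can separate them, and one must instead \emph{replace} the model by a better one. The distinguished-model property is precisely the tool that licenses that replacement, so you should cite it (or prove it) rather than attempt the extension argument; with it in hand, your remaining steps go through as written.
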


\begin{proof}
The proof uses a translation back into $\LFD$. For the decidability of $\LFD^{\equiv}$, note first that all terms of this logic are
basic variables in
$V$. For any equivalence relation $\mathcal{E}\subseteq V\times V$ on variables in $V$, we put
\begin{center}
$\chi_{\mathcal{E}} \, :=\, \bigwedge_{(v,v')\in \mathcal{E}} (v\equiv v') \wedge \bigwedge_{(v,v')\not\in \mathcal{E}} \neg (v\equiv v')$.
\end{center}

Let $\varphi$ be a formula of $\LFD^{\equiv}$. Assuming a canonical enumeration without repetitions of all  variables in $V$, we can
define, for each such family $\mathcal{E}\subseteq V\times V$, a `translation' $T_{\mathcal{E}} (\varphi)$ into $\LFD$, by: (a) replacing
every
occurrence of any variable $v$ in our formula by the first variable $v'$ (according to our enumeration) such that $(v, v')\in \mathcal{E}$; then
(b) substituting in the resulting formula every subformula of the form $v\equiv v'$ with either $\top$ when $v=v'$, or with $\bot$ when $v\neq
v'$.
Using the global nature of equality $\equiv$, we can
easily see that, for every $\mathcal{E}$, the equivalence
\begin{center}
  $(\varphi \wedge \chi_{\mathcal{E}})\leftrightarrow (T_{\mathcal{E}} (\varphi)  \wedge \chi_{\mathcal{E}})$  
\end{center}
is provable in  $\textbf{LFD}^{\equiv}$, and thus valid. Combining this with the equivalence
\begin{center}
 $\varphi \, \leftrightarrow \, \bigvee \{\varphi\wedge  \chi_{\mathcal{E}}: {\mathcal{E}} \textit{~is an equivalence relation on~} V\}$   
\end{center}
(which is also provable in the calculus  $\textbf{LFD}^{\equiv}$), we obtain the following:

\vspace{2mm}

\noindent\textbf{Claim 1:}\; $\vdash_{\mathbf{LFD}^{\equiv}} \,  \varphi \leftrightarrow  \bigvee \{T_{\mathcal{E}}(\varphi)\wedge
\chi_{\mathcal{E}}:
{\mathcal{E}} \textit{~is an equivalence relation on}~ V\}$

\vspace{2mm}

On the other hand, we also have, for every equivalence relation $\mathcal{E}\subseteq V\times V$:

\vspace{2mm}

\noindent\textbf{Claim 2:}\; $T_{\mathcal{E}} (\varphi)$ is satisfiable iff $T_{\mathcal{E}} (\varphi)\wedge \chi_{\mathcal{E}}$ is
satisfiable.

\vspace{2mm}

The right-to-left implication is obvious. For the left-to-right direction, assume that $T_{\mathcal{E}} (\varphi)$ is satisfiable. By a known
property of $\LFD$, it must then be satisfiable in some state $s$ of some `distinguished' dependence model $\bM$: one in which the range of values of every two distinct variables occurring in  $T_{\mathcal{E}} (\varphi)$ are mutually disjoint \citep{AJ-Dependence}. Change the values of all the variables $v\in V$, by assigning them the same value as the first chosen representative, i.e., $\bm{v}(s):= \bm{v'} (s)$, where $v'$ is the first variable $v'$ (according to our enumeration) s.t. $(v, v')\in \mathcal{E}$.  Call $\bM'$ the modified model. The definition of $T_{\mathcal{E}} (\varphi)$ ensures that all the variables actually occurring in $T_{\mathcal{E}} (\varphi)$ kept their old values. By a known property of $\LFD$ (its `Locality'), the change will not affect the truth value of $T_{\mathcal{E}} (\varphi)$, which thus remains true at $s$ in the changed model $\bM'$. Furthermore, it is easy to see our change makes the formula $\chi_{\mathcal{E}}$ true at all states in the resulting model $\bM'$. Putting these observations together, the formula $T_{\mathcal{E}}(\varphi) \wedge \chi_{\mathcal{E}}$ is satisfied at $s$ in $\bM'$.

\vspace{2mm}

We gather now all these `translations' into one formula as follows: 
\begin{center}
$T(\varphi) \, :=\, \bigvee \{ T_{\mathcal{E}} (\varphi): \mathcal{E}
\textit{~is an equivalence relation on~} V\}$.    
\end{center}
Using Claims 1 and 2 above, we immediately obtain:

\vspace{2mm}

\noindent\textbf{Claim 3:}\; $\varphi$ is satisfiable iff $T(\varphi)$ is satisfiable.

\vspace{2mm}

We have reduced the satisfiability problem for $\LFD^{\equiv}$ to the corresponding problem for $\LFD$, thus proving its decidability.

\vspace{2mm}

As for completeness, suppose that $\varphi$ is consistent w.r.t.  ${\mathbf{LFD}}^{\equiv}$. By Claim 1, there must exist some
equivalence relation $\mathcal{E}\subseteq V\times V$ s.t. $T_{\mathcal{E}}(\varphi)\wedge  \chi_{\mathcal{E}}$ is consistent w.r.t.  ${\mathbf{LFD}}^{\equiv}$, and hence $T_{\mathcal{E}}(\varphi)$ is also consistent w.r.t. $\mathbf{LFD}$ (since its
axioms and rules are included among those of ${\mathbf{LFD}}^{\equiv}$). By the completeness of   $\mathbf{LFD}$, 
$T_{\mathcal{E}}(\varphi)$ is satisfiable. Using the definition of $T(\varphi)$, it follows that $T(\varphi)$ is satisfiable, and so by Claim 3 we conclude that $\varphi$ is satisfiable, as desired.
\end{proof}

\begin{fact}\label{prop:decidability-lfd-equiv-funct}
The satisfiability problem for $\LFD^{\mathsf{f}, \equiv}$ is decidable. Moreover, the proof system $\bf{LFD^{f, \equiv}}$ is complete for
this logic.
Finally, if $T$ is any set of $\LFD^{\mathsf{f}, \equiv}$-terms closed under sub-terms, then  $\LFD^{\mathsf{f},\equiv}_\mathsf{T}$ with
function terms restricted to $T$ is completely axiomatized by $\bf{LFD^{f, \equiv}_T}$, which is obtained by restricting all the axioms and
derivation rules of $\bf{LFD^{f,
\equiv}}$ to instances that use only terms in $T$.
\end{fact}
\begin{proof}
    Here is the key idea, using our earlier results. The proof of decidability for $\LFD^{\mathsf{f}, \equiv}$ goes via reduction to $\LFD^{\equiv}$, and it uses the decidability of
$\LFD^{\equiv}$ together with the same translation method  used for the reduction of $\LFD^{\mathsf{f}}$ to $\LFD$: complex functional
terms are recursively replaced by new variables. The proof of completeness uses Fact \ref{fact:decidability-lfd-equiv}, following the same
pattern as the proof of Fact \ref{fact:completeness-lfd} on the completeness of fragments of $\bf{LFD^f}$.
\end{proof}

\par\noindent\textbf{Decidability of $\LDTV^{\mathsf{f},\equiv}$}\; We now proceed to use these facts to establish our main results on $\LDTV^{\mathsf{f},\equiv}$, starting with its decidability:

\begin{theorem}\label{theorem:decidable-LTD-equiv-funct}
The satisfiability problem for $\LDTV^{\mathsf{f},\equiv}$ is decidable. 
\end{theorem}

\begin{proof}
The proof proceeds via a translation into $\LFD^{\mathsf{f}, \equiv}$, following the same lines as the proof of the corresponding results for $\LDTV$. We  go from the fragment of the language of $\LDTV^{\mathsf{f},\equiv}$  to the language of
$\LFD^{\mathsf{f},\equiv}$. This is done by extending the translation  $\textbf{Tr}(\varphi)$ from Section
\ref{sec:o-term-logic-decidability} with two additional inductive clauses: for functional terms we add
\begin{center}
$\textbf{Tr}(\tom^m f(x_1, \ldots, x_n):= f(\textbf{Tr}(\tom^m x_1), \ldots, \textbf{Tr}(\tom^m x_n))$;
\end{center}
\noindent for formulas we add
\begin{center}
$\textbf{Tr}(x\equiv y):= (\textbf{Tr}(x)\equiv \textbf{Tr}(y))$.
\end{center}

\noindent We can then go back and forth between  dynamical models with functional symbols and dependence models with functional symbols, by
extending the maps $\bM \mapsto \bM^{\downarrow}$ and $\mathcal{M}\mapsto \mathcal{M}^+$ to cover  functional symbols. Here are some  hints:
\begin{itemize}
    \item[$\bullet$] For the model
$\bM^{\downarrow}$, we extend the interpretation function $I'$ for functional symbols as follows:
\begin{center}
$I'(f)((o_1,i_1),\dots, (o_n,i_n))= (I(f)(o_1,\dots,o_n), \mathsf{min}\{i_1,\dots,i_n\})$,
\end{center}
where $f$ is a functional symbol of arity $n$, $i_1,\ldots,i_n\in\mathbb{N}$ and $o_1,\dots,o_n\in \mathcal{O}$. 
Based on the new construction, we can extend the proofs of Fact \ref{fact:timed-LTD-model-transition-1} and Fact \ref{fact:LTD-t-lfd-1} to cover functional terms and identity.  

   \item[$\bullet$] For the converse, we start with a dynamical dependence model with functions $\mathcal{M}$, and modify the definition of  $\mathcal{M}^+$ to cover these functions, putting $$I^+(f)(o_1, \ldots, o_m):=I(f)(o_1, \ldots,o_m)$$
whenever the
later is defined, and $$I^+(f)(o_1, \ldots, o_m):=\infty$$ otherwise, then again extend the proof of Fact \ref{fact:LTD-t-lfd-2} to cover the
language with functional terms and identity. 
\end{itemize}
 This establishes decidability.
\end{proof}

\par\noindent\textbf{Proof system $\mathbf{LDTV^{f,\equiv}}$}\; Next, we present a proof system  $\mathbf{LDTV^{f,\equiv}}$  for the logic, which consists of the axioms and rules of the proof system
$\mathbf{LDTV}$ (restated for all formulas and terms of the extended logic $\LDTV^{\mathsf{f},\equiv}$) together with the new
axioms given in Table \ref{table:aximotization-LTD-equality}. The following principles are provable in the system:
\begin{description}
\item[Symmetry of Identity:] $x\equiv y \to y\equiv x$
\item[Functional Substitution:] $\bigwedge_{1\le i\le n} (x_i\equiv y_i)\to f(x_1, \ldots, x_n) \equiv f(y_1,\ldots, y_n)$
\item[Global Identity:] $x\equiv y \to \D_X (x\equiv y)$, \,\,\,\, \,  $x\equiv y \to \tom (x\equiv y)$
\end{description}

\begin{table}
    \caption{New axioms for $\mathbf{LDTV^{f,\equiv}}$}
    \label{table:aximotization-LTD-equality}
\centering
\begin{tabular}{r l}
\cline{1-2}
Functionality Axiom  &   $D_{\{x_1, \ldots, x_n\}} f(x_1, \ldots, x_n)$, where $f\in Funct$ is a\\ &functional symbol with $ar(f)=n$.
\vspace{1mm}\\
Reflexivity of Identity &   $x\equiv x$ \vspace{1mm}\\
Substitution of Identicals  &   $x\equiv y \to (\varphi \to \psi)$, where formula $\psi$ is obtained \\
 & by  substituting some occurrence of $x$ in $\varphi$ by $y$.
\vspace{1mm}\\
Term Reduction   &   $\tom f(x_1,\ldots,x_n)\equiv f(\tom x_1,\ldots,\tom x_n)$\\
\cline{1-2}
\end{tabular}
\end{table}

Again, the soundness of ${\bf LDTV^{f,\equiv}}$ is easy to see:

\begin{fact}\label{facf:timed-LTD-ident-soundness}
${\bf LDTV^{f,\equiv}}$ is sound w.r.t.  dynamical dependence models with functions.
\end{fact}

Also, we have the following  result for its completeness:
\begin{theorem}\label{theorem:completeness-LTD-equiv-funct}
$\bf{LDTV^{f,\equiv}}$ is complete for dynamical  models with functions.
\end{theorem}

\begin{proof}
The completeness proof uses the completeness result in Fact \ref{prop:decidability-lfd-equiv-funct}, following the same steps as the
proof of completeness for $\LDTV$ (Theorem \ref{theorem:completeness-LTD-t}).
\end{proof}

\section{Adding a next-time modality}\label{sec:next-time-modality}
Going beyond the dependence logics studied so far featuring temporalized variables, let us now consider standard temporal reasoning about a dynamical process making explicit assertions $\tom \varphi$ about the next stage after the current one. Is this a true expressive extension, or could it be a redundant device just for convenience? Here is a `reductive' intuition that seems plausible: one may think that
 \begin{quote}
\qquad\quad   {\em future truth about values of variables is the same as 

\qquad\quad current truth about future values of these variables}.  
\end{quote}   

\noindent This is in fact how one can read the  recursion equations of a transition function. 

To analyze this intuition precisely, we first add $\tom$ as a {\em temporal modality},  writing $\mathcal{L}^{\dagger,f,\equiv}$ {\em for the resulting language obtained by adding to $\mathcal{L}^{f, \equiv}$  formulas of the form $\tom \varphi$}, which are interpreted as follows:
\begin{center}
    $s\vDash_{\bM} \tom\varphi$\quad {\it iff} \quad $g(s)\vDash_{\bM} \varphi$.
\end{center}
Now,  the intuition can be characterized by the following equivalences:
\begin{center}
\begin{tabular}{lcl}
Atomic-Reduction  & \qquad  &  $\tom P(x_1, \ldots, x_k)  \leftrightarrow P(\tom x_1, \ldots, \tom x_k)$ \vspace{1mm} \\

Next-Time$_1$  & \qquad  &   $\tom \D_X\varphi\leftrightarrow \D_{\tom X}\tom\varphi $ \vspace{1mm} \\

Next-Time$_2$  &\qquad  &  $\tom D_Xy\leftrightarrow D_{\tom X}\tom y$ \\
\end{tabular}
\end{center}

\vspace{-1mm}

However, although Atomic-Reduction is a validity of $\LDTV$ and $\LDTV^{\mathsf{f},\equiv}$,  the reduction formulas Next-Time$_1$ and Next-Time$_2$ are not. More precisely,  only  the following weaker implications are valid:
\begin{center}
\begin{tabular}{lcl}
w-Next-Time$_1$  & \qquad  &   $\tom \D_X\varphi\to\D_{\tom X}\tom\varphi $ \vspace{1mm} \\

w-Next-Time$_2$  &\qquad  &  $\tom D_Xy\to D_{\tom X}\tom y$ \\
\end{tabular}
\end{center}
 \vspace{-1mm}
 
Here are counterexamples to the converses.

\begin{fact}\label{fact:next-time-not-valid}
Neither Next-Time$_1$ nor Next-Time$_2$ is valid in $\LDTV$. 
\end{fact}

\begin{proof}
Let us consider instances $D_{\tom x} \tom y\to \tom D_xy$ and $D_{\tom x} \tom Py\to \tom D_xPy$. We define a model
$\bM=(\mathbb{D}_v,I,S,g,{\bm v})_{v\in V}$ such that:
\begin{itemize}
    \item $\mathbb{D}_x=\mathbb{D}_y=\{0,1\}$,
    \item $I(P)=\{1\}$,
    \item   $S=\{s_1,s_2,s_3\}$,
    \item $g(s_1)=g(s_2)=g(s_3)=s_3$,
    \item ${\bm x}(s_1)=1$, ${\bm x}(s_2)=0$, ${\bm x}(s_3)=1$, \\
    ${\bm y}(s_1)=0$, ${\bm y}(s_2)=1$, and ${\bm y}(s_3)=1$.
\end{itemize}
By construction,  
$s_1\not\vDash D_{\tom x} \tom y\to \tom D_xy$ and $s_1\not\vDash D_{\tom x} \tom Py\to \tom D_xPy$.
\end{proof}

To see just when the intuition does apply, we need to see what temporal properties of dynamical systems are required by the full Next-Time$_1$ and Next-Time$_2$. This can be made clear  via the following modal-style correspondence analysis which involves frame truth under all valuations for the `dynamical frames' underlying our dynamical models:

\begin{fact}\label{fact:correspondence-dependence-atoms}
For each dynamical frame $\bM=(\mathbb{D}_v,I,S,g,{\bm v})_{v\in V}$, Next-Time$_1$ and Next-Time$_2$ are valid (true everywhere under all valuations) iff for any $s,u\in S$, $g(s)=_Xu$ implies that there is some $w\in S$ with $g(w)=u$ and $s=_{\tom X} w$.
\end{fact}

\begin{proof}
Let us consider for Next-Time$_2$. The reasoning for Next-Time$_1$  is similar. 

\vspace{1mm}

(1). First, assume that for any $s,u\in S$, $g(s)=_Xu$ implies that there is some $w\in S$ with $g(w)=u$ and $s=_{\tom X} w$. Let $\bM$ be an arbitrary model over the frame and $s$  a state such that $s\vDash D_{\tom X} \tom y$. Also, let $u$ be a state such that $g(s)=_Xu$. By assumption, there is some $w\in S$ with $g(w)=u$ and $s=_{\tom X} w$. Then, we get that $s=_{\tom y} w$, which gives us $g(s)=_yu$ immediately.

\vspace{1mm}

(2). Next, consider for the other direction. Assume that there are two states $s,u\in S$ such that  $g(s)=_Xu$ and that there is no $w\in S$ with $g(w)=u$ and $s=_{\tom X} w$. Now, we can define a model  in which $\tom y$ has the same value at all those states $t$ such that $s=_{\tom X} t$, but $y$ has different values at $g(s)$ and $u$. Clearly, this makes $D_{\tom X} \tom y\to \tom D_Xy$ false at $s$ in the model.
\end{proof}

\begin{remark}\label{remark:correspondence}
The analysis in the proof above essentially uses quantification over relations $=_X$, which is not in monadic second-order logic (as is the case with standard modal correspondence). Also, we interpret $=_{\tom X}$ concretely as $=_X$, i.e., $s=_{\tom X} t$ iff $g(s)=_{X} g(t)$. But in addition to this interpretation, one can also consider various further abstractions, including reading $=_{\tom X}$ as an arbitrary binary relation on states, or considering  dynamical systems in which dynamic transitions can be binary relations other than functions. It would be interesting to explore how standard modal correspondence theory generalizes to such settings. E.g., the above axioms still have intuitive Sahlqvist form, but we now need to do the correspondence analysis in the presence of relational atoms like $D_Xy$.\footnote{For a modal perspective on our logics, cf. Section \ref{sec:modal-approach}.}
\end{remark}

In what follows, we first study the above logic over dynamical models that satisfy the  restriction matching the intuition (Section \ref{sec:next-time-modality-axiomatization-timed-semantics}), and then generalize to a setting without such restrictions (Section \ref{sec:next-time-modality-non-timed-semantics}). For both kinds of temporal dependence logic, we find a complete  Hilbert-style proof system and show its decidability.

\section{Logic of temporal dependence with timed semantics}\label{sec:next-time-modality-axiomatization-timed-semantics}

Dynamical systems in models so far   were abstract state spaces with a transition function. In this section, we move a bit closer to a temporal view of the executions of unfoldings of a dynamical system, and present a `timed semantics', under which the truth of the intuition spelled out in Section \ref{sec:next-time-modality} is ensured. We will prove that the resulting logic is decidable and completely axiomatizable. The method for showing this, though reductive, is  different from the reductions used in the preceding sections, based on reduction axioms as used in dynamic-epistemic logics \citep{BMS,Johan-del,hans-del}, plus the axiomatization of the logic $\LFD^{f}$ with function terms over objects found in \cite{AJ-Dependence}. The system obtained in this way may be said to encode the laws governing `synchronous dynamic dependence'.

\subsection{Basics of timed semantics}\label{sec:basics-timed-semantics}
We first introduce some basic semantic notions for the new setting:

\begin{definition}\label{def:timed-system}
A dynamical system $\bS=(S,g)$ is \emph{timed} if there is a map $\tau:S\to \mathbb{N}\cup\{\infty\}$, associating to each state $s\in S$ a
finite or infinite `time' $\tau_s$, satisfying  the following two conditions:
\begin{itemize}
\item[{\bf (a).}] $g(s)=0$ for every initial state $s$ such that $s\not=g(w)$ for any $w\in S$, and
\item[{\bf (b).}] $\tau_{g(s)}=\tau_s+1$.
\end{itemize}
(Here, we use the convention that $\infty+1=\infty$.)  Any map $\tau$ satisfying these conditions is called a \emph{timing map}. The relation
$=^{\tau}$, defined on $S$ by putting:
\begin{center}
    $s=^{\tau}w$ iff $\tau_s=\tau_w$
\end{center}
is called the \emph{synchronicity relation}.
\end{definition}

It should be noted that \emph{not every dynamical system is timed}. To be precise, we define the following notion of `$g$-history' of states
in dynamical systems:

\vspace{1.5mm}

\noindent{\bf $g$-history}\; Given a dynamical system, a \emph{$g$-history} of a state $s$ is a finite or infinite backward-transition chain
$(s_0 = s, s_1, s_2,\ldots)$, with $s_n = g(s_{n+1})$ for all $n <m$, where $m$ is the total number of states in the chain (called the
\emph{length} of the history). A $g$-history is \emph{maximal} if it is infinite or it cannot be extended to the right to a $g$-history of
greater length.

\vspace{1.5mm}

\noindent Essentially, timed dynamical systems are exactly the ones in which all maximal $g$-histories of any state have the same (finite or infinite) length. Fact \ref{facf:timed-system} below collects some properties of these notions. Now, to get a better feel for the features of timed dynamical systems, we  provide some examples and counterexamples:

\begin{example}\label{example:timed-dynamical-systems}
Figure \ref{fig:timed-dynamical-system} presents some timed dynamical systems, while Figure \ref{fig:untimed-dynamical-system} shows some
dynamical systems that are not timed.

In $\bS_1$, we have $\tau_{w_0}=\tau_{k_0}=0$, $\tau_{w_1}=\tau_{k_1}=1$, and $\tau_{w_2}=2$. Different from $\bS_1$ in which all
maximal $g$-history of a state are finite, for all states $u$ in $\bS_2$, we have $\tau_{u}=\infty$. It is instructive to notice that the
cycle in the system can only be an `end' (i.e., states in the cycle do not have $g$-successors that are not in the cycle), as otherwise   dynamical transitions would not be deterministic. But the length of a cycle in a dynamical system may be
bigger than 1: say,  $\bS_3$ contains a cycle of length $4$. Additionally, any disjoint union of timed dynamical systems are
still timed, and so the dynamical system $\bS_4$ is: it can be treated as a disjoint union of the system $\bS_2$ and a linear structure that
is timed obviously. Moreover, the system $\bS_4$ shows that both the maximal $g$-histories and the `$g$-future' of a state may contain
infinitely many different states.

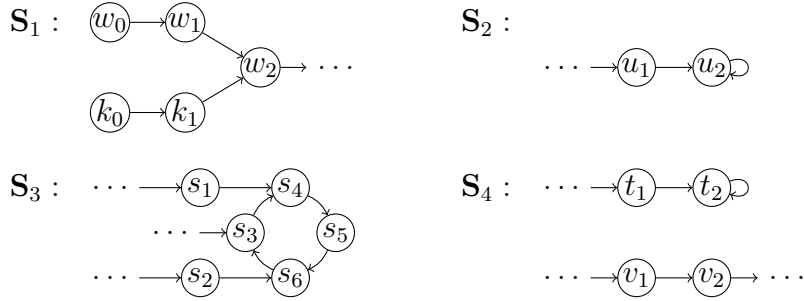
\begin{figure}
\centering
\begin{tikzpicture}
\node(w00)[] at (-1,0) {${\bf S}_1:$};
\node(w0)[circle,draw,inner sep=0pt,minimum size=5mm] at (0,0) {$w_0$};
\node(w1)[circle,draw,inner sep=0pt,minimum size=5mm] at (1,0) {$w_1$};
\node(w2)[circle,draw,inner sep=0pt,minimum size=5mm] at (2,-.6) {$w_2$};
\node(w3)[] at (3,-.6){$\ldots$};
\draw[->](w0) to (w1);
\draw[->](w1) to (w2);
\draw[->](w2) to (w3);

\node(w01)[circle,draw,inner sep=0pt,minimum size=5mm] at (0,-1.2) {$k_0$};
\node(w11)[circle,draw,inner sep=0pt,minimum size=5mm] at (1,-1.2) {$k_1$};
\draw[->](w01) to (w11);
\draw[->](w11) to (w2);

\node(u00)[] at (5,0) {${\bf S}_2:$};
\node(u0)[] at (6,-.6) {$\ldots$};
\node(u1)[circle,draw,inner sep=0pt,minimum size=5mm] at (7,-.6) {$u_1$};
\node(u2)[circle,draw,inner sep=0pt,minimum size=5mm] at (8,-.6) {$u_2$};
\draw[->](u0) to (u1);
\draw[->](u1) to (u2);
\draw[->](u2) to  [in=-20, out=20,looseness=5] (u2);

\node(s)[] at (-1,-2.2) {${\bf S}_3:$};
\node(s0)[] at (0,-2.2) {$\ldots$};
\node(s00)[] at (0,-3.4) {$\ldots$};
\node(s1)[circle,draw,inner sep=0pt,minimum size=5mm] at (1.2,-2.2) {$s_1$};
\node(s2)[circle,draw,inner sep=0pt,minimum size=5mm] at (1.2,-3.4) {$s_2$};
\node(s4)[circle,draw,inner sep=0pt,minimum size=5mm] at (2.4,-2.2) {$s_4$};
\node(s3)[circle,draw,inner sep=0pt,minimum size=5mm] at (1.8,-2.8) {$s_3$};
\node(s7)[] at (0.8,-2.8) {$\ldots$};
\node(s6)[circle,draw,inner sep=0pt,minimum size=5mm] at (2.4,-3.4) {$s_6$};
\node(s5)[circle,draw,inner sep=0pt,minimum size=5mm] at (3,-2.8) {$s_5$};
\draw[->](s0) to (s1);
\draw[->](s00) to (s2);
\draw[->](s1) to (s4);
\draw[->](s2) to (s6);
\draw[->](s7) to (s3);
\draw[->](s3) to [bend left=20] (s4);
\draw[->](s4) to [bend left=20] (s5);
\draw[->](s5) to [bend left=20] (s6);
\draw[->](s6) to [bend left=20] (s3);

\node(t00)[] at (5,-2.2) {${\bf S}_4:$};
\node(t0)[] at (6,-2.2) {$\ldots$};
\node(t1)[circle,draw,inner sep=0pt,minimum size=5mm] at (7,-2.2) {$t_1$};
\node(t2)[circle,draw,inner sep=0pt,minimum size=5mm] at (8,-2.2) {$t_2$};
\draw[->](t0) to (t1);
\draw[->](t1) to (t2);
\draw[->](t2) to [in=-20, out=20,looseness=5] (t2);

\node(v0)[] at (6,-3.4) {$\ldots$};
\node(v1)[circle,draw,inner sep=0pt,minimum size=5mm] at (7,-3.4) {$v_1$};
\node(v2)[circle,draw,inner sep=0pt,minimum size=5mm] at (8,-3.4) {$v_2$};
\node(v3) at (9,-3.4) {$\ldots$};
\draw[->](v0) to (v1);
\draw[->](v1) to (v2);
\draw[->](v2) to (v3);
\end{tikzpicture}
\caption{Timed dynamical systems, with arrows for dynamical transitions and ellipses for states.}
\label{fig:timed-dynamical-system}
\end{figure}

Although $\bS'_1$ and $\bS'_2$
look similar to $\bS_1$ and $\bS_2$ respectively, neither of them is timed: each of them
contains states having maximal $g$-histories of unequal lengths. For instance, state $i_3$ in $\bS'_1$ has two maximal $g$-histories of
lengths $2$ and $3$, and state $j_1$ in    $\bS'_2$ has $g$-maximal histories of, e.g., $1$ and $\infty$.

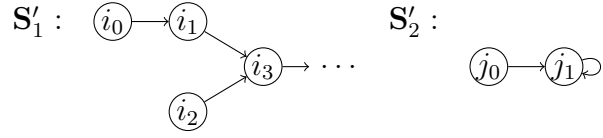
\begin{figure}
\centering
\begin{tikzpicture}
\node(t00)[] at (-1,0){${\bf S}'_1:$};
\node(t0)[circle,draw,inner sep=0pt,minimum size=5mm] at (0,0) {$i_0$};
\node(t1)[circle,draw,inner sep=0pt,minimum size=5mm] at (1,0) {$i_1$};
\node(t2)[circle,draw,inner sep=0pt,minimum size=5mm] at (2,-.6) {$i_3$};
\draw[->](t0) to (t1);
\draw[->](t1) to (t2);

\node(t11)[circle,draw,inner sep=0pt,minimum size=5mm] at (1,-1.2) {$i_2$};
\node(t31)[] at (3,-.6){$\ldots$};
\draw[->](t11) to (t2);
\draw[->](t2) to (t31);

\node(s00)[] at (4,0) {${\bf S}'_2:$};
\node(s0)[circle,draw,inner sep=0pt,minimum size=5mm] at (5,-.6) {$j_0$};
\node(s1)[circle,draw,inner sep=0pt,minimum size=5mm] at (6,-.6) {$j_1$};
\draw[->](s0) to (s1);
\draw[->](s1) to  [in=-20, out=20,looseness=5] (s1);
\end{tikzpicture}
\caption{Some dynamical systems that are not timed.}
\label{fig:untimed-dynamical-system}
\end{figure}
\end{example}

Let us explore more features of timed dynamical systems. Clearly, the synchronicity relation $=^{\tau}$ is an equivalence relation and it satisfies the following two \emph{synchronicity conditions}:
\begin{itemize}
\item[(1).]  $s=^{\tau}w$ iff $g(s)=^{\tau}g(w)$
\item[(2).] If $s=^{\tau}g(w)$, then $s=g(w')$ for some $w'\in S$.
 \end{itemize}

 Also, whenever a dynamical system is timed, the timing map $\tau$ is uniquely determined by the dynamical transition function:

 \begin{fact}\label{fact:timed-system-timing-map}
 If $\bS=(S,g)$ is a timed system, then the unique map $\tau$ satisfying the above timing conditions
(a) and (b) is given by:
\begin{center}
 $\tau_s:={\rm{supremum}}\{n\in\mathbb{N}: g^n(w)=s \;\textit{for some}\; w\in S\}.$
\end{center}
 \end{fact}
 
  \begin{proof}
 Let $\bS$ be a timed dynamical system. Then, it is not hard to see that $\tau$ described above satisfies the condition (a). Also, for any $s,w\in\bS$ and $n\in \mathbb{N}\cup\{\infty\}$, $g^n(w)=s$ iff $g^{n+1}(w)=g(s)$. So, by the definition of $\tau$, we have $\tau_s+1=\tau_{g(s)}$, i.e., the function satisfies the condition (b). So, $\tau$ is a timing map.

 Next, suppose  there is another timing map $\tau'$ different from $\tau$, i.e., $\tau_s\not=\tau'_s$ for some $s\in\bS$. From condition (a), we know that $s$ cannot be an initial state, as otherwise $\tau_s=\tau'_s=0$. Without loss of generality, assume $\tau_s>\tau'_s$. Then $\tau'_s$ must be in $\mathbb{N}$, which implies that $s=g^n(s_0)$ for some initial state $s_0$. Since $\tau$ satisfies (b) and $\bS$ is timed, it holds that $\tau_s=n$, a contradiction.
 \end{proof}

Although in any dynamical system the map $\tau_s$, defined as Fact \ref{fact:timed-system-timing-map}, always satisfies the first timing
condition (a), it may fail to satisfy the condition (b).

It is thus important to characterize timed systems directly in terms of the dynamical transition structure, as well as in terms of the synchronicity relation:
 
 \begin{fact}\label{facf:timed-system}
For every dynamical system $\bS=(S,g)$, the following are equivalent:
\begin{itemize}
\item[(1).] $\bS$ is timed.
\item[(2).] For any $n\in\mathbb{N}\cup\{\infty\}$, the predecessors of any $(n+1)$-successor are $n$-successors: if $g(s)=g^{n+1}(w)$, then $s=g^{n}(w')$ for some $w'\in S$.
\item[(3).] All maximal $g$-histories of the same state have the same length.
\item[(4).] There is an equivalence relation satisfying the two synchronicity conditions.
\end{itemize}
\end{fact}

\begin{proof}
    See \ref{appendix:proof-for-facf:timed-system}.
\end{proof}

Timed dynamical systems are of interest in their own right, and they include   model classes used  in computer science such as linear-time temporal models.

 \vspace{1.5mm}

\noindent{\bf Temporal dynamical systems}\;  A dynamical system $\bS= (S, g)$ is \emph{temporal} if every state has at most one predecessor: $g(s) = g(w)$ implies $s = w$. This is equivalent to requiring that every state has a unique maximal $g$-history (`unique past'). It is easy to see that temporal systems are timed, but the converse is false: none of the timed dynamical systems in Figure \ref{fig:timed-dynamical-system} is temporal. A typical kind of temporal dynamical systems are finite cycles, e.g., the system $\bS_5$ in Figure \ref{fig:temporal-dynamical-system}, and another kind of temporal dynamical systems are lines with either finite or infinite past, and infinite future. Again, temporal dynamical systems are closed under disjoint union (e.g., $\bS_6$ in Figure \ref{fig:temporal-dynamical-system}).

\begin{figure}
    \centering
\begin{tikzpicture}
\node(t00)[] at (-.5,0) [label=left:${\bf S}_6:$]{};
\node(t0)[circle,draw,inner sep=0pt,minimum size=5mm]  at (0,0) []{$s_0$};
\node(t1)[circle,draw,inner sep=0pt,minimum size=5mm] at (1,0) []{$s_1$};
\node(t2)[circle,draw,inner sep=0pt,minimum size=5mm] at (2,0) []{$s_2$};
\node(t3) at (3,0) {$\ldots$};
\draw[->](t0) to (t1);
\draw[->](t1) to (t2);
\draw[->](t2) to (t3);

\node(s0) at (0,-1) {$\ldots$};
\node(s1)[circle,draw,inner sep=0pt,minimum size=5mm] at (1,-1) []{$t_0$};
\node(s2)[circle,draw,inner sep=0pt,minimum size=5mm] at (2,-1) []{$t_1$};
\node(s3) at (3,-1) {$\ldots$};
\draw[->](s0) to (s1);
\draw[->](s1) to (s2);
\draw[->](s2) to (s3);

\node(u00)[] at (-5.5,0) [label=left:${\bf S}_5:$]{};
\node(u0)[circle,draw,inner sep=0pt,minimum size=5mm]  at (-5,0) []{$u_0$};
\node(u3)[circle,draw,inner sep=0pt,minimum size=5mm] at (-5,-1) []{$u_3$};
\node(u1)[circle,draw,inner sep=0pt,minimum size=5mm] at (-3.5,0) []{$u_1$};
\node(u2)[circle,draw,inner sep=0pt,minimum size=5mm] at (-3.5,-1) []{$u_2$};
\draw[->](u0) to (u1);
\draw[->](u1) to (u2);
\draw[->](u2) to (u3);
\draw[->](u3) to (u0);
\end{tikzpicture}
 \caption{Temporal dynamical systems.}
 \label{fig:temporal-dynamical-system}
\end{figure}
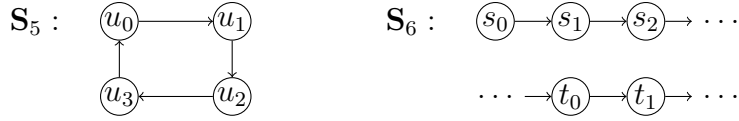

\medskip

\noindent{\bf Linear-time systems} \; A \emph{linear-time system} is a  temporal dynamical system $\bS= (S, g)$ that is `acyclic': i.e.,
$g^n(s)\neq s$ for all $s\in S$ and $n\geq 1$. Essentially, linear-time systems are disjoint unions of infinite-future `lines' (i.e., infinite
upward chains with or without a first point). We can restrict even further, looking at \emph{linear-time systems with finite past}: these are
defined as linear-time systems $\bS= (S, g)$ with the property that every state has a finite history: for every
$s\in S$ there exists some $s_0\in S-g(S)$ s.t. $s=g^n(s_0)$ for some natural number $n\geq 0$, where 
$g(S):=\{t\in S: \textit{~there is some~} s\in S \textit{~such that~} t=g(s) \}$.

\medskip

We thus obtained a descending series of ever-more-restricted forms of dynamical systems: from the general ones to the special case of timed systems, then to the even more special case of temporal systems, then to the subclass of linear-time systems, and finally to the most special case of linear-time systems with finite past. We end with a sketch of a classification of these dynamical systems:

 \begin{remark}\label{remark:comments-on-pictures}
Every dynamical system can be viewed as just a function on a set. Even so, the resulting patterns can be  complicated. However, the special classes considered here effect drastic reductions in complexity. Timed systems are disjoint unions of `lines' (i.e., infinite upward chains with or without a first point), finite cycles and figures obtained by appending to any of the points of a finite cycle one or more infinite linear past-chains. Temporal systems are even fewer: being just disjoint unions of such lines and finite cycles. Finally, linear-time systems are merely disjoint unions of lines (with infinite future, with or without a first point).  
\end{remark}

 \subsection{Axiomatization and decidability}\label{sec:dfd-timed-semantics}
 
A dynamical model is \emph{timed} (or \emph{temporal}, \emph{linear-time}, or has \emph{finite past}) if the underlying dynamical system is
timed (or respectively, temporal, linear-time, etc).

\vspace{1.5mm}

\noindent{\bf Timed semantics}\; Given a timed dynamical model with timing map $\tau$ and corresponding synchronicity relation $=^{\tau}$, the \emph{timed semantics} is the same as before, except that the interpretations for $\D_X\varphi$ and $D_Xy$ are defined using \emph{synchronized value agreement} $=_X^{\tau}$ instead of $=_X$ in the semantic clauses for $\D_X\varphi$ and $D_Xy$; while synchronized value agreement is given by putting:
\begin{center}
$s=_X^{\tau}w$ \; iff \; both ${\bm X}(s)={\bm X}(w)$ and $s=^{\tau} w$.
\end{center}
Therefore, relation $=^{\tau}_X$ states that the values of $X$ given by the \emph{synchronous} states are the same. The result of endowing the
language $\mathcal{L}^{\dag}$ with this synchronous semantics will be called \emph{the logic of temporal dependence with timed semantics}, and will be denoted by
$\LTD^{\mathsf{t},\mathsf{f},\equiv}$. 

Note that the \emph{meaning} of our operators is subtly different in the timed semantics from the old dynamic semantics  in earlier sections. For instance, $D_Xy$ now means  that the value of $y$ is uniquely determined by the current value(s) of $X$ \emph{plus the current time}: i.e., all states
synchronous with the current state that agree with it on the value(s) of $X$ also agree on the value of $y$. In what follows, we explore a complete proof system for this logic and show that its satisfiability problem is decidable. We start with the following results for the logic {\em $\LDTV^{\mathsf{t},\mathsf{f},\equiv}$ employing the language $\mathcal{L}^{\mathsf{f},\equiv}$ and the timed semantics}:

\begin{fact}\label{fact:satisfiability-timed-without-O-modality}
The satisfiability problems for $\LDTV^{\mathsf{t},\mathsf{f},\equiv}$ on timed models, temporal models, linear-time models, and linear-time models with finite
past  are all equivalent, and moreover they are decidable.
\end{fact}
\begin{proof}
See \ref{appendix:proof-for-fact:satisfiability-timed-without-O-modality}.
\end{proof}

\begin{fact}\label{fact:completeness-timed-without-O-modality}
$\bf{LDTV^{f,\equiv}}$ is a complete calculus for  $\LDTV^{\mathsf{t},\mathsf{f},\equiv}$ on timed models, temporal models, linear-time models, and linear-time models with finite
past.
\end{fact}

\begin{proof}
    See  \ref{appendix:proof-for-fact-completeness-timed-without-O-modality}.
\end{proof}

Now let us return to $\LTD^{\mathsf{t}, \mathsf{f},\equiv}$ and introduce a calculus for it:

\medskip

\par\noindent\textbf{Proof system}\; A proof system  $\mathbf{LTD^{t,f,\equiv}}$ can be obtained by  extending $\mathbf{LDTV^{f,\equiv}}$ with the axioms and rule in Table \ref{table:axiomatization-LTD-o-modality}. But notice that the rule Substitution$_{\tom}$ displayed in Table \ref{table:axiomatization-LTD} can be dropped, since it is derivable in the new setting.  Here, {\em $\bigcirc$-Distribution} shows that $\bigcirc$ is a  normal modality, and {\em Functionality}  ensures that dynamical transitions between states are functional.  
The axioms {\em Atomic-Reduction}, {\em Next-Time$_1$} and {\em Next-Time$_2$} illustrate the interactions between dynamical transitions and the  truth involving atoms, dependence quantifiers and dependence formulas $D_Xy$ respectively. It is simple to check the soundness of the system:

\begin{fact}\label{fact:soundness-timed-modality}
$\mathbf{LTD^{t,f,\equiv}}$ is sound w.r.t. timed dynamical models (and hence also w.r.t. any of the above subclasses: temporal models, linear-time models etc).
\end{fact}

\begin{table}
    \caption{New axioms and rule for  $\mathbf{LTD^{t,f,\equiv}}$}
    \label{table:axiomatization-LTD-o-modality}
    \centering
\begin{tabular}{lll}
\cline{1-3}
\textbf{Axioms and Rule for $\bigcirc$} &\qquad&\qquad \vspace{1mm} \\
\cline{1-3}
$\bigcirc$-Distribution &\qquad&\qquad $\bigcirc(\varphi\to\psi)\to(\bigcirc\varphi\to\bigcirc\psi)$\vspace{1mm} \\
Functionality&\qquad&\qquad $\bigcirc\neg\varphi\leftrightarrow\neg\bigcirc\varphi$\vspace{1mm} \\
$\tom$-Necessitation &\qquad&\qquad  From $\varphi$, infer $\tom\varphi$\vspace{1mm} \\
\cline{1-3}
\textbf{Interaction Axioms}&\qquad&\qquad \vspace{1mm}\\
\cline{1-3}
Atomic-Reduction &\qquad& \qquad $\tom P(x_1, \ldots, x_k) \leftrightarrow P(\tom x_1, \ldots, \tom x_k)$ \vspace{1mm} \\
Next-Time$_1$  &\qquad & \qquad $\tom \D_X\varphi\leftrightarrow \D_{\tom X}\tom\varphi $ \vspace{1mm} \\
Next-Time$_2$  &\qquad & \qquad
$\tom D_Xy\leftrightarrow D_{\tom X}\tom y$\vspace{1mm} \\
\cline{1-3}
\end{tabular}
\end{table}

Next, we show  the completeness of the calculus: 

\begin{theorem}\label{theorem:complete-timed-modality}
The system $\mathbf{LTD^{t,f,\equiv}}$ is complete w.r.t. the timed semantics on each of the above classes: timed dynamical models, temporal models, linear-time models, and linear-time models with finite past.
\end{theorem}

\begin{proof} 
 We prove this by showing that each $\mathcal{L}^{\dagger,f,\equiv}$-formula can be reduced to an equivalent $\mathcal{L}^{f,\equiv}$-formula under the timed semantics. More precisely, we define a map $\mathfrak{T}:\mathcal{L}^{\dagger,f,\equiv}\to\mathcal{L}^{f,\equiv}$  that keeps $\px$ and $D_Xy$ the same, permutes with Boolean connectives and modalities $\D_X$, and sets
 \begin{center}
  $\mathfrak{T}(\tom \varphi): = \mathfrak{T}(\varphi)[V/\tom V]$.   
 \end{center}
It now suffices to show that the following are provable in $\mathbf{LTD^{t,f,\equiv}}$:
\begin{itemize}
 \item[(a)] $\vdash_{\mathbf{LTD^{t,f,\equiv}}}\tom\varphi\leftrightarrow \varphi[V/\tom V]$
\item[(b)] $\vdash_{\mathbf{LTD^{t,f,\equiv}}}\varphi\leftrightarrow \mathfrak{T}(\varphi)$
\end{itemize}
which together with the completeness of $\mathbf{LDTV^{f,\equiv}}$ w.r.t. the timed dynamical models (Fact \ref{fact:completeness-timed-without-O-modality}) can give us the completeness of $\mathbf{LTD^{t,f,\equiv}}$.

\vspace{1.5mm}

(1). We first prove part (a), by induction on formulas $\varphi\in\mathcal{L}^{\dagger,f,\equiv}$. The cases for $\px$ and $D_Xy$ are given by the basic axioms
Atomic-Reduction and Next-Time$_2$ respectively. We now consider the other cases.

\vspace{1mm}

(1.1).  $\varphi$ is $\neg\psi$. By the inductive hypothesis,  
 $\vdash_{\mathbf{LTD^{t,f,\equiv}}}\tom\psi\leftrightarrow \psi[V/\tom V]$.   
 So, 
$\vdash_{\mathbf{LTD^{t,f,\equiv}}}\neg\tom\psi\leftrightarrow \neg\psi[V/\tom V]$.    
Using the axiom of Functionality, we have 
 $\vdash_{\mathbf{LTD^{t,f,\equiv}}}\tom\neg\psi\leftrightarrow \neg\psi[V/\tom V]$.   
Notice that  $\neg\psi[V/\tom V]$ is exactly $\varphi [V/\tom V]$.

\vspace{1mm}

(1.2). Formula $\varphi$ is $\psi_1\land \psi_2$. Clearly,
  $\vdash_{\mathbf{LTD^{t,f,\equiv}}}\tom(\psi_1\land \psi_2)\leftrightarrow
\tom\psi_1\land \tom \psi_2$.  
By the inductive hypothesis, for each $i\in\{1,2\}$,  
$\vdash_{\mathbf{LTD^{t,f,\equiv}}}\tom\psi_i\leftrightarrow \psi_i[V/\tom V]$.  
Also,  $\varphi[V/\tom V]$ is the same as  $(\psi_1\land \psi_2)[V/\tom V]$. Hence,
$\vdash_{\mathbf{LTD^{t,f,\equiv}}}\tom\varphi\leftrightarrow \varphi[V/\tom V]$.

(1.3). When $\varphi$ is $\D_X\psi$, using the axiom Next-Time$_1$ and the inductive hypothesis, one  proves the equivalence directly.

\vspace{1mm}

(1.4). $\varphi$ is $\tom\psi$. By the inductive hypothesis,  
  $\vdash_{\mathbf{LTD^{t,f,\equiv}}}\varphi\leftrightarrow \psi[V/\tom V]$,  
which can give us 
 $\vdash_{\mathbf{LTD^{t,f,\equiv}}}\tom\varphi\leftrightarrow \tom\psi[V/\tom V]$.   

\vspace{1.5mm}

(2). Part (b)  follows immediately from part (a): we just need to use the fact that
 $\vdash_{\mathbf{LTD^{t,f,\equiv}}}\psi\leftrightarrow\chi$ implies
$\vdash_{\mathbf{LTD^{t,f,\equiv}}}\varphi[\psi/\chi]$,    
where $\varphi[\psi/\chi]$ results from substituting all occurrences of $\psi$ in $\varphi$ with
the formula $\chi$, and  applying (a) repeatedly to make all subformulas $\tom\psi$ of $\varphi$ disappear.  
\end{proof}

The proof above also indicates that we can  reduce the satisfiability problem for $\LTD^{\mathsf{t},\mathsf{f},\equiv}$ to that for $\LDTV^{\mathsf{t},\mathsf{f},\equiv}$, and since the later is decidable (see Fact \ref{fact:satisfiability-timed-without-O-modality}), it follows straightforwardly that: 

\begin{theorem}\label{theorem:decidable-timed-modality}
The satisfiability problems of $\mathcal{L}^{\dagger,f,\equiv}$-formulas on timed models, temporal models, linear-time models, and linear-time models with finite
past (all considered with the timed semantics) are all equivalent, and  decidable.
\end{theorem}

\section{A general logic of temporal dependence}\label{sec:next-time-modality-non-timed-semantics}  

In this final part of our exploration of temporal logics for dynamic dependence, we return to the general setting without  timing restrictions, and study the combined logic with temporalized variables and  a temporal next-time modality over  the class of all dynamical dependence models. 

\subsection{Introducing the system $\LTD$}
Merely for simplicity of exposition, we will not add  function symbols and global equality in what follows, but just consider the language {\em $\mathcal{L}^{\dagger}$ that extends the language $\mathcal{L}$ of $\LDTV$ with formulas of the form $\tom \varphi$}. The resulting {\em Logic of Temporal Dependence}, called $\LTD$, has a semantics   given by the same recursive clauses already stated for the earlier logic $\LDTV$ (but applied to the richer language of $\LTD$), together with the standard semantic clause for a next-time modality. Let us first provide a  Hilbert-style calculus $\mathbf{LTD}$ for the logic.

\vspace{2mm}

\noindent{\bf Proof system $\mathbf{LTD}$}\; The proof system consists of the following: 
\begin{itemize}
    \item Axioms and rules in Table \ref{table:axiomatization-LTD}.
    \item Axioms and rule for $\tom$ and Atomic-Reduction in Table \ref{table:axiomatization-LTD-o-modality}.
    \item $\tom\D_X \varphi\to\D_{\tom X}\tom\varphi$ (w-Next-Time$_1$) and\\
$\tom D_Xy\to D_{\tom X}\tom y$ (w-Next-Time$_2$).
\end{itemize}

So, instead of Next-Time$_1$ and Next-Time$_2$ that are contained in $\mathbf{LTD^{t,f,\equiv}}$, $\mathbf{LTD}$ includes their weaker versions. Below are some derivable principles: 
 \begin{itemize}
 \item[$\bullet$] $\vdash_{\mathbf{LTD}} D_X\tom ^nY\land\bigcirc^n\D_Y\varphi\to \D_X\bigcirc^n\varphi$ \hfill (Dyn-Transfer)
\item[$\bullet$] $\vdash_{\mathbf{LTD}} D_X\tom ^nY\land\bigcirc^nD_Y\tom^mZ\to D_X\tom^{m+n}Z$ \hfill (Dyn-Trans)
\item[$\bullet$] $\vdash_{\mathbf{LTD}} \bigcirc\D_{V}\varphi\to \D_{V}\bigcirc\varphi$\hfill ($\bigcirc$-$\D_{V}$-Commutation)
\end{itemize}


As we shall see, the details of the proofs for the completeness of $\mathbf{LTD}$ and the decidability of $\LTD$ will very different from those for  $\LTD^{\mathsf{t},\mathsf{f},\equiv}$, since the behaviors of the two systems are quite different: for instance, unlike the axioms Next-Time$_1$ and Next-Time$_2$ for $\LTD^{\mathsf{t},\mathsf{f},\equiv}$, the new axioms w-Next-Time$_1$ and w-Next-Time$_2$ do not give us recursion axioms. To establish the desired results then, in what follows we first provide an abstract modal perspective on the logic, which is equivalent to the  semantics w.r.t. dynamical dependence models described above (Section \ref{sec:modal-approach}), in that they capture the same class of validities in  $\LTD$. Subsequently, we generalize the modal approach to a `looser' setting (Section \ref{sec:general-relational-models}), which will be useful in showing the completeness of $\mathbf{LTD}$  w.r.t. both the classes of modal models and  dynamical dependence models (Section \ref{sec:completeness-DFD}). Finally, with the help of those semantic proposals for $\LTD$, we will show  decidability  in Section \ref{sec:decidability-DFD-non-empty}.

\subsection{Changing to an equivalent modal semantics}\label{sec:modal-approach}
Now we switch to a slightly more abstract modal perspective on $\LTD$ and its models, which will be our main vehicle in what follows. 

\begin{definition}\label{def-model} A \emph{standard relational model} is a tuple $\bM=(W, g, \sim_v, \|\bullet\|)_{v\in V}$  such that
\begin{itemize}
\item[$\bullet$] $W$ is a non-empty set of abstract states or `possible worlds'.
\item[$\bullet$] For each basic variable $v\in V$, $\sim_v\subseteq W\times W$ is an equivalence relation. \\ We extend this notation to \emph{terms} $x$, by putting recursively
$$s\sim_{\tom x} w \, \, \mbox{ iff } \, \, g(s)\sim_x g(w),$$
and finally we extend it to non-empty sets of terms $X$ by taking intersections:
$$s\sim_X w  \, \, \mbox{ iff } \, \, s\sim_x w \mbox{ for all $x\in X$.}$$
So, $\sim_V$ is the intersection $\bigcap_{v\in V}\sim_v$.
\item[$\bullet$] $g: W\to W$ preserves $\sim_V$, i.e., $s\sim_{V} w$ implies $g(s)\sim_{V}g(w)$.
\item[$\bullet$] $\|\bullet\|$ is a valuation map from atoms $P(x_1,\ldots, x_n)$ to the power set $\mathcal{P}(W)$ of $W$ such that  whenever $s \sim_X w$ and $s \in \|P(x_1,\ldots, x_n)\|$ for some $x_1,\ldots,x_n\in X$, then $w \in \|P(x_1,\ldots, x_n)\|$.
\end{itemize}
\noindent A pair of a model and a world $(\bM, s)$ (for short, $\bM, s$) is called a \emph{pointed model}.
\end{definition}

In a standard relational model, states $s,t$ are called `$X$-equivalent' if
$s\sim_Xt$. Now let us turn to our semantics on standard relational models.

\begin{definition}\label{def-semantics-standard-relational-model}
Given a pointed model $(\bM,s)$ and a formula $\varphi\in\mathcal{L}^{\dag}$, the following recursion defines when $\varphi$ \emph{is true in} $\bM$
\emph{at} $s$, written $s\vDash_{\bM}\varphi$ (where we suppress the truth clauses for operators that read exactly as the standard interpretation):
\begin{center}
\begin{tabular}{rcl}
$s\vDash P(x_1,\ldots, x_n)$   &\quad iff  &\quad  $s\in \|P(x_1,\ldots, x_n)\|$\\
 $s\vDash D_Xy$  &\quad  iff &\quad  for each $t\in W$, $s\sim_Xt$ implies $s\sim_y t$ \\
$s\vDash \D_X\varphi$     &\quad  iff &\quad  for each $t\in W$, $s\sim_Xt$ implies $t\vDash\varphi$
\end{tabular}
\end{center}
\end{definition}

We now relate the modal semantics and the semantics w.r.t. dynamical dependence models by two model transformations.

\begin{definition}\label{def:dependenc-to-relational}
For each $\bM=(\mathbb{D}_v,I,S,g,\bv)_{v\in V}$, the \emph{induced standard relational model} is $\bM'=(W,G,\sim_v,\|\bullet\|)_{v\in V}$ with $W:=S$, $G:=g$, and
\begin{itemize}
\item[$\bullet$] for all $s,t\in W$ and $v\in V$, we put: $s\sim_vt$ iff $s=_vt$;
\item[$\bullet$] for each $\px$, we put $\| \px \|:=\{s\in S: s\vDash_{\bM} \px \}$.
\end{itemize}
\end{definition}

Here the function $G$ and the valuation $\|\bullet\|$ satisfy the two special conditions imposed in Definition \ref{def-model}. Now,
a simple induction on formulas $\varphi$ suffices to show that the semantics w.r.t. dynamical models agrees with the modal semantics:

\begin{fact}\label{proposition:dependence-to-relational}
For each dynamical model $\bM$ and $\LTD$-formula $\varphi$,
\begin{equation*}
s\vDash_{\bM}\varphi\Leftrightarrow s\vDash_{\bM'}\varphi \,\,\,\,\,\,\,\, \, \mbox{ (for all states $s\in \bS$)}.
\end{equation*}
\end{fact}

Here is the, less obvious, transformation in the opposite direction.

\begin{definition}\label{def:relational-to-dependence} For each standard relational model $\bM=(W,g,\sim_v,\|\bullet\|)_{v\in V}$ the \emph{induced
dynamical model} $\bM^\sim=(\mathbb{D}_v,I,S,G,\bv)_{v\in V}$, where:
\begin{itemize}
\item[$\bullet$] For each $v\in V$, we put $\mathbb{D}_v:=\{\sim_v(s): s\in W\}$, where $\sim_v(s)=\{t\in W: s\sim_v t\}$ is the equivalence class of state $s$ modulo $\sim_v$;
\item[$\bullet$] For each $n$-ary predicate symbol $P$, $I(P):=\{(\sim_{x_1}(s),\ldots,\sim_{x_n}(s)): s\in \|P(x_1,\ldots, x_n)\| \}$.
\item[$\bullet$] $S:=\{\sim_V(s): s\in W\}$, where $\sim_V(s)=\{t\in W: s\sim_V t\}$ is the equivalence class of state $s$ modulo the $V$-equivalence relation $\sim_V=\bigcap_{v\in V}\sim_v$ defined in Section \ref{sec:modal-approach};
\item[$\bullet$] $G(\sim_V(s)):= \sim_V(g(s))$;
\item[$\bullet$] For each $v\in V$, ${\bm v}(\sim_V(s)):=\sim_v(s)$.
\end{itemize}
\end{definition}

It is easy to check that these are well-defined (independent on the choice of representatives for a given equivalence class), and that the
resulting model does satisfy the restriction imposed on dynamical dependence models. Moreover, the modal semantics agrees with the dynamical
model semantics:

\begin{fact}\label{proposition:relational-to-dependence}
For each relational $\bM=(W,g,\sim_v,\|\bullet\|)_{v\in V}$ and formula $\varphi$, we have:
\begin{equation*}
s\vDash_{\bM}\varphi\, \, \Leftrightarrow \,\, \sim_V(s)\vDash_{\bM^\sim}\varphi \,\,\,\,\,\,\,\, \, \mbox{ (for all states $s\in W$)}.
\end{equation*}
\end{fact}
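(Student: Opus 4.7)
The plan is to proceed by induction on the structure of $\varphi$, after two preparatory checks. First, I would verify that $\bM^\sim$ really is a dynamical dependence model in the sense required. The map $G$ on equivalence classes is well-defined because $s\sim_V t$ implies $g(s)\sim_V g(t)$ (the stipulated preservation property of $g$ in Definition \ref{def-model}), so $G(\sim_V(s))=\sim_V(g(s))$ does not depend on the representative; similarly, since $\sim_V\subseteq \sim_v$, the value $\bv(\sim_V(s))=\sim_v(s)$ is well-defined. The state-separation condition — if $\bv(\sim_V(s))=\bv(\sim_V(t))$ for all $v\in V$ then $\sim_V(s)=\sim_V(t)$ — is immediate, since $\sim_v(s)=\sim_v(t)$ for every $v\in V$ forces $s\sim_V t$.

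Second, I would establish a term-level lemma by induction on the term $x$:
$$\mathbf{x}(\sim_V(s))=\mathbf{x}(\sim_V(t)) \text{ in } \bM^\sim \,\,\Longleftrightarrow\,\, s\sim_x t \text{ in } \bM.$$
The base case $x=v\in V$ is the definition of $\bv$. For $x=\tom y$, unfolding gives $\mathbf{\tom y}(\sim_V(s))=\mathbf{y}(G(\sim_V(s)))=\mathbf{y}(\sim_V(g(s)))$, so the desired equivalence reduces by the inductive hypothesis to $g(s)\sim_y g(t)$, which is exactly how $s\sim_{\tom y}t$ is defined in Definition \ref{def-model}. Taking conjunctions over $x\in X$ immediately lifts the biconditional to finite sets of terms, yielding $\sim_V(s)=_X \sim_V(t)$ in $\bM^\sim$ iff $s\sim_X t$ in $\bM$.

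The main induction on formulas then matches the clauses of Definition \ref{def-semantics-dynamic-model} against those of Definition \ref{def-semantics-standard-relational-model}. For $\D_X\varphi$ the term lemma, together with the fact that states of $\bM^\sim$ are precisely the $\sim_V$-classes of $W$, converts the quantification $\sim_V(s)=_X \sim_V(t)$ into $s\sim_X t$; the cases for $D_X y$, the Boolean connectives, and $\tom\varphi$ (using $G(\sim_V(s))=\sim_V(g(s))$) are all parallel. The case that calls for a little care is the atomic $P(x_1,\ldots,x_n)$: one direction is immediate from how $I(P)$ was defined, but for the converse, if $(\mathbf{x_1}(\sim_V(s)),\ldots,\mathbf{x_n}(\sim_V(s)))\in I(P)$, then a witness $t\in\|P(x_1,\ldots,x_n)\|$ supplies $s\sim_{x_i} t$ for each $i$ by the term lemma, and the closure condition on $\|\bullet\|$ in Definition \ref{def-model} (applied with $X=\{x_1,\ldots,x_n\}$ and using symmetry of the $\sim_{x_i}$) transfers membership back to $s$.

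The main obstacle is conceptual rather than calculational: one has to keep straight what $\mathbf{x}(\sim_V(s))$ actually \emph{is} as an element of $\mathbb{D}_x$ when $x$ is a non-basic term, because the recursion $\mathbf{\tom y}(\sim_V(s))=\mathbf{y}(\sim_V(g(s)))$ produces an equivalence class anchored at $g(s)$ rather than at $s$, so one should not expect $\mathbf{\tom y}(\sim_V(s))$ to coincide with $\sim_{\tom y}(s)$. The term lemma finesses this by tracking only equalities of such values rather than the values themselves, which is precisely what the truth clauses for the dependence atoms, dependence modalities, and next-time operator demand.
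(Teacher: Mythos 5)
Your proposal is correct and follows essentially the same route as the paper: an induction on formulas whose only non-routine ingredient is exactly the term-level lemma the paper isolates, namely that $s\sim_X t$ in $\bM$ iff $\sim_V(s)=_X\sim_V(t)$ in $\bM^\sim$, proved by induction on terms and lifted to finite sets. The extra material you supply (well-definedness of $G$ and $\bv$, the state-separation condition, and the care taken with the atomic case and with what $\mathbf{x}(\sim_V(s))$ denotes for compound terms) corresponds to the steps the paper dismisses as "easy to check" and "routine", so nothing in your argument diverges from theirs.
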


\begin{proof}
The proof is by induction on the formula $\varphi$. The cases for atoms and Boolean connectives are routine. The equivalence for atoms $D_Xy$
holds by the semantics and the following fact:
\begin{center}
For all $s,t\in W$ and terms $X$,  $s\sim_X t$ in $\bM$ \, iff \,$\sim_V(s)=_X \sim_V(t)$ in $\bM^\sim$.
\end{center}

\noindent  The inductive cases for $\D_X\varphi$ and $\bigcirc\varphi$ are straightforward.
\end{proof}

Facts \ref{proposition:dependence-to-relational} and \ref{proposition:relational-to-dependence} immediately imply a validity-reduction in both
ways:

\begin{fact}\label{proposition:equivalence-of-dependence-and-standard}
The same $\LTD$-formulas are valid on their dynamical models and on standard relational models. 
\end{fact}

Both perspectives on $\LTD$ are interesting, but we will mainly work with the modal view, which allows us to use notions such as \emph{generated submodels}, \emph{bisimulations} and \emph{$p$-morphisms}, and techniques such as \emph{unraveling} into tree form \citep{blackburn2004modal}.

\subsection{General relational models}\label{sec:general-relational-models}

Our eventual aim is to show that the system $\mathbf{LTD}$ is complete with respect to standard relational models (and hence also w.r.t. our
original dynamical models). To achieve this, we take several steps of separate interest:
\begin{description}
\item[Step 1.] We introduce a new notion of `general relational models' and interpret the formulas of our language in this broader setting.
\item[Step 2.] We prove  completeness of the system $\mathbf{LTD}$ w.r.t. the new models.
\item[Step 3.] We show a representation result for general relational models as $p$-morphic images of standard relational models, which implies that $\mathbf{LTD}$ is also complete w.r.t. standard relational models.
\end{description}

This subsection is concerned with Step 1.

\begin{definition}\label{def-generalrelationalmodel} A \emph{general relational model} is a tuple $\bM=(W,g,=_X,\|\bullet\|)_{X}$ with the following
four components:
\begin{itemize}
\item[$\bullet$] $W$ is a non-empty set of abstract states or `possible worlds'.
\item[$\bullet$] $g: W\to W$ is a function.
\item[$\bullet$] For each non-empty finite set $X$ of terms, $=_X\subseteq W\times W$ is a binary relation.
\item[$\bullet$] $\|\bullet\|$ is an `extended' valuation, mapping `atoms'  $P(x_1, \ldots, x_n)$ or $D_Xy$ to
    $\mathcal{P}(W)$.
    For simplicity, we will typically write $s\vDash P(x_1, \ldots, x_n)$ instead of $s\in \|P(x_1, \ldots, x_n)\|$, and similarly write
    $s\vDash D_Xy$
    instead of $s\in \|D_Xy\|$.
\end{itemize}

\noindent These ingredients are required to satisfy the following conditions:
\begin{itemize}
\item[\emph{C1}.] All $=_X$ are equivalence relations on $W$.
\item[\emph{C2}.] All $\|D_Xy\|$ satisfy the following three properties:
\begin{itemize}
\item[$\bullet$] Dep-Reflexivity: For $x\in X$, we have $s\vDash D_Xx$.
 \item[$\bullet$] Dep-Transitivity: If $s\vDash D_XY$ and $s\vDash D_YZ$, then $s\vDash D_XZ$.
 \item[$\bullet$] Determinism: For all terms $x$, $s\vDash D_{V} \tom x$.
\end{itemize}
\item[\emph{C3}.] If $s=_X w$ and $s\vDash D_XY$, then $w\vDash D_XY$ and $s =_Y w$.
\item[\emph{C4}.] If $s=_X w$ and $s\vDash P(x_1, \ldots, x_n)$ with $x_1, \ldots, x_n\in X$, then $w \vDash P(x_1, \ldots, x_n)$.
\item[\emph{C5}.] $s\vDash P(\tom x_1,\ldots,\tom x_n)$ iff $g(s)\vDash P(x_1,\ldots,x_n)$.
\item[\emph{C6}.] If $s=_{\tom X} w$, then $g(s)=_X g(w)$.
\item[\emph{C7}.] If $g(s)\vDash D_XY $, then $s\vDash D_{\tom X} \tom Y$.
\end{itemize}
\end{definition}

It is useful to compare this new notion with the standard relational models of Definition \ref{def-model}. Each $=_X$ for a non-empty finite $X\subseteq \term$
is now a primitive relation, not necessarily the intersection of the individual $=_{x\in X}$. Also, formulas $D_Xy$ are treated as atoms now,
with truth values given directly by valuation functions. 

The  truth definition for $\LTD$-formulas in general relational models reads exactly as that in Definition
\ref{def-semantics-standard-relational-model} for standard relational models, though with the new understanding of relations and atoms as just
explained.

\begin{fact}\label{fact:sound-DFD-general-relational-model}
The proof system $\mathbf{LTD}$ is sound for  general relational models.
\end{fact}

\begin{proof}
The key reasons for the validity of the non-trivial principles are as follows.
As $g$ is a function, the axiom Functionality is valid. 
Condition C1 ensures the validity of $\D$-T,$\D$-4 and $\D$-5. Condition C2 gives us Dep-Ref, Dep-Trans and Determinism. Condition C3 ensures both $\D$-Introduction$_2$ and Transfer.
Conditions C4 and C5 ensure $\D$-Introduction$_1$ and Atomic-Reduction respectively. Finally, C6 and C7 ensure the validity of
w-Next-Time$_1$ and w-Next-Time$_2$ respectively.
\end{proof}

\subsection{Completeness of the system $\mathbf{LTD}$}\label{sec:completeness-DFD} 

We now come to Step 2 and Step 3 mentioned in Section \ref{sec:general-relational-models}. In what follows, we first show that $\mathbf{LTD}$ is complete w.r.t.
general relational models, which is crucial to establish the completeness $\mathbf{LTD}$ w.r.t. standard relational models. 

\begin{definition}\label{def-canonicalmodel} The \emph{canonical model for $\mathbf{LTD}$} is $\bM^c = (W^c,g^c,=^c_X,\|\bullet\|^c)$, where
\begin{itemize}
\item[$\bullet$] $W^c$ is the class of all maximal $\mathbf{LTD}$-consistent sets.
\item[$\bullet$] For all $s\in W^c$, $g^c(s)=\{\varphi: \bigcirc\varphi\in s\}$.
\item[$\bullet$] For all $s, t\in W^c$ and terms $X$, $s=^c_Xt$ iff $\D_Xs\subseteq t$.
\item[$\bullet$] $s\in\|D_Xy\|^c$ iff $D_Xy\in s$, and $s\in\|\px\|^c$ iff $\px\in s$.
\end{itemize}
For all states $s\in W^c$, $\D_Xs$ denotes the set of formulas $\{\varphi: \D_X\varphi\in s\}$.
\end{definition}

The $g^c$ above defines a function
on $W^c$, which follows from the following: 

\begin{fact}\label{fact:canonical-function} 
In the canonical model  $\bM^c=(W^c,g^c,=^c_X,\|\bullet\|^c)$, $g^c(s)\in W^c$.
\end{fact}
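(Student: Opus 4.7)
The plan is to verify the two defining conditions of a maximal \textbf{DFD}-consistent set for $g^c(s) = \{\varphi : \bigcirc\varphi \in s\}$: (i) consistency, and (ii) maximality (negation-completeness). Both should follow cleanly from the axioms governing $\bigcirc$, namely $\bigcirc$-Distribution, Functionality, and the derived rule $\bigcirc$-Necessitation.

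For \textbf{maximality}, I would argue directly. Fix any formula $\varphi$. By Functionality we have $\vdash \bigcirc\neg\varphi \leftrightarrow \neg\bigcirc\varphi$, so exactly one of $\bigcirc\varphi$ and $\bigcirc\neg\varphi$ lies in the maximal consistent set $s$. Translating through the definition of $g^c(s)$, exactly one of $\varphi$ and $\neg\varphi$ lies in $g^c(s)$. This already gives negation-completeness.

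For \textbf{consistency}, suppose toward contradiction that $g^c(s)$ is inconsistent. Then there exist $\varphi_1,\ldots,\varphi_n \in g^c(s)$ with $\vdash \neg(\varphi_1\wedge\cdots\wedge\varphi_n)$. Applying $\bigcirc$-Necessitation yields $\vdash \bigcirc\neg(\varphi_1\wedge\cdots\wedge\varphi_n)$, and Functionality turns this into $\vdash \neg\bigcirc(\varphi_1\wedge\cdots\wedge\varphi_n)$. On the other hand, $\bigcirc\varphi_1,\ldots,\bigcirc\varphi_n \in s$ by definition of $g^c(s)$, and a standard induction using $\bigcirc$-Distribution and propositional logic shows $\vdash \bigcirc\varphi_1\wedge\cdots\wedge\bigcirc\varphi_n \to \bigcirc(\varphi_1\wedge\cdots\wedge\varphi_n)$, so $\bigcirc(\varphi_1\wedge\cdots\wedge\varphi_n) \in s$. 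This contradicts the consistency of $s$.

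The only point requiring a little care is the derivation of $\bigcirc$-Necessitation as an admissible rule of \textbf{DFD}: this is listed among the derived principles just above, and follows from $\vdash\varphi \Rightarrow \vdash \D_{\emptyset}\varphi$ via $\D$-Necessitation, combined with the interaction axiom $\D$-$\bigcirc$ ($\D_{\emptyset}\varphi\to\bigcirc\varphi$) and modus ponens. No other subtlety arises; in particular the argument is purely syntactic and does not depend on how $=^c_X$ or $\|\bullet\|^c$ interact with $g^c$, since those only govern the remaining clauses of the canonical model construction.
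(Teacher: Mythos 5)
Your proof is correct and is essentially the paper's argument spelled out: the paper simply notes that Functionality is a Sahlqvist axiom and appeals to ``a standard argument about maximally consistent sets,'' and your consistency and negation-completeness checks (using $\bigcirc$-Distribution, Functionality, and the derived $\bigcirc$-Necessitation obtained from $\D$-Necessitation plus $\D$-$\bigcirc$) are exactly that standard argument made explicit. No gap.
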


\begin{proof}
Since the Functionality axiom of $\mathbf{LTD}$ has syntactic Sahlqvist form, the canonical model will satisfy its corresponding semantic frame condition of functionality by a standard argument about maximally consistent sets \citep{blackburn2004modal}.
\end{proof}

So, the similarity type of the model introduced here fits.  It remains to check the conditions on general relational models listed in Definition \ref{def-generalrelationalmodel}, and we will prove that the calculus $\mathbf{LTD}$ is complete for general relational models.

\begin{fact}\label{fact:canonical-general-model}
The canonical model for $\mathbf{LTD}$ is a general relational model of $\LTD$.
\end{fact}

\begin{proof}
    See \ref{appendix:proof-for-fact:canonical-general-model}.
\end{proof}

Next, a standard argument proves the following Existence Lemma:

\begin{lemma}\label{lemma-existence}
Let $\bM^c$ be the canonical model and $s\in W^c$. Then we have:
\begin{center}
If $\widehat{\D}_X\varphi\in s$, then there exists $t\in W^c$ such that $s=^c_Xt$ and $\varphi\in t$.
\end{center}
\end{lemma}

Now we are able to prove the following key \emph{Truth Lemma}:

\begin{lemma}\label{lemma-truth}
Let $\bM^c$ be the canonical model, $s\in W^c$ and $\varphi\in\mathcal{L}$. Then
\begin{equation*}
s\vDash_{\bM^c}\varphi\Leftrightarrow\varphi\in s.
\end{equation*}
\end{lemma}

\begin{proof}
The proof is by induction on $\varphi$. We only show two cases.

\vspace{1mm}

(1). Formula $\varphi$ is $\bigcirc\psi$. Then, $s\vDash_{\bM^c}\varphi$ iff $g^c(s)\vDash_{\bM^c}\psi$. Then, by the inductive hypothesis,
$g^c(s)\vDash_{\bM^c}\psi$ iff $\psi\in g^c(s)$. From the definition of $g^c$, we know that $\psi\in g^c(s)$ iff $\varphi\in s$.

\vspace{1mm}

(2). Formula $\varphi$ is $\widehat{\D}_X\psi$. From left to right, assume that $s\vDash_{\bM^c}\widehat{\D}_X\psi$. Then, there exists $t\in
W^c$ with $s=^c_Xt$ and $t\vDash_{\bM^c}\psi$. By the inductive hypothesis, $\psi\in t$. From the definition of $=^c_X$, we have
$\widehat{\D}_X\psi\in s$. Conversely, suppose that $\widehat{\D}_X\psi\in s$. Then, by Lemma \ref{lemma-existence}, there is a $t\in W^c$
with $s=^c_Xt$ and $\psi\in t$. By the inductive hypothesis, $t\vDash_{\bM^c}\psi$. So, $s\vDash_{\bM^c}\widehat{\D}_X\psi$.
\end{proof}

Immediately, we have the following:

\begin{theorem}\label{theorem-completeness-general}
The proof system $\mathbf{LTD}$ is complete for  general relational models.
\end{theorem}

Now Step 2 has been completed. The remainder of the part is concerned with Step 3, for which we introduce the following  Representation theorem:

\begin{theorem}\label{theorem:Repr}
Every general relational model of $\LTD$ is a $p$-morphic image of some standard relational model of $\LTD$. 
\end{theorem}
\begin{proof}
    See \ref{appendix:proof-for-theorem:Repr}.
\end{proof}



As truth of modal formulas is preserved under surjective $p$-morphisms \citep{blackburn2004modal}, and for  $\LTD$, standard relational models are general relational models, it follows that the same formulas of the logic are valid on its general relational models and on its standard relational models. Combining this with the earlier representation results of Section \ref{sec:modal-approach}, we have shown the completeness:

\begin{theorem}\label{theorem-completeness-standard}
The proof system $\mathbf{LTD}$ is sound and complete w.r.t. both standard relational models and dynamical models. 
\end{theorem}

\subsection{Decidability of logic $\LTD$}\label{sec:decidability-DFD-non-empty}

In this part, we consider the decidability of  $\LTD$, using filtration techniques that work for  $\mathsf{LFD}$  \cite{AJ-Dependence} plus ideas from \cite{handbook-dtl} on the topic of dynamic topological logics. More precisely, we are going to prove that the logic has the finite model property w.r.t. its general relational models.

Before introducing  details of the proof, let us first describe briefly what will be going on.  We first generalize the notion of  temporal depth for terms given in Definition \ref{def:temporal-depth-terms-LTD} into formulas, which is used to measure the nesting depth of the operator $\tom$. Then, the notion will be used as a parameter to construct a finite \emph{closure} of a given finite set of formulas. Given a formula $\varphi$ that is satisfied by some general relational model $\bM$ (that might be infinite), using both the notions of closure and temporal depth we define the \emph{$i$-type} of a state in the model, which consists of all formulas with temporal depth no bigger than $i$ that belong to the closure of $\varphi$ and are true at the state. Finally, all those $i$-types of $\bM$ will be used as states to construct a new general relational model $\bM^\dag$ satisfying $\varphi$. As the closure of $\varphi$ is finite, $\bM^\dag$ is finite as well. The method is inspired by \cite{handbook-dtl} that proved the decidability for dynamical topological logics, but is much simpler.\footnote{The techniques  in this section may help  simplify the decidability proof in the cited paper.}  

\begin{definition}\label{def:temporal-depth}
We generalize the notion  of temporal depth on terms and finite sets of terms, given in Definition \ref{def:temporal-depth-terms-LTD}, to formulas:
\begin{center}
$\td(P(x_1,\ldots, x_n))={\rm{max}}\{\td(x_1),\ldots,\td(x_n)\}\qquad $\\\vspace{1mm}
$\td(\neg \varphi)=\td( \varphi)\qquad \td( \varphi\land \psi)={\rm{max}}\{\td( \varphi),\td( \psi)\}\qquad \td(\bigcirc
\varphi)=\td(\varphi)+1$\\\vspace{1mm}
$\td(\D_X\varphi)={\rm{max}}\{\td( \varphi),\td(X)\}\qquad \td(D_Xy)={\rm{max}}\{\td(x): x\in X\cup\{y\}\}$
\end{center}
Also, for a set $\Phi$ of formulas, $\td(\Phi)={\rm{max}}\{\td( \varphi): \varphi\in\Phi\}$.
\end{definition}

One can check that $\td(D_{\tom X}\tom y)=\td(\tom D_Xy)$, $\td(\tom \D_X\varphi)=\td(\D_{\tom X}\tom \varphi)$ and $\td(P(\tom x_1,\ldots,\tom x_n))=\td (\tom P(x_1,\ldots,x_n))$.

\vspace{2mm}

\noindent{\bf Generalized subformulas}\; For any $\varphi\in\mathcal{L}^{\dagger}$, we denote by $\varphi^{-\tom}$ \emph{the formula
resulting from removing all occurrences of $\tom$ from $\varphi$}. We say $\varphi$ is a `\emph{generalized subformula}' of $\psi$ if $\varphi^{-\tom}$ is a subformula of $\psi^{-\tom}$ (in the usual sense).

 \vspace{2mm}

Moreover, given a finite set $\Phi$ of formulas with $\td(\Phi)=k$, we employ $\mathbb{V}_{\Phi}$ for the set of variables occurring in
$\Phi$, and $\mathbb{T}_{\Phi}$ for the set of terms $\tom^n v$ such that $v\in\mathbb{V}_{\Phi}$ and $n\le k$. Notice that
$\mathbb{V}_{\Phi}\subseteq\mathbb{T}_{\Phi}$ and both of them are finite. We now proceed to introduce the following:

\begin{definition}\label{def:closed-set}
Let $\Phi\subseteq\mathcal{L}^{\dagger}$ be  finite and $\td(\Phi)=k$. We say $\Phi$ is \emph{closed}, if
\begin{itemize}
\item[\emph{P1}.] For all non-empty $X, Y\subseteq\mathbb{T}_{\Phi}$ and $y\in \mathbb{T}_{\Phi}$, $\D_YD_Xy\in \Phi$.
\item[\emph{P2}.] If $\varphi\in\Phi$ is not of the form $\neg\psi$, then $\neg\varphi\in\Phi$.
\item[\emph{P3}.] If $\varphi$ is a generalized subformula of $\psi\in\Phi$ and $\td(\varphi)\le k$, then $\varphi\in \Phi$.
\item[\emph{P4}.] If $\varphi\in\Phi$ is $\tom\psi$ or $\px$, then for any non-empty $Y\subseteq\mathbb{T}_{\Phi}$, $\D_Y\varphi\in\Phi$.
\end{itemize}
For a set $\Psi$ of formulas, its \emph{closure} is the smallest closed set that contains $\Psi$.
\end{definition}

The closure of a finite set $\Psi$ is also finite, and to see this, it is useful to observe that: (1). every clause of P1-P4 can only give us finitely many formulas, and (2). the clauses do not have infinite interactions that produce infinitely many formulas.

Let $\Phi$ be a set of formulas s.t. $\td(\Phi)=k$. For each $i\le k+1$, we define its $i$-layer $\Phi_i:=\{\varphi\in\Phi:
\td(\varphi)<i\}$, consisting of formulas $\varphi\in \Phi$ with $\td(\varphi)< i$.

\begin{definition}\label{def:types}
Let $\bM=(W,g,=_X,\|\bullet\|)$ be a general relational model and $\Phi$ a closed set with $\td(\Phi)=k$. For each $i\le k+1$, the
\emph{$\Phi_{i}$-type} of a state $s\in W$ is defined as 
  $i$-type$(s):=\{\varphi\in \Phi_i: s\vDash \varphi\}$.
\end{definition}

Thus, a $i$-type of a state in a model is a maximal consistent subset of $\Phi_i$. Also, the empty set of formulas $\emptyset$ is the $0$-type
of any state.

In the remainder of this section, we shall work with a fixed general relational model $\bM$ and a finite closed set $\Phi$. We will construct
a finite general relational model $\bM^{\dag}=(W^{\dag},G,\approx_X,\|\bullet\|^{\dag})$ of $\LTD$ satisfying all
$\Phi_i$-types in $\bM$. First of all, definitions of $W^{\dag}$ and $G$ are simple:
\begin{itemize}
\item[$\bullet$] $W^{\dag}=\{\alpha: \alpha \;\textit{is some}\; i\text{-type}(s) \;\textit{with}\; s\in W\;\textit{and}\; i\le k+1\}$.
\item[$\bullet$] $G(\alpha)=\{\psi: \bigcirc\psi\in \alpha\}$.
\end{itemize}

One can check that $W^{\dag}$ is finite. For the transition function $G$, we have:

\begin{fact}\label{fact:function-well-defined}
For any $i$-type $\alpha$ of a state $s$, $G(\alpha)$ is the ${\rm max}\{0, i-1\}$-type $\beta$ of $G(s)$. So, $G(\alpha)\in W^{\dag}$.
\end{fact}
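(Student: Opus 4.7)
The plan is to prove the set-equality $G(\alpha) = \beta$ by establishing mutual inclusion, after disposing of the degenerate low-$i$ cases. When $i = 0$ we have $\Phi_0 = \emptyset$, so $\alpha = \emptyset$, $G(\alpha) = \emptyset$, and the $0$-type $\beta$ of $g(s)$ is also $\emptyset$. When $i = 1$, every formula in $\alpha \subseteq \Phi_1$ has temporal depth $0$, so no formula of the form $\bigcirc\psi$ appears in $\alpha$, giving $G(\alpha) = \emptyset = \beta$ again. From now on I assume $i \ge 2$.

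For the forward inclusion $G(\alpha) \subseteq \beta$, take $\psi \in G(\alpha)$. Then $\bigcirc\psi \in \alpha \subseteq \Phi_i$, meaning $\bigcirc\psi \in \Phi$, $\td(\bigcirc\psi) < i$, and $s \vDash_{\bM} \bigcirc\psi$. The last yields $g(s) \vDash_{\bM} \psi$, and the second gives $\td(\psi) < i-1 \le k$. Since $\psi^{-\tom} = (\bigcirc\psi)^{-\tom}$, $\psi$ is a generalized subformula of $\bigcirc\psi$, so by closure condition P3 we have $\psi \in \Phi$, hence $\psi \in \Phi_{i-1}$ and $\psi \in \beta$.

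The backward inclusion $\beta \subseteq G(\alpha)$ is the step where the key insight lies. Take $\psi \in \beta$: so $\psi \in \Phi$, $\td(\psi) < i-1$, and $g(s) \vDash_{\bM} \psi$. The truth clause for $\bigcirc$ immediately yields $s \vDash_{\bM} \bigcirc\psi$, and $\td(\bigcirc\psi) = \td(\psi) + 1 < i \le k+1$ gives the bound $\td(\bigcirc\psi) \le k$. The crucial observation is that $\bigcirc\psi$ is \emph{itself} a generalized subformula of $\psi$: indeed $(\bigcirc\psi)^{-\tom} = \psi^{-\tom}$ is trivially a subformula of $\psi^{-\tom}$. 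Applying P3 with $\bigcirc\psi$ in the role of the generalized subformula and $\psi \in \Phi$ in the role of the whole, we conclude $\bigcirc\psi \in \Phi$, so $\bigcirc\psi \in \Phi_i$ and $\bigcirc\psi \in \alpha$, which means $\psi \in G(\alpha)$. Once $G(\alpha) = \beta$ is in hand, the claim $G(\alpha) \in W^\dag$ is automatic, as $\beta$ is the $\max\{0, i-1\}$-type of $g(s) \in W$ with $\max\{0, i-1\} \le k+1$. The main subtle point throughout is this symmetric use of P3: because $^{-\tom}$ erases $\bigcirc$'s, the generalized-subformula relation is invariant under adding or removing leading $\bigcirc$'s, allowing P3 to act both to strip and to add them, provided the temporal-depth cap $\le k$ is respected --- and that cap is guaranteed precisely because $i$-types are only considered for $i \le k+1$.
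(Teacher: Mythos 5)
Your proof is correct and follows essentially the same route as the paper's: the forward inclusion is the easy direction, and the backward inclusion hinges on observing that $\tom\psi\in\Phi_i$ whenever $\psi\in\beta$, which the paper asserts as ``easy to see'' and you justify properly via closure condition P3 (noting that $(\tom\psi)^{-\tom}=\psi^{-\tom}$, so P3 can add as well as strip a leading $\tom$ within the depth bound). Your explicit treatment of the degenerate cases $i\le 1$ and the depth bookkeeping ($\td(\psi)<i-1$, hence $\td(\tom\psi)<i\le k+1$) is in fact slightly more careful than the paper's own writeup.
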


\begin{proof}
The case that $G(\alpha)=\emptyset$ is trivial, and we merely consider $G(\alpha)\not=\emptyset$. Also, it is simple to see that
$G(\alpha)\subseteq \beta$. For the other direction, assume $\psi\in\beta$. Then, in the original model $\bM$, we have $s\vDash \tom \psi$. As
$\td(\psi)\le i-1$, it is easy to see $\tom\psi \in \Phi_i$. Immediately, it then holds that $\tom\psi\in\alpha$, and therefore, $\psi\in
G(\alpha)$.
\end{proof}

Now, let us proceed to introduce the equivalence relations $\approx_X$ on $W^{\dag}$:

\begin{definition}\label{def:equivalence-relations-decidability}
For all non-empty sets $X\subseteq\term$ and all $\alpha,\beta\in W^{\dag}$, we write $\alpha\approx_X\beta$ if one of the following two
cases holds:
\begin{itemize}
\item[\emph{E1.}]  $\td(\alpha)=\td(\beta)\ge \td(X)$, and
\begin{itemize}
\item[\emph{E1.1}.] For all $D_XY$, $D_XY\in\alpha\Leftrightarrow D_XY\in\beta$, and
\item[\emph{E1.2}.] When $D_XY\in\alpha$ (or equivalently, $D_XY\in\beta$),  $\D_Y\varphi\in\alpha\Leftrightarrow \D_Y\varphi\in\beta$.
\end{itemize}
\item[\emph{E2.}] $\td(\alpha)=\td(\beta)< \td(X)$, and there is some $m\le {\rm{min}}(\{\td(\alpha)\}\cup \{\td(x): x\in X\})$ such that $\alpha\approx_{\tom^m\mathbb{V}_{\Phi}}\beta$ holds in the sense of \emph{E1}.
\end{itemize}
\end{definition}

So, for all non-empty $X\subseteq \term$, if $\alpha\approx_X\beta$, then $\alpha=\emptyset$ iff $\beta=\emptyset$. With this construction in
place, here is an observation on $\approx_X$:

\begin{fact}\label{fact:observation-on-relations-of-phi-model}
When $\alpha\approx_X\beta$ and $D_XY\in\alpha$, we have $\alpha\approx_Y\beta$.
\end{fact}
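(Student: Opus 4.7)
My plan is to reduce the claim to clause E1 of the definition of $\approx_X$, and then transfer E1.1 and E1.2 from $X$ to $Y$ using two \textbf{DFD}-provable principles already available in the paper: the reflection $D_Yz \leftrightarrow \D_YD_Yz$ (immediate from D-Introduction$_2$ and D-T), and the dependence transitivity $D_XY\wedge D_YZ \to D_XZ$ (Dep-Trans).

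The first step is a td-bookkeeping remark. Since $\td(\alpha) = \max\{\td(\varphi): \varphi \in \alpha\}$ and $\td(D_XY) = \max(\td(X), \td(Y))$, the hypothesis $D_XY \in \alpha$ gives both $\td(X) \leq \td(\alpha)$ and $\td(Y) \leq \td(\alpha) = \td(\beta)$. The first inequality is incompatible with clause E2 of $\approx_X$, whose defining condition is $\td(\alpha) < \td(X)$, so $\alpha\approx_X\beta$ must be witnessed by E1; in particular E1.1 yields $D_XY \in \beta$. The second inequality tells us that to conclude $\alpha \approx_Y\beta$ it suffices to check E1.1 and E1.2 for the set $Y$, and the E2-branch for $Y$ is irrelevant.

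For E1.1 for $Y$, the idea is to rewrite every relevant atom $D_Yz$ as $\D_YD_Yz$ via the reflection principle, and then invoke E1.2 for $X$ with $\varphi := D_Yz$, which is licensed precisely because $D_XY\in\alpha$; this gives $\D_YD_Yz \in \alpha$ iff $\D_YD_Yz \in \beta$, whence $D_Yz\in\alpha$ iff $D_Yz\in\beta$. For E1.2 for $Y$, suppose $D_YZ\in\alpha$; combined with $D_XY\in\alpha$ via Dep-Trans we obtain $D_XZ\in\alpha$, and then E1.2 for $X$ applied to the atom $D_XZ$ yields the desired agreement on $\D_Z\varphi$ between $\alpha$ and $\beta$.

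The main piece of care, and the only real obstacle, is verifying that the auxiliary formulas appearing along the way ($\D_YD_Yz$ and $D_XZ$) actually lie in the relevant layer $\Phi_i$, not merely in $\Phi$. This is exactly what the closure conditions P1 and P3 were designed to provide, in combination with the td-bounds $\td(X), \td(Y) \leq \td(\alpha) < i$ extracted in the first step: P1 places the $\D_YD_Yz$'s and the various dependence atoms into $\Phi$, while P3 drags their generalized subformulas (including the bare $D_Yz$ and $D_XZ$) down with them, and the td-bounds ensure membership in the correct $\Phi_i$. Once this bookkeeping is done, no separate treatment of E2 is required and the verification of E1.1 and E1.2 for $Y$ is routine.
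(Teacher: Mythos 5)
Your proposal is correct and follows essentially the same route as the paper's own proof: both reduce to the E1 clauses via the td-bounds extracted from $D_XY\in\alpha$, transfer $D_Yz$-atoms by rewriting them as $\D_YD_Yz$ (using P1, $\D$-Introduction$_2$ and E1.2 for $X$), and transfer $\D_Z\varphi$-formulas by first deriving $D_XZ\in\alpha\cap\beta$ via Dep-Trans. Your explicit attention to the layer-membership bookkeeping (P1/P3 placing the auxiliary formulas in $\Phi_i$) and your correct attribution of the $\D_YD_Yz$-transfer to E1.2 rather than E1.1 are, if anything, slightly more careful than the paper's write-up.
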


\begin{proof}
Assume that $\alpha\approx_X\beta$ and $D_XY\in\alpha$. Let $\td(\alpha)=i$. Then, we know that $\td(X)\le i$ and $\td(Y)\le i$.

Let $D_YZ\in\alpha$. Using P1, we cab obtain $\D_YD_YZ\in \Phi_i$. Then, by P2, it holds that $\neg \D_YD_YZ\in\Phi_i$. With Definition \ref{def:types} and
axiom $\D$-Introduction$_2$, we have $\D_YD_YZ\in\alpha$. Now, from E1.1, we know $\D_YD_YZ\in\beta$. So, $D_YZ\in\beta$.

Moreover, it is simple to see that $D_XZ\in \alpha\cap\beta$. Using E1.2, we have $\D_Z\varphi\in\alpha$ iff $\D_Z\varphi\in\beta$. Finally,
with the clauses of E1, we can check that $\alpha\approx_Y\beta$.
\end{proof}

The fact will be useful to simplify our proofs below. Now, let us continue to show that all relations $\approx_X$ are equivalence
relations:

\begin{fact}\label{fact:phi-model-c1}
Let $X$ be non-empty. The relation $\approx_X$ is an equivalence relation.
\end{fact}

\begin{proof}
It is easy to see that the relation is \emph{reflexive} and \emph{symmetric}. Now we prove that it is \emph{transitive}. Let
$\alpha\approx_X\beta$ and $\beta\approx_X\gamma$.
Then, $\td(\alpha)=\td(\beta)=\td(\gamma)$. The case that $\alpha=\emptyset$ is trivial, as it implies $\beta=\gamma=\emptyset$. We now
consider $\alpha\not=\emptyset$.

\vspace{1mm}

(1). First, consider $\td(\alpha)\ge\td(X)$. Then, for any $D_XY$, it is simple to see that  $D_XY\in\alpha$ iff $D_XY\in\beta$ iff $ D_XY\in\gamma$. Also, when $D_XY\in\alpha$, it is a matter of direct checking that $\D_Y\varphi\in \alpha$ iff $\D_Y\varphi\in \beta$ iff
$\D_Y\varphi\in \gamma$.

\vspace{1mm}

(2). Next, consider $\td(\alpha)<\td(X)$. Then, there are $m_1$ and $m_2$ s.t. $\alpha\approx_{\tom^{m_1}\mathbb{V}_{\Phi}}\beta$,
$\beta\approx_{\tom^{m_2}\mathbb{V}_{\Phi}}\gamma$, and for $i\in\{m_1,m_2\}$, $i\le {\rm{min}}(\{\td(\alpha)\}\cup\{\td(x): x\in X\})$.

When $m_1=m_2$, by the same reasoning as the case above, but now using $\tom^{m_1}\mathbb{V}_{\Phi}$ in place of $X$, we have
$\alpha\approx_{\tom^{m_1}\mathbb{V}_{\Phi}}\gamma$. Thus, it still holds that $\alpha\approx_X\gamma$.

Now, let us consider $m_1\not=m_2$. Without loss of generality, we assume that $m_1< m_2$, and it suffices to show
$\alpha=_{\tom^{m_2}\mathbb{V}_{\Phi}}\beta$: to see this, one just need to notice that (a).
$D_{\tom^{m_1}\mathbb{V}_{\Phi}}\tom^{m_2}\mathbb{V}_{\Phi}\in \alpha$, (b). $\alpha=_{\tom^{m_1}\mathbb{V}_{\Phi}}\beta$, and (c). Fact
\ref{fact:observation-on-relations-of-phi-model}.
\end{proof}

To complete the definition of $\bM^{\dag}$, it remains to define $\|\bullet\|^{\dag}$:

\begin{definition}\label{def:valuation-decidability}
For all $\px$, we put $\|\px\|^{\dag}:=\{\alpha: \px\in\alpha\}$. For each $D_Xy$, we define $\|D_Xy\|^{\dag}$ as the \emph{the smallest
subset of $W^{\dag}$} satisfying the following:
\begin{itemize}
\item[\emph{V1}.] If $\tom^m\mathbb{V}_{\Phi}\subseteq X$ for some $m\le \td(y)$, then $\alpha\in \|D_Xy\|^{\dag}$ for all $\alpha\in W^{\dag}$.

\item[\emph{V2}.] For all $y\in X$ and $\alpha\in W^{\dag}$, $\alpha\vDash D_Xy$.

\item[\emph{V3}.] If $D_{X'}y\in\alpha$ for some $X'\subseteq X$, then $\alpha\vDash D_{X}y$.

\item[\emph{V4}.] If $D_{X'}\tom^m{\mathbb{V}_{\Phi}}\in\alpha$ for some $X'\subseteq X$ and $m\le \td(y)$, then $\alpha\vDash D_{X}y$.
\end{itemize}
\end{definition}

By construction, for all $D_Xy$ with $\td(D_Xy)\le \td(\alpha)$, it holds that
\begin{center}
 $\alpha\in \|D_Xy\|^{\dag}$ iff $D_Xy\in\alpha$.    
\end{center}

\begin{fact}\label{fact:phi-model-c2}
The valuation $\|\bullet\|^{\dag}$ satisfies `Dep-Reflexivity', `Dep-Transitivity' and `Determinism'.
\end{fact}

\begin{proof}
    See \ref{appendix:proof-for-fact:phi-model-c2}.
\end{proof}

Also, for $\|\px\|^{\dag}$, we have the following:

\begin{fact}\label{fact:phi-model-c5}
For all $\alpha\in W^{\dag}$, $\alpha\vDash P(\tom x_1,\ldots,\tom x_n)$ iff $G(\alpha)\vDash P(x_1,\ldots,x_n)$.
\end{fact}

\begin{proof}
Let $\alpha\in W^{\dag}$ such that $\td(\alpha)=i$.

For the direction \emph{from left to right}, from $\alpha\vDash P(\tom x_1,\ldots,\tom x_n)$ we know that $P(\tom x_1,\ldots,\tom x_n)\in\alpha$. So, $P(\tom x_1,\ldots,\tom x_n) \in \Phi_i$. From P3, it is not hard to see that $\tom P(x_1,\ldots,x_n)\in\Phi_i$. Then, by $P(\tom x_1,\ldots,\tom x_n)\in\alpha$, it holds that $\tom P(x_1,\ldots,x_n)\in\alpha$. So, $P(x_1,\ldots,x_n)\in G(\alpha)$. Thus, $G(\alpha)\vDash P(x_1,\ldots,x_n)$.

\vspace{1mm}

For the direction from \emph{right to left}, from $G(\alpha)\vDash P(x_1,\ldots,x_n)$ we know that $\tom P(x_1,\ldots,x_n)\in \alpha$. Then, it is easy to see that $P(\tom x_1,\ldots,\tom x_n)\in \Phi_i$. Now, using the fact that $\tom P(x_1,\ldots,x_n)\in \alpha$, we have $P(\tom x_1,\ldots,\tom x_n)\in \alpha$, and so it holds that $\alpha\vDash P(\tom x_1,\ldots,\tom x_n)$.
\end{proof}

Now, it is important to show that $\bM^{\dag}$ is a general relational model for $\LTD$, i.e., its components satisfy the
 conditions C1-C7 in Definition \ref{def-generalrelationalmodel}.
We have already proven some of them, and let
us move to the others.

\begin{fact}\label{fact:phi-model-c3}
For any $\alpha,\beta\in W^{\dag}$, if $\alpha\approx_X\beta$ and $\alpha\vDash D_XY$, then it holds that $\alpha\approx_Y\beta$ and $\beta\vDash D_XY$.
\end{fact}

\begin{proof}
    See \ref{appendix:proof-for-fact:phi-model-c3}.
\end{proof}

\begin{fact}\label{fact:phi-model-c4}
For all $\alpha,\beta\in W^{\dag}$, if $\alpha\approx_X \beta$ and $\alpha\vDash\px$ (where $X$ is the set of terms occurring in the sequence
${\bm x}$), then $\beta\vDash\px$.
\end{fact}

\begin{proof}
Assume that $\alpha\approx_X \beta$ and $\alpha\vDash\px$. Now, we have $\px\in\alpha$, which implies $\td(X)\le \td(\alpha)$. So, $\alpha\approx_X
\beta$ must hold by E1. Obviously, $D_XX\in\alpha$. Also, using P4 and the definition of $\alpha$, one can check that $\D_X\px\in\alpha$. With
the definition of $\approx_X$, we can infer $\D_X\px\in\beta$. So, $\px\in\beta$. Consequently, $\beta\vDash\px$.
\end{proof}

\begin{fact}\label{fact:phi-model-c6}
For all $\alpha,\beta\in W^{\dag}$, if $\alpha\approx_{\tom X}\beta$, then $G(\alpha)\approx_X G(\beta)$.
\end{fact}

\begin{proof}
    See \ref{appendix:proof-for-fact:phi-model-c6}.
\end{proof}

\begin{fact}\label{fact:phi-model-c7}
For any $\alpha\in W^{\dag}$, if $G(\alpha)\vDash D_XY$, then $\alpha\vDash D_{\tom X}\tom Y$.
\end{fact}

 \begin{proof}
We consider two cases: (1). $G(\alpha)=\emptyset$ and (2). $G(\alpha)\not=\emptyset$. Let us begin.

\vspace{1.5mm}

(1). $G(\alpha)=\emptyset$. Then $G(\alpha)\vDash D_XY$ can only hold by V1 or V2.

\vspace{1mm}

(1.1). When it holds by V1,  $\tom^m\mathbb{V}_{\Phi}\subseteq X$ for some $m\le {\rm{min}}\{\td(y): y\in Y\}$. For this $m$,
we have $\tom^{m+1}\mathbb{V}_{\Phi}\subseteq \tom X$ and $m+1\le {\rm{min}}\{\td(\tom y): y\in Y\}$. Therefore,
$\alpha\vDash D_{\tom X}\tom Y$.

\vspace{1mm}

(1.2). When it holds by V2,  $Y\subseteq X$. So, $\tom Y\subseteq\tom X$. Thus, $\alpha\vDash D_{\tom X}\tom Y$.

\vspace{1.5mm}

(2). $G(\alpha)\not=\emptyset$. The proofs for the cases that $G(\alpha)\vDash D_XY$ holds by V1 and V2 are the same as (1.1) and (1.2) above
respectively. We move to others.

\vspace{1mm}

(2.1). It holds by V3. Then, $D_{X'}Y\in G(\alpha)$ for some $X'\subseteq X$. It is easy to check that $D_{\tom X'}\tom Y\in \alpha$. 
Using $\tom X'\subseteq \tom X$ and V3, we have $\alpha\vDash D_{\tom X}\tom Y$.

\vspace{1mm}

(2.2). It holds by V4. Then, $D_{X'}\tom^m\mathbb{V}_{\Phi}\in G(\alpha)$ for some $X'\subseteq X$ and $m\le {\rm{min}}\{\td(y): y\in Y\}$.
Now, $\tom D_{X'}\tom^m\mathbb{V}_{\Phi}\in \alpha$. Then, one can infer $D_{\tom X'}\tom^{m+1}\mathbb{V}_{\Phi}\in \alpha$. Notice that $m+1\le {\rm{min}}\{\td(\tom y): \tom y\in \tom Y\}$ and  $\tom
X'\subseteq \tom X$. Thus, from V4 it follows that $\alpha\vDash D_{\tom X}\tom Y$.
\end{proof}

\begin{theorem}\label{theorem:phi-is-general-model}
$\bM^{\dag}=(W^{\dag},G,\approx_X,\|\bullet\|^{\dag})$ is a general relational model of $\LTD$.
\end{theorem}

\begin{proof}
Facts \ref{fact:phi-model-c1},  \ref{fact:phi-model-c2}, \ref{fact:phi-model-c3}, \ref{fact:phi-model-c4}, \ref{fact:phi-model-c5},
\ref{fact:phi-model-c6} and \ref{fact:phi-model-c7} show that conditions C1-C7 are satisfied.
\end{proof}

\begin{theorem}\label{theorem:another-truth-lemma-decidability}
Let $\alpha\in W^{\dag}$ be the $i$-type of a state $s$ of the original model. For all $\varphi\in \Phi_i$, 
$\alpha\vDash_{\bM^{\dag}}\varphi$ iff $\varphi\in \alpha$.
\end{theorem}

\begin{proof}
The proof goes by induction on $\varphi\in\mathcal{L}^{\dagger}$. Boolean cases are routine.

\vspace{1.5mm}

(1). $\varphi$ is $\bigcirc\psi$.  $\alpha\vDash\varphi$ iff $G(\alpha)\vDash\psi$. By assumption, $\psi\in \Phi_{i-1}$. Then, by the
inductive hypothesis, $G(\alpha)\vDash\psi$ iff $\psi\in G(\alpha)$. Now, $\psi\in G(\alpha)$ iff $\bigcirc\psi\in \alpha$.

\vspace{1.5mm}

(2). $\varphi$ is $\D_X\psi$. We consider the two directions separately.

\vspace{1mm}

(2.1). Assume that $\D_X\psi\in\alpha$. So, $\td(X)\le \td(\alpha)$. Let $\beta\in W^{\dag}$ with $\alpha\approx_X\beta$. Notice that
$\alpha\approx_X\beta$ holds by E1. Obviously, $D_XX\in \alpha$. By E1.2, it holds that $\D_X\psi\in\beta$, which can give us $\psi\in\beta$.
By the inductive hypothesis, it follows that $\beta\vDash\psi$. Consequently, $\alpha\vDash \D_X\psi$.

\vspace{1mm}

(2.2). Suppose $\D_X\psi\not\in\alpha$. Then there is some $u$ of $\bM$ s.t. $s=_X u$ and $u\not\vDash\psi$. We now consider the
$i$-type $\beta$ of $u$. Now, $\neg\psi\in \beta$. By the inductive hypothesis, it holds that $\beta\not\vDash\psi$. Also, with E1, we can check that $\alpha\approx_X\beta$. Thus, $\alpha\not\vDash\D_X\psi$.
\end{proof}

Now we can conclude that:

\begin{theorem}\label{theorem:dd-dfd-f.m.p.}
$\LTD$ has the finite model property w.r.t. general relational models.
\end{theorem}

\begin{proof}
Let $\bM$ be a general relational model  and $\varphi\in\mathcal{L}^{\dagger}$ a formula such that
$s\vDash_{\bM}\varphi$. Assume that $\td(\varphi)=i$. Denote by $\Phi$ the closure of $\varphi$ and $\alpha$ the $\Phi_i$-type of $s$ in
$\bM$. From Theorem \ref{theorem:phi-is-general-model}, it follows that the corresponding model $\bM^{\dag}$ is finite. Now, from Theorem \ref{theorem:another-truth-lemma-decidability}, it follows that $\alpha\vDash\varphi$, as expected.
\end{proof}

As a consequence, it holds that:

 \begin{theorem}\label{theorem:decidable}
$\LTD$ is decidable.
\end{theorem}

\section{Further directions}\label{sec:further-directions}
Up to this point, we have explored a series of logics for dynamic dependence. We started with the system $\LDTV$ that features temporalized variables $\tom x$, and then enriched it with function symbols and global equality. Subsequently, motivated by ordinary temporal logics, we analyzed a general intuition reducing $\tom \varphi$-type statements about the future to current statements about  temporalized variables, and showed that this commits us to timed models, an interesting subclass of dynamical systems. Finally, we presented our most general logic of temporal dependence over arbitrary dynamcial systems. For all these logics, we found  nice properties, viz. decidability of the satisfiability problem and complete Hilbert-style axiomatizability of the validities. Along the way, we introduced various new reduction methods that may find other uses, and we extended existing methods (frame correspondence, geenralized modal models, and filtration)  which also enhance our understanding of the modal base logic $\LFD$ of static dependence.

Several technical open problems remain for the systems presented here. First, w.r.t. our overall design, the sets of terms $X$ in formulas $D_Xy$ and $\D_X\varphi$ were required to be non-empty. Although after dropping the restriction we can still find complete Hilbert-style proof systems for all our logics and show  decidability of $\LTD^{\mathsf{f}}$ with the timed semantics, the requirement of non-emptiness is pivotal in our decidability proof for the most general logic $\LTD$ with the un-timed semantics. Also left open is an explicit complete axiomatization for the logic extending $\LTD$ with functional symbols and term identify. Next, we have merely touched upon the frame correspondence analyses of modal languages with `relational atoms' $D_Xy$ in Section \ref{sec:next-time-modality}, but it remains to understand their correspondence theory comprehensively. A  useful thing to have would be a Gentzen-style proof-theoretic  treatment of our systems, as has already been given for the basic dependence logic $\LFD$. 

 It should also be noted that our emphasis in this paper may be just one half of the story. A natural counterpoint to our concerns in this paper would be the study of the logic of \emph{independent variables} in dynamical systems, on the analogy of notions of independence known in the setting of the base logic $\LFD$.
 
Next, here are some comparisons with other frameworks. First, compared to existing  \emph{temporal logics},  our languages are poor in expressive power, since they can only talk about next time steps, iterated up to specific depths.  One natural  addition would be a one-step past operator as an existential modality describing the previous, rather than the next stage.\footnote{This may also provide a new perspective on the completeness and decidability of $\mathsf{LTD}$, since we can then express the part of the crucial semantic condition `$s=_{\tom X}t$ iff $g(s)=_X g(t)$' which posed problems in our technical treatment, viz. as an axiom $\D_{\tom X}\neg \varphi\to \tom \D_X \neg \tom^{-1}\varphi$.}  More ambitious would be adding the unbounded future operators like $\tom^* x$ and $\tom^* \varphi$  which allow us to reason about the eventual long-term behavior of dynamical systems. 

 Also, our models of dynamical systems essentially  describe one function on a set of structured states carrying variable assignments. Thus, there are no genuine choices or branching futures, but only linear progression with possibly branching pasts. Allowing non-determinism might  even simplify axiomatizing some of the logics explored in the work. However, to us, the most urgent enrichment on the agenda  follows practice in using dynamical systems which often come with \emph{topologies} on the value ranges for variables, or on the state space itself. This would link our systems to {\em dynamic topological logics} in the style of \cite{artemov-1997,dtl,handbook-dtl}. But a topological setting also offers a  range of new notions and results in dependence logic, since we can now study continuous (and even uniformly continuous) dependence in empirical settings.  We refer to \cite{AJ-CD} for an extensive exploration of decidable and axiomatizable systems in this spirit, and to \cite{Dazhu2021}, which interprets $\LTD$ w.r.t. dynamic topological spaces and enriches the latter with a fine-structure of variables, for  first results on  topological dependence  in a temporal setting.

 Finally, at another interface, the combination of temporal operators and dependence modalities read epistemically \citep{AJ-Dependence}, suggests comparisons with  {\em epistemic-temporal logics} such as $\ETL$  \cite{reason-about-knowledge}. To see how this less obvious connection works, here are a few examples. Consider the setting that  each temporalized variable $x$ is treated as an agent; $\tom x$ refers to the next stage of $x$, who may know more (or less) than $x$; and $\D_x\varphi$ says that $x$ knows $\varphi$. With this understanding, our temporal dependence logics provide fresh perspectives on important principles in  $\ETL$. For instance, consider the following law in  $\ETL$,  reformulated with our language:
\begin{center}
{\em No Learning} (a.k.a. {\em No Miracles}):\qquad  $\tom \D_x\varphi \to \D_{x}\tom \varphi$     
\end{center}
 stating that {\em if at the next stage $x$ knows that $\varphi$, then $x$ now already knows that $\varphi$ will be true at the next stage}. This implication is not valid in our temporal dependence logics, but our systems do provide the following valid analog:
\begin{center}
 $\tom \D_x\varphi \to \D_{\tom x}\tom \varphi$   
\end{center}
which introduces an interesting new distinction into an agent logic like $\ETL$ between agents as they are now and as they will be at some later moment. \footnote{A similar analogy connects the well-known axiom for  {\em Perfect Recall} (i.e., $\D_{x}\tom \varphi \to \tom \D_x\varphi$) and the epistemic-temporal principle  $\D_{\tom x}\tom \varphi \to \tom \D_x\varphi$ with temporalized agents.} Despite these analogs, logical systems for  full epistemic-temporal  languages are usually  undecidable and non-axiomatizable when the expressive repertoire becomes  strong \citep{Benthem2006TheTO,etl-1989}, while our systems so far lie on the good side of this boundary.\footnote{We can even give the temporal operator $\tom$ a more concrete interpretation by  combining  paradigms of dynamic epistemic logics like the {\em public announcement logic} $\PAL$ \citep{BMS,Johan-del,hans-del}, and given an agent $x$, its next stage can be $!\varphi; i$, meaning the agent $x$ resulting from the public announcement $[!\varphi]$. It is  interesting to explore what $\PAL$ and our logics would look like if we have this kind of agents, which is also useful in enriching the standard view in $\ETL$ \citep{merging-del-etl}.}

 Still more structure would be needed to make good on the suggestion in \citep{AJ-Dependence} that the present style of  analysis might carry over to the dependencies over time that are found with causality and in games.

\section*{Acknowledgements} The
second author was supported by the Tsinghua University Initiative Scientific Research Program. The third author was supported by the National Social Science
Foundation of China [22CZX063].



 \bibliographystyle{elsarticle-num} 
 \bibliography{mybib}


\appendix

\section{Proof for Fact \ref{fact:timed-LTD-model-transition-1}}\label{appendix:proof-for-fact:timed-LTD-model-transition-1}

\begin{proof}
This can be proven by induction on the complexity of $x$.

\vspace{1.5mm}

(1). For the basic case $x=v\in V$, we use sub-induction on $n\in\mathbb{N}$.

\vspace{1mm}

(1.1). For $n=0$, we have ${\bm x}(s,0)=({\bm v}(s),0)=({\bf Tr}({\bm v})(s),0)$, as needed.

\vspace{1mm}

(1.2). For $n+1$, ${\bm v}(s,n+1)=(I(f_v)([{\bm v_1}(s,n)]^-,\ldots,[{\bm v_N}(s,n)]^-),n+1)$. Then, by the sub-inductive hypothesis, it holds that  
\begin{center}
\begin{tabular}{rcl}
${\bm
v}(s,n+1)$ & $=$  &$(I(f_v) ({\bf Tr}(\tom^n {\bm v_1})(s), \ldots, {\bf Tr}(\tom^n{\bm v_N})(s)),n+1)$\\
& $=$  & $({\bf Tr}(\tom^{n+1} {\bm v})(s),n+1)$
\end{tabular}
\end{center}

(2). For the inductive case $x:=\tom y$, we have 
\begin{center}
    ${\bm x}(s,n)=\tom {\bm y}(s,n)={\bm y}(g(s,n))={\bm y}(s,n+1)$.
\end{center}
Using the inductive hypothesis,
we obtain
\begin{center}
   ${\bm x}(s,n)={\bm y}(s,n+1)=({\bf Tr}(\tom^{n+1}{\bm y})(s), n+1+\mathsf{td}(y))=({\bf Tr}(\tom^{n}{\bm x})(s),n+\mathsf{td}(x))$, 
\end{center}
as desired.
\end{proof}

\section{Proof for Fact \ref{fact:LTD-t-lfd-1}}\label{appendix:proof-for-fact:LTD-t-lfd-1}

\begin{proof}
We use induction on the complexity of the formula $\varphi\in\mathcal{L}$. The cases for Boolean connectives $\neg,\land$ are straightforward, and we now consider other cases.

\vspace{1.5mm}

(1). Formula $\varphi$ is $P(x_1,\ldots,x_m)$. Then, ${\bf Tr}(\varphi[V/\tom^n V])$ is exactly the formula $P({\bf Tr}(\tom^n x_1),\ldots,{\bf Tr}(\tom^nx_m))$. So, we have:  
\begin{center}
\begin{tabular}{rcl}
&& $s\vDash_{\bM} P({\bf Tr}(\tom^n x_1),\ldots,{\bf Tr}(\tom^nx_m))$\\
& iff  & $({\bf Tr}(\tom^n {\bm x_1})(s),\ldots,{\bf Tr}(\tom^n{\bm x_m})(s))\in
 I(P)$\\
& iff  &  $(({\bf Tr}(\tom^n {\bm x_1})(s),n+\mathsf{td}(x_1)),\ldots,({\bf Tr}(\tom^n{\bm x_m})(s),n+\mathsf{td}(x_m)))\in
 I'(P)$\\
& iff  & $({\bm x_{1}}(s,n),\ldots,{\bm x_{m}}(s,n))\in I'(P)$\\
& iff  & $(s,n)\vDash_{\bM^{\downarrow}}\varphi$
\end{tabular}
\end{center}
Notice that the third equivalence holds by Fact \ref{fact:timed-LTD-model-transition-1}.

\vspace{1.5mm}

(2). $\varphi$ is $D_Xy$. ${\bf Tr}(\varphi[V/\tom^n V])$ is $D_{{\bf Tr}(\tom
^nX)}{\bf Tr}(\tom^ny)$. Now, we have:
\begin{center}
\begin{tabular}{rcl}
&& $s\vDash_{\bM} D_{{\bf Tr}(\tom ^nX)}{\bf Tr}(\tom^ny)$\\
 & iff  & for all $w\in A$,
$s=_{{\bf Tr}(\tom ^nX)}w$ implies $s=_{{\bf Tr}(\tom ^ny)}w$\\
  & iff  &for all $(w,m)\in S$, $(s,n)=_X (w,m)$ implies $(s,n)=_y (w,m)$\\
  &iff &$(s,n)\vDash_{\bM^{\downarrow}}\varphi$
\end{tabular}
\end{center}
\vspace{1.5mm}

(3). $\varphi$ is $\D_X\psi$. Formula ${\bf Tr}(\varphi[V/\tom^n V])$ is $\D_{{\bf Tr}(\tom^n
X)}{\bf Tr}(\psi[V/\tom^n V])$. Then, 
\begin{center}
\begin{tabular}{rcl}
&& $s\vDash_{\bM}\D_{{\bf Tr}(\tom^n X)}{\bf
Tr}(\psi[V/\tom^nV])$\\
 & iff  &  for all $w\in A$,  $s=_{{\bf Tr}(\tom ^nX)}w$ implies  $w\vDash_{\bM} {\bf
Tr}(\psi[V/\tom^nV])$\\
&iff& for all $w\in A$, $s=_{{\bf Tr}(\tom ^nX)}w$ implies
$(w,n)\vDash_{\bM^{\downarrow}} \psi$ \\
&iff&  for all $(w,m)\in A$, $(s,n)=_X (w,m)$ implies $(w,m)\vDash_{\bM^{\downarrow}} \psi$ \\
&iff& $(s,n)\vDash_{\bM^{\downarrow}}\varphi$. 
\end{tabular}
\end{center}
The proof is completed.
\end{proof}

\section{Proof for Fact \ref{facf:timed-system}}\label{appendix:proof-for-facf:timed-system}

\begin{proof}
The directions from (1) to (2) and from (2) to (3) are easy to check. To see that the direction from (3) to (4) holds, we just need to put:
$s=^{\tau}w$ iff the maximal $g$-histories of $s,w$ have the same length. Now it suffices to prove the direction from (4) to (1).

Let $=^{\tau}$ be an equivalence relation satisfying the synchronicity conditions. We define a map $\tau$ as follows: (i). if there is no $w$
with $s=^{\tau}g(w)$, then for all $t\in\bS$ such that $s=^{\tau}t$, $\tau_t:=0$; (ii). if $s=^{\tau}g(s)$, then for all $t\in\bS$ such that
$s=^{\tau}t$, $\tau_t:=\infty$; and (iii). for all $\tau_s\in \mathbb{N}\cup\{\infty\}$ and $t=^{\tau}g(s)$, $\tau_{t}:=\tau_s+1$. Notice that
the two synchronicity conditions satisfied by $=^{\tau}$ ensure that $\tau$ is well-defined on the system $\bS$: for every dynamical state
$s\in\bS$, we have a unique value $\tau_s$. Also, it is a matter of direct checking that the clause (iii) ensures the property (b) in
Definition \ref{def:timed-system}. Thus, to show that $\tau$ is a timing map, we only need to prove that condition (a) is satisfied.

Let $s$ be an initial state. Suppose  $\tau_s\not=0$. If $1\le\tau_s=n\in\mathbb{N}$, then the value $\tau_s$ must be given by clause (3),
which is impossible. Next,  consider the case that $\tau_s=\infty$. Then, by clause (ii), there is some $w$ such that $w=^{\tau}g(w)$ and
$w=^{\tau}s$. As $=^{\tau}$ is an equivalence relation, we have $s=^{\tau}g(w)$. Now, from the second condition of the synchronicity relation,
we know that there is some $w'$ such that $s=g(w')$, which contradicts the fact that $s$ is an initial state.
\end{proof}

\section{Proof for Fact \ref{fact:satisfiability-timed-without-O-modality}}\label{appendix:proof-for-fact:satisfiability-timed-without-O-modality}

\begin{proof}
It can be proved in a similar way to that for $\LDTV^{\mathsf{f},\equiv}$ (Theorem \ref{theorem:decidable-LTD-equiv-funct}), but we need to modify the constructions. First, we define a translation $\textbf{Tr}^{\mathsf{t}}$ that is the same as $\textbf{Tr}$ used in Theorem \ref{theorem:decidable-LTD-equiv-funct}, except that $\textbf{Tr}^{\mathsf{t}}(D_Xy):=D_{\textbf{Tr}^{\mathsf{t}}(X)\cup\{v_{N+1}\}}\textbf{Tr}^{\mathsf{t}}(y)$ and $\textbf{Tr}^{\mathsf{t}}(\D_X\varphi):=D_{\textbf{Tr}^{\mathsf{t}}(X)\cup\{v_{N+1}\}}\textbf{Tr}^{\mathsf{t}}(\varphi)$. The resulting language $\mathcal{L}^{f,\equiv}$ also contains an additional variable $v_{N+1}$ standing for `time'.

\vspace{1.5mm}

Now, let $\bM=(\mathcal{O},I,A)$  be a model of $\LFD^{\mathsf{f},\equiv}$ and $s_0\in A\subseteq \mathcal{O}^V$.  We  define a timed dynamical model
$\bM^{\downarrow}=(\mathbb{D}_v,I',S,g,{\bm v})_{v\in V}$ over variables $V$, by taking:
\begin{itemize}
\item[$\bullet$] For all $v\in V$, $\mathbb{D}_{v}:=\mathcal{O}$.
\item[$\bullet$] $I'$ is the restriction of $I$ to all predicate symbols and functional symbols $f$ (except those $f_v$).
\item[$\bullet$] $S:=\{(s, i): i\in\mathbb{N}\;\textit{and}\; s\in A\; \textit{with}\; {\bm v_{N+1}}(s)={\bm v_{N+1}}(s_0) \}$.
\item[$\bullet$] $g(s, i):= (s,i+1)$.
\item[$\bullet$] For each $v\in V$,  ${\bm v}(s, i)$ are recursively defined by putting:
\begin{center}
${\bm v}(s,0):=s(v)$, \quad ${\bm v}(s,n+1):=I(f_v)({\bm v_1}(s,n),\ldots,{\bm v_N}(s,n))$,
\end{center}
where $I(f_v)$ is the interpretation of $f_v$ as an actual $N$-ary function in $\bM$.
\end{itemize} 
By construction, one can check the following:
\begin{itemize}
    \item[$\bullet$] For any states $(s,i), (t,j)\in S$,   $(s,i)=^{\tau}_X(t,j)$ iff $i=j$ and $(s,i)=_X (t,j)$.
    \item[$\bullet$] For all $x\in \term$ and $(s,n)\in S$, it holds that ${\bm x}(s,n)={\bf Tr^{\mathsf{t}}}(\tom^n { \bm x})(s)$. 
    \item[$\bullet$]   $s\vDash_{\bM} {\bf Tr^{\mathsf{t}}}(\varphi[V/\tom^nV])$\; iff\; $(s,n)\vDash_{\bM^{\downarrow}} \varphi.$ 
\end{itemize}

\vspace{1.5mm}

Conversely, given a timed dynamical model
$\mathcal{M}=(\mathbb{D}_v,I,S,g,{\bf v})_{v\in V}$, we can convert it into an $\LFD^{\mathsf{f}}$-model $\mathcal{M}^+=(\mathcal{O},I^+,A)$
for a language (with $N+1$ variables $V\cup\{v_{N+1}\}$), which is defined in the same way as that in the proof for Theorem \ref{theorem:decidable-LTD-equiv-funct}, except the following:
\begin{itemize}
\item[$\bullet$] $\mathcal{O}:= \mathbb{N}\cup\{\infty\}\cup\bigcup_{v\in V}\mathbb{D}_v$, where $\infty$ is a fresh object.
\item[$\bullet$] $A:=\{s^+: s\in S\}$ s.t. ${\bm v}(s^+):= {\bm v}(s)$ for each $v\in V$ and ${\bm v_{N+1}}(s^+):= \tau_s$.
\end{itemize}
Now we can show the following:
\begin{center}
  $s\vDash_{\mathcal{M}} \varphi$ \,  iff  \, $ s^+\vDash_{\mathcal{M}^+}  {\bf Tr}^{\mathsf{t}}(\varphi)$.
\end{center}

\vspace{1.5mm}

Finally, notice that the model $\bM^{\downarrow}$ is a linear-time dynamical model with finite past, so we have the desired result.
\end{proof}

\section{Proof for Fact \ref{fact:completeness-timed-without-O-modality}}\label{appendix:proof-for-fact-completeness-timed-without-O-modality}
\begin{proof}
The proof is similar to that for Theorem \ref{theorem:theorem-preserving}, and we are also going to construct a theorem preserving function $\mathcal{T}$  from formulas of $\LDTV^{\mathsf{t},\mathsf{f},\equiv}$  to formulas of $\LFD^{\mathsf{f},\equiv}$  such that $\varphi$ is a theorem of $\bf{LDTV^{f,\equiv}}$ iff $\mathcal{T}(\varphi)$ is a theorem in $\bf{LFD^{f, \equiv}}$. But now, we will work with the translation ${\bf Tr}^{\mathsf{t}}$ defined in the proof for Fact \ref{fact:satisfiability-timed-without-O-modality}.

Let $T=\{{\bf Tr}^{\mathsf{t}}(x): x \textit{~is a term of~} \LFD^{\mathsf{f},\equiv}\}$. Also, we define $T^+=T\cup\{v_{N+1}\}$. Clearly, both $T$ and $T^+$ are closed under sub-terms, and the translation ${\bf Tr}^{\mathsf{t}}$ from terms of $\LFD^{\mathsf{f},\equiv}$ to $T$ is bijective.

Next, we define a map $\rho$  from $\LFD_{\mathsf{T}}^{\mathsf{f},\equiv}$-formulas to $\LFD_{\mathsf{T}^{\mathsf{+}}}^{\mathsf{f},\equiv}$-formulas that adds a new variable $v_{N+1}$ to the subscripts of all subformulas $\D_X\varphi$ and $D_Xy$. Clearly, the resulting map is injective, and its range coincides with the range of the translation ${\bf Tr}^{\mathsf{t}}$. Moreover, the map ${\bf Tr}^{\mathsf{t}}$ from formulas of $\LDTV^{\mathsf{t},\mathsf{f},\equiv}$  to the range of $\rho$ is bijective. Also, for the map $\rho$, we can prove the following:

\vspace{1.5mm}

\noindent{\bf Claim:}\; For any $\LFD_{\mathsf{T}}^{\mathsf{f},\equiv}$-formula $\varphi$ and $\LFD_{\mathsf{T}}^{\mathsf{f},\equiv}$-model $\bM$, we have
\begin{center}
 $s\vDash_{\bM} \varphi$\; iff \; $s^{v_{N+1}}\vDash_{\bM^{v_{N+1}}} \rho (\varphi)$,
\end{center}
\noindent where $\bM^{v_{N+1}}$ is a model of $\LFD_{\mathsf{T}^\mathsf{+}}^{\mathsf{f},\equiv}$ obtained by giving the new variable $v_{N+1}$ a constant value in $\bM$ and $s^{v_{N+1}}$ is the admissible assignment corresponding to $s$.

\begin{proof}
It is by induction on formulas of $\LFD_{\mathsf{T}}^{\mathsf{f},\equiv}$. Instead of showing details, we note just one key fact: since the new variable $v_{N+1}$ is a constant in $\bM^{v_{N+1}}$, for any terms $X$ in $\LFD_{\mathsf{T}}^{\mathsf{f}}$ and
$s,t\in \bM$, $s^{v_{N+1}}=_X t^{v_{N+1}}$ iff $s=_Xt$.
\end{proof}

Now, we can show that $\mathcal{T}:=\rho^{-1}\circ{\bf Tr}$ is a theorem-preserving translation.

\vspace{1.5mm}

Suppose that $\not\vdash_{{\bf LFD^{f,\equiv}_T}} \mathcal{T}(\varphi)$. Then, from the completeness of ${\bf LFD^{f,\equiv}_T}$ (recall Fact
\ref{prop:decidability-lfd-equiv-funct}), we know that $\mathcal{T}(\neg\varphi)$ is satisfiable. Using the claim above, ${\bf Tr}^{\mathsf{t}}(\neg\varphi)$ is
satisfiable, and so from the proof for Fact \ref{fact:satisfiability-timed-without-O-modality}, it follows that $\neg\varphi$ is satisfiable. By the soundness of $\bf{LDTV^{f,\equiv}}$, it holds
that $\not\vdash_{\bf{LDTV^{f,\equiv}}}\varphi$.

\vspace{1.5mm}

For the converse direction, note that for any axiom $AX$ of ${\bf LFD^{f,\equiv}_T}$,  $\mathcal{T}^{-1}(AX)$ is an axiom or a theorem of $\bf{LDTV^{f,\equiv}}$; and every correct application  of a rule of ${\bf LFD^{f,\equiv}_T}$ is mapped by $\mathcal{T}^{-1}$ into a correct
application of a rule in $\bf{LDTV^{f,\equiv}}$. It follows that, if $\vdash_{{\bf LFD^{f,\equiv}_T}}\mathcal{T}(\varphi)$, then $\vdash_{\bf{LDTV^{f,\equiv}}}
\mathcal{T}^{-1}(\mathcal{T}(\varphi))$, i.e., $\vdash_{\bf{LDTV^{f,\equiv}}}\varphi$.
\end{proof}

\section{Proof for Fact \ref{fact:canonical-general-model}}\label{appendix:proof-for-fact:canonical-general-model}
\begin{proof}
(1). Condition C1. Given the  \textbf{S5}-axioms for dependence quantifiers, all $=^c_X$ are equivalence relations by a standard  argument.

\vspace{1mm}

(2). Condition C2. The axioms for $D_Xy$ were precisely designed to ensure the truth of  conditions of `Dep-Reflexivity',
`Dep-Transitivity' and `Determinism'.

\vspace{1mm}

(3). Condition C3. Let $s=^c_Xt$ and $D_XY\in s$. Using $\D$-Introduction$_2$, we get $\D_XD_XY\in s$. Since $s=^c_Xt$, we have $D_XY\in t$.
Next, let $\D_Y\varphi\in s$. Then, using  Transfer, we get $\D_X\varphi \in s$, and hence $\varphi\in t$.

\vspace{1mm}

(4). Condition C4. Let $s=^c_Xt$ and $\px\in s$ (where $X$ is the set of terms occurring in ${\bm x}$).  Using $\D$-Introduction$_1$, we
have $\D_X\px\in s$, and hence $\px\in t$.

\vspace{1mm}

(5). Condition C5. We have: $P(\tom x_1,\ldots,\tom x_n) \in s$ iff $\tom P(x_1,\ldots, x_n) \in s$ iff $P(x_1,\ldots, x_n) \in
g^c(s)$. The first equivalence holds by axiom Atomic-Reduction, and the second follows from the definition of $g^c$.

\vspace{1mm}

(6). Condition C6. Assume that $s=^c_{\tom X} t$, and $\D_X\varphi\in g^c(s)$. From the latter, we have $\tom\D_X\varphi\in s$. Now, using
w-Next-Time$_1$, $\D_{\tom X}\tom \varphi\in s$. As $s=^c_{\tom X} t$, it holds immediately that $\tom \varphi\in t$, and thus $\varphi\in
g^c(t)$.

\vspace{1mm}

(7). Condition C7. Assume $D_XY\in g^c(s)$. It is simple to see $\tom D_XY\in s$. Using w-Next-Time$_2$, we have $D_{\tom X}\tom Y\in s$. This completes the proof.
\end{proof}

\section{Proof for Theorem \ref{theorem:Repr}}\label{appendix:proof-for-theorem:Repr}
\begin{proof}
Let $\mathcal{M}=(A,g,=_X,\|\bullet\|)$ be a general relational model of $\LTD$. We are going to construct a standard
relational model $\bM=(W, G, \sim_X, \|\bullet\|_\bM)$. 

\medskip

\par\noindent\textbf{Histories}\;
We fix a state $s_0\in A$.
A \textit{history} $h=(s_0,\alpha_0,s_1,\ldots,\alpha_{n-1},s_n)$ is a finite sequence, with states $s_0, \ldots, s_n\in A$ and
`transitions' $\alpha_0, \ldots, \alpha_{n-1}\in  \{g\}\cup \{X\subseteq \term: X \mbox{ is finite}\}$, subject to the following requirement,
for all $k < n$:  $s_k=_{\alpha_k} s_{k+1}$ whenever $\alpha_k \subseteq \term$, and $g(s_k)=s_{k+1}$ whenever $\alpha_k=g$. The set of
\textit{worlds} $W$ of our standard model will be the set of all histories. We denote by $last:W\to A$ the map sending a history $h$ to its
last state $last(h)$; i.e.,  $last(s_0,\alpha_0,s_1,\ldots,\alpha_{n-1},s_n):=s_n$. The \emph{immediate term-succession} relation $h\to h'$ is
defined on $W$ by putting: $h\to h'$ iff $h'=(h,Y, s')$ for some finite $Y\subseteq\term$ and some $s'\in A$; while the dynamical transition
relation $h \stackrel{G}{\to} h'$ is defined by putting: $h\to h'$ iff $h'=(h,g, g(last(h)))$. We denote by $\ot$ the converse of  $\to$, and similarly $ \stackrel{G}{\ot}$ denotes the converse of  $\stackrel{G}{\to}$.
Finally, the \emph{succession relation} $h\preceq h'$ is the reflexive-transitive closure of the union $\to\cup
\stackrel{G}{\to}$ of the immediate term-succession and dynamical transition relations.



\medskip

\par\noindent\textbf{Paths between histories}\; These histories from a natural tree-like structure, partially ordered by the succession
relation $h\preceq h'$, and having the `tree property': each history $h\neq (s_0)$ has a unique immediate predecessor, and moreover the set of
all its predecessors is totally ordered by  $\preceq$. Every two histories $h, h'$ have a unique \emph{greatest lower
bound} $inf_\preceq (h,h')$, given by their largest shared sub-history. Two histories $h, h'$ are \emph{neighbors} if either one is the
immediate predecessor of the other. A \emph{path} from a history $h$ to another history $h'$ is a chain of histories $(h_0, \ldots, h_n)$,
having $h_0 = h$ and $h_n = h'$ as its endpoints, and s.t. for every $k$, histories
$h_k$ and $h_{k+1}$ are neighbors. A path is \emph{non-redundant} if no history appears twice in the chain.
The tree-structure of $W$ ensures that there exists a \emph{unique non-redundant path} between any
two histories $h$ and $h'$. This path can be pictured in terms of first `going down' from $h$ to its predecessors until reaching the largest
shared sub-history $inf_\preceq (h,h')$, and then `going up' again to its successors until reaching the end of $h'$. This visual picture of
our tree-like model may help in understanding the arguments to follow.

\medskip

\noindent\textbf{Dynamical transitions function}\; We now define $G$ by putting:
$$G(h) \,\, := \,\, (h,g, g(last(h))).$$
In other words, $G(h)=h'$ is equivalent to $h  \stackrel{G}{\to} h'$.  

\medskip

\par\noindent\textbf{Observation 1:}\; If $G(h)= h'$, then $g(last(h))=last(h')$.

\medskip
\par\noindent\textbf{Valuation}\; We now  define the valuation $\|\bullet\|_\bM$ of our standard model $\bM$ for atoms $\px$ simply in terms
of truth at the last world in the history:
$$h\in \|P{\bm x}\|_\bM \, \, \mbox{ iff } \,\, last(h)\vDash P{\bm x}.$$

Next, we want to define the \emph{$X$-value equivalence relations} $\sim_X$ in  $\bM$. This has to be done in such a way that we
`improve' the given general relational model in three respects: (i) relations $\sim_X$ become intersections of the $\sim_x$ for $x \in X$,
(ii)   $h\sim_{\tom x} h'$ iff $G(h)\sim_x G(h')$, and (iii) atoms $D_Xy$ get their standard semantic interpretation at histories $h$ in a way
that matches with their truth in $\mathcal{M}$ at $last(h)$.

\medskip

\par\noindent\textbf{Auxiliary one-step relations}\;
Before defining $X$-equivalence, we first need to introduce some auxiliary relations $h\to_X h'$ between histories, one for each finite,
non-empty set $X\subseteq \term$: 
\begin{center}
\begin{tabular}{rcl}
 $h \to_X h'$ & iff & $h'=(h, Z, s')\, \& \, last(h)\vDash D_Z X$, \\
 && for some $Z\subseteq \term$ and some $s'\in A$.
\end{tabular}
\end{center}
\noindent Note that $\to_X$ is included in the immediate successor relation $\to$. We denote by $\ot_X$ the converse of $\to_X$.
With respectively the conditions C3, C6 of Definition \ref{def-generalrelationalmodel} and the equivalence of $D_Z (X\cup Y)$ and $D_Z X\land
D_Z Y$, it is easy to check that:

\vspace{1.5mm}

\par\noindent\textbf{Observation 2:}\; $h\to_{\tom^mX} h'$ implies $g^m(last(h))=_X g^m(last(h'))$.

\vspace{1.5mm}

\par\noindent\textbf{Observation 3:}\;
$h\to_{X\cup Y} h'$ holds iff both $h\to_X h'$ and $h\to_Y h'$ hold.

\vspace{1.5mm}

\par\noindent\textbf{$X$-chains} \; Given a finite, non-empty set $X\subseteq \term$, an \emph{$X$-chain} is a finite sequence of histories
of the form:

\vspace{-3ex}

\begin{align*}
&h=h_0 \ot_X h_1 \ldots \ot_X h_{m_1-1} \stackrel{G}{\ot} h_{m_1} \ot_{\tom X} \ldots \ot_{\tom X} h_{m_2-1}\\
&\stackrel{G}{\ot} h_{2} \ot_{\tom^2X}\ldots h_{m_N-1}\stackrel{G}{\ot}
h_{m_{N}}\ot_{\tom^NX} \ldots \ot_{\tom^NX} h_{m_{N}+k},
\end{align*}

\noindent for some $N, m_1, \ldots, m_N,k\in\mathbb{N}$. The smallest history $h_{m_{N}+k}$ in the $X$-chain is called the \emph{origin} of
the chain, while the largest history $h=h_0$ is  the \emph{end} of the $X$-chain. The number $m_N+k$ is called the \emph{total length} of the
$X$-chain, while $N$ (indicating the number of occurrences of $G$ in the $X$-chain) is called the \emph{$G$-length} of the $X$-chain. A
special case of $X$-chain is a \emph{zero $X$-chain}, i.e., one having total length $m_{N}+k=0$; such an $X$-chain just consists of a single
history $(h)$.

\vspace{1.5mm}

\par\noindent\textbf{$X$-paths}\quad
Given two $h$ and $h'$, a path from $h$ to $h'$ is called an \emph{$X$-path} if it includes two $X$-chains of some equal $G$-length $N$,
having respectively $h$ and $h'$ as their ends, such that 
the two chains have a common origin.
 Clearly, if an $X$-path from $h$ to $h'$ exists, then the non-redundant path from $h$ to $h'$ can give us such an $X$-path.

\vspace{1.5mm}

\par\noindent\textbf{$X$-equivalence relations}\quad We can now define the \emph{$X$-value equivalence relations $\sim_X$} of our intended
model $\bM$: for all finite $X\subseteq \term$ and $h, h'\in W$,
\begin{center}
  $h\sim_X h'$ \,\,  iff  \,\, there exists some $X$-path from $h$ to $h'$.
\end{center}

This completes our definition of the model $\bM$, which is well-defined:





\begin{lemma}\label{lemma:tree-proof-standard-model} $\bM=(W, G, \sim_X,\|\bullet\|_{\bM})$ is a standard relational model.
\end{lemma}

\begin{proof}
The first condition in the surplus of standard relational models over general relational models is that the relation $\sim_X$ equals the intersection $\bigcap_{x \in X} \sim_x$. This is a crucial feature of the above tree construction, which cannot be enforced routinely by means of the standard accessibility relations in the canonical model \citep{BML}. We therefore state it as a separate fact.

\vspace{1.5mm}

\par\noindent\textbf{Claim 1:}\; For any two histories $h$ and $h'$, $h\sim_Xh'$ iff $h\sim_xh'$ for all $x\in X$.

\begin{proof}
It suffices to prove the following:
\begin{center}
 $h\sim_{X\cup Y} h'$ holds iff both $h\sim_X h'$ and $h\sim_Y h'$ hold.
\end{center}
The direction \emph{from left to right} is obvious: by Observation 3, every $X\cup Y$-chain is both an $X$-chain and a $Y$-chain, and thus a
pair of $X\cup Y$-chains with the desired property (of forming an $X\cup Y$-path from $h$ to $h'$) is also a pair of $X$-chains as well as
$Y$-chains with the same property. 

\vspace{1.5mm}

For \emph{the right to left direction}, suppose $h\sim_X h'$ and $h\sim_Y h'$. 
Then the non-redundant path from $h$ to $h'$ has to contain a pair of $X$-chains as well as a pair of $Y$-chains with the desired properties (that make
it both an $X$-path and an $Y$-path). Moreover, we can treat $inf_\preceq (h,h')$ as the common origin of the two $X$-chains as well as the common origin of the two $Y$-chains. So, the $X$-chains are exactly the $Y$-chains. Also, we can show that the two chains are $X\cup Y$-chains: indeed, all the transition steps are both $X$-transitions $\to_{\tom^mX}$ and $Y$-transitions $\to_{\tom^mY}$, and
thus by Observation 3 they are $X\cup Y$-transitions $\to_{\tom^m(X\cup Y)}$. Then, we can  easily check that the two chains satisfy the required conditions for an $X\cup Y$-path.
\end{proof}

Next, we show that:

\vspace{1.5mm}

\par\noindent\textbf{Claim 2:}\; The relations $\sim_X$ are equivalence relations.

\begin{proof}
Reflexivity and symmetry are immediate. For transitivity, let us assume $h\sim_X h'\sim_X h''$, and we will show $h\sim_X h''$. Here we
already know that the non-redundant paths from $h$ to $h'$ and from $h'$ to $h''$ are $X$-paths, and we now have to prove the same assertion
about the non-redundant path from $h$ to $h''$.

Let $N$ be the common $G$-length of the two $X$-chains in the non-redundant $X$-path from $h$ to $h'$, and let $h_0$ and $h'_0$ be the origins
of these two chains; we know that  $h_0=h'_0=inf_\preceq (h,h')$. Similarly, let $M$ be the common $G$-length of
the two $X$-chains in the non-redundant path from $h'$ to $h''$, and let $h'_1$ and $h''_1$ be their origins; we know that  $h'_1=h''_1=inf_\preceq (h', h'')$.

Since both $inf_\preceq (h,h')$ and $inf_\preceq(h', h'')$ are predecessors of the history  $h'$, they must be comparable w.r.t. $\preceq$. Without loss of
generality,  let us assume that $inf_\preceq (h,h') \preceq inf_\preceq(h', h'')$. Then,  $inf_\preceq (h,h'')=inf_\preceq (h, h')$ and
$N\geq M$. Now we have $h'_0\preceq h'_1=inf(h', h'')$. The initial segment of the $X$-chain from $h'_0$ to $h'$ that
lies between $h'_0$ and $h'_1=inf(h', h'')$ consists of $N-M$ $G$-steps, and thus can then be concatenated with the $X$-chain of length $M$
from $h''_1=inf(h', h'')$ to $h''$, to form an $X$-chain of $G$-length $N$ from $h'_0$ to $h''$. So the path from $h$ to $h''$ starts with the
$X$-chain of $G$-length $N$ from $h$ to $h_0$, and end with an $X$-chain the same $G$-length $N$ from $h'_0$ to $h''$.  Recall that  $h_0=h'_0=inf_\preceq (h,h')=inf_\preceq (h, h'')$.  Hence, this path is an $X$-path, as needed.
\end{proof}

Next, we show that the transition function $G$ respects $\sim_{V}$:

\vspace{1.5mm}

\par\noindent\textbf{Claim 3:}\; If $h \sim_V h'$, then $G(h) \sim_V G(h')$.

\begin{proof}
The non-redundant path from $h$ to $h'$ can give us a $V$-path, say
\begin{center}
  $h\ot_{V} h_1 \ldots \stackrel{G}{\ot} \ot_{\tom V} \ldots \to_{V} h'.$
\end{center}

Now we are going to prove that there also exists a $V$-path from $G(h)$ to $G(h')$. A crucial observation here is that $h\to_{\tom^mV} h'$
implies $h\to_{\tom^{m+1}V} h'$, since it holds that $\vdash D_Y\tom^m {V} \to D_Y \tom^{m+1} {V}$. With this, we have:
\begin{center}
$G(h) \stackrel{G}{\ot} h \ot_{\tom V} h_1  \ldots \stackrel{G}{\ot} \ot_{\tom^2V} \ldots  \to_{\tom V} h'  \stackrel{G}{\to} G(h')$
\end{center}
from which we can obtain a $V$-path from $G(h)$ to $G(h')$.
\end{proof}

\par\noindent\textbf{Claim 4:}\; If $h\sim_X h'$, then $last(h)=_X last(h')$.

\begin{proof}
From $h\sim_X h'$ we know that the non-redundant path from $h$ to $h'$ is an $X$-path, and therefore, it is composed of two $X$ chains, one
descending and the second ascending, having the same $G$-length $N$. Let
{\small{
$$h \ot_X h_1 \ot_X\ldots  h_{m_1-1} \stackrel{G}{\ot} h_{m_1} \ot_{\tom X} \ldots h_{m_N-1}\stackrel{G}{\ot} h_{m_{N}}\ot_{\tom^NX} \ldots
\ot_{\tom^NX} h_{m_{N}+k}$$}}

\noindent be the first (`descending') $X$-chain of $G$-length $N$ on this non-redundant $X$-path from $h$ to $h'$.
Using Observations 1 and 2 repeatedly:
\begin{align*}
 &   last(h)=_X last(h_1)=_X \ldots =_X last(h_{m_1-1})
=g(last(h_{m_1}))=_X \ldots=_X  \\
&g(last(h_{m_2-1}))=g^2 (last(h_{m_2}))=_X \ldots =_X g^N(last(h_{m_{N}+k})).
\end{align*}
Therefore, we have:
\begin{center}
 $last(h)=_X g^N(last(h_{m_{N}+k})).$
\end{center}

\noindent Reasoning along the second (`ascending') $X$-chain of $G$-length $N$ on the same non-redundant $X$-path from $h$ to $h'$, we similarly obtain that
 \begin{center}
$last(h')=_X g^N(last(h_{m'_{N}+k'})),$
 \end{center}
 
 \noindent where $h_{m'_N+k'}$ is the origin of this second $X$-chain. Also, we have $h_{m_{N}+k}=
inf_\preceq (h,h')=h_{m'_N+k'}$, which can give us
$last(h)=_X g^N(last(h_{m_{N}+k})) = g^N( last(inf_\preceq (h,h'))) = g^N(last(h_{m'_{N}+k'})) =_X last(h')$, as desired.
\end{proof}
 




With the above result, we can show that the truth values of non-dependence atoms $\px$ are invariant in the way required by Definition
\ref{def-model}.

\vspace{1.5mm}

\par\noindent\textbf{Claim 5:}\; If $h\sim_X h'$ and $h\vDash P(x_1,\ldots, x_n)$ for some $x_1,\ldots, x_n\in X$, then it holds that $h'\vDash
P(x_1,\ldots, x_n)$.

\begin{proof}
 By Claim 4, it holds that $last(h)=_X last(h')$. Also, with the definition of $\|\bullet\|_{\bM}$, $h\vDash P(x_1,\ldots, x_n)$ gives us
 $last(h)\vDash P(x_1,\ldots, x_n)$. Then, by the condition C4 in Definition \ref{def-generalrelationalmodel}, we have $last(h')\vDash P(x_1,\ldots,
 x_n)$, and thus it holds that $h'\vDash P(x_1,\ldots, x_n)$.
\end{proof}

Additionally, we still need to prove the following:

\vspace{1.5mm}

\par\noindent\textbf{Claim 6:}\; $h\sim_{\tom X} h'$ iff $G(h)\sim_XG(h')$.

\begin{proof}
 From \emph{left to right}, suppose that $h\sim_{\tom X} h'$. Thus, the non-redundant path from history $h$ to $h'$ is a $\tom X$-path that it
 is composed of two $\tom X$-chains, one descending and the second ascending, having the same $G$-length $N$. Let
 \begin{align*}
  &h \ot_{\tom X} h_1 \ot_{\tom X}\ldots  h_{m_1-1} \stackrel{G}{\ot} h_{m_1}
\ot_{\tom^2 X} \ldots
h_{m_N-1}\stackrel{G}{\ot}\\
&h_{m_{N}}\ot_{\tom^{N+1} X} \ldots \ot_{\tom^{N+1} X} h_{m_{N}+k}   
 \end{align*}

\noindent be the first (`descending') $\tom X$-chain of $G$-length $N$ on this non-redundant $\tom X$-path running from $h$ to $h'$. Then
immediately, by adding $G(h)$ to the chain as a new end with the transition $\stackrel{G}{\ot}$, we obtain an $X$-chain of $G$-length $N+1$.
Similarly, we can also have an $X$-chain of the same $G$-length now having $G(h')$ as its end from the second ascending $\tom X$-chain.
Consequently, we have that $G(h)\sim_XG(h')$.

\vspace{1mm}

For the direction from \emph{right to left}, we assume that $G(h)\sim_XG(h')$. With the reasoning above, we just need to remove the ends (as
well as the corresponding transitions) of the two $X$-chains forming the $X$-path, and then we are certain to get two $\tom X$-chains that
give us a $\tom X$-path from $h$ to $h'$.
\end{proof}

This completes the proof of Lemma \ref{lemma:tree-proof-standard-model}.
\end{proof}

Finally, we show that the map $last$ preserves the truth values of $\LTD$-formulas:

 \begin{lemma}\label{lemma:p-morphism} 
$last:W\to A$ is a modal $p$-morphism from $\bM$ onto $\mathcal{M}$.
 \end{lemma}

 \begin{proof}
First, surjectivity is obvious, since each $s\in A$ equals $last(s)$. Next, and much less straightforwardly, we must check that the map $last$
satisfies the back-and-forth clauses of modal $p$-morphisms for the dependence relations and for the transition function, as well as the
`harmony' clause for the two kinds of atoms. We state these with their reasons.

\begin{itemize}
\item[$\bullet$] If $h\sim_Xh'$, then $last(h)=_X last(h')$.

This is exactly Claim 4.

\item[$\bullet$] If $last(h)=_Xs$, then there is a history $h'$ with $h\sim_Xh'$ and $last(h')=s$.

For $h'$, we can just take the history  $(h, X, s)$.

\item[$\bullet$] If $G(h)=h'$, then $g(last(h))=last(h')$.

This is our Observation 1.

\item[$\bullet$] If $g(last(h))=s$, then there is some $h'$ with $G(h)=h'$ and $last(h')=s$.

Here it suffices to let  $h'$ be the history $(h,g,s)$.
\end{itemize}

Next, we consider the valuation on atoms. For standard atoms $P{\bm x}$, histories $h$ in $\bM$ agree with their $last$-values $last(h)$
in $\mathcal{M}$ by the definition of $\|\bullet\|_{\bM}$. However, the more challenging case is that of dependence atoms $D_Xy$, since these
get their meaning through the semantics in the standard relational model $\bM$ rather then being imposed by the valuation. Thus, we need to
show the following:
\begin{itemize}
\item[$\bullet$]  \, $h \vDash_{\mathcal{M}} D_Xy \,\,\,\, \textrm{iff} \,\,\,  last(h) \vDash_{\bM} D_Xy$
\end{itemize}

We first make the auxiliary observation involving  dependence statements.

\vspace{1.5mm}

\par\noindent\textbf{Observation 5:}\; If $h\to_{\tom^iX} h'$ and $last(h')\vDash \bigcirc^i D_X Y$, then $last(h)\vDash \bigcirc^i D_X Y$.

\begin{proof}
From $last(h')\vDash \bigcirc^i D_X Y$ it follows that $g^i (last(h'))\vDash D_X Y$. Since $h\to_{\tom^iX} h'$, from Observation 2 it follows
that $g^i(last(h)) =_X g^i(last(h'))$. This, together with $g^i (last(h'))\vDash D_XY$, gives us that $g^i(last(h)) \vDash D_XY$ (again by C3), from which
we conclude $last(h)\vDash \bigcirc^i D_X Y$.  
\end{proof}

Also, it is simple to see that:

\vspace{1.5mm}

\par\noindent\textbf{Observation 6:}\; If $h\stackrel{G}{\to} h'$ and $last(h')\vDash \bigcirc^i D_X Y$, then
$last(h)\vDash \bigcirc^{i+1} D_X Y$.

\vspace{1.5mm}

Next, we have that:

\vspace{1.5mm}

\par\noindent\textbf{Observation 7:}\; If $h\to_{\tom^iX} h'$ and $last(h')\vDash \bigcirc^i D_XY$, then $h\to_{\tom^i Y} h'$.

\begin{proof}
Since $h\to_{\tom^iX} h'$, we have $h'=(h, Z, s')$ for some $Z$ and $s'$, where $last(h)=_Z s'$ and $last(h)\vDash D_Z\tom^i X$. From this, we
get  $s'\vDash D_Z\tom^i X$. Putting this together with $s'\vDash \bigcirc^i D_XY$ and using $\vdash D_X\tom^nY\land\bigcirc^nD_YZ\to
D_X\tom^nZ$ (Dyn-Trans), we can obtain $s'\vDash D_Z\tom^i Y$. It follows that $h\to_{\tom^iY} h'$.
\end{proof}

Next, we spell out the fact about dependence atoms that was needed above.

\vspace{1.5mm}

\par\noindent\textbf{Claim 7:}\;  The following two assertions are equivalent for histories $h$:
\begin{itemize}
\item[(a).] For all histories $h'$, $h\sim_X h'$ implies $h\sim_y h'$.
\item[(b).] $last(h)\vDash D_Xy$.
\end{itemize}

\begin{proof}
\emph{From (a) to (b)}. 
Let $s:=last(h)$. Denote by $h'$ the history $(h,X,s)$,
which is well-defined. Immediately, it holds that $h\sim_Xh'$. So, there exists a $y$-path between $h$ and $h'$, which includes the $X$-path
between $h$ and $h'$. In particular, the immediate variable-succession transition from $h$ to $h'$ is a link in the $y$-chain. More
precisely, the transition should be $\to_y$, which holds by the definition of $y$-chains. Therefore, $s\vDash D_X y$.

\vspace{1.5mm}

\emph{From (b) to (a)}. Assume $last(h)\vDash D_X y$, and let $h'\in W$ with $h\sim_X h'$. We need to show that $h\sim_y h'$. For this, we
look at the $X$-path from $h$ to $h'$, which must include two $X$-chains of some common $G$-length $N$:

\vspace{-4mm}

\begin{align*}
& h=h_0 \ot_X h_1 \ldots \ot_X h_{m_1-1} \stackrel{G}{\ot} h_{m_1}
\ot_{\tom X} \ldots \ot_{\tom X} h_{m_2-1}\\ &\stackrel{G}{\ot} h_{m_2} \ot_{\tom^2 X}\ldots h_{m_N-1}\stackrel{G}{\ot}
h_{m_{N}}\ot_{\tom^N X} \ldots \ot_{\tom^N X} h_{m_{N}+k}
\end{align*}
and
\begin{align*}
& h'=h'_0 \ot_X h'_1 \ldots \ot_X h'_{m'_1-1} \stackrel{G}{\ot} h_{m'_1}
\ot_{\tom X} \ldots \ot_{\tom X} h_{m'_2-1}\\ &\stackrel{G}{\ot} h_{m'_2} \ot_{\tom^2 X}\ldots
h_{m'_N-1}\stackrel{G}{\ot}h_{m'_{N}}\ot_{\tom^N X} \ldots \ot_{\tom^N X} h_{m'_{N}+k'}
\end{align*}
s.t. we have $h_{m_N+k}=h_{m'_N+ k'}=inf_\preceq (h,h')$.

\vspace{1mm}

Using the fact that $last(h)\vDash D_X y$ and repeatedly using Observations 5 and 6, we can show that the last state of every history in the first $X$-chain above satisfies $\bigcirc^i D_Xy$, where $i$ is the number of previous $G$-steps in the chain. Stated more formally: for every number of the form $m_i+q<m_{i+1}$, we have
\begin{center}
 $last(h_{m_i+q})\vDash  \bigcirc^i D_Xy.$
\end{center}

Similarly using the fact that $last(h')\vDash D_X y$ (which follows from  $last(h)\vDash D_X y$ plus the fact that $h\sim_X h'$, together with Claim 4 and condition C3 on Definition \ref{def-generalrelationalmodel}) and then employing Observations 5 and 6, we can prove the analogous fact for the second $X$-chain above: for every number of the form $m'_i+q'<m'_{i+1}$, $$last(h_{m'_i+q'})\vDash \bigcirc^i D_Xy.$$
Using the above facts and Observation 7, we have the following:
\begin{align*}
 &h=h_0 \ot_y h_1 \ldots \ot_y h_{m_1-1} \stackrel{G}{\ot} h_{m_1}
\ot_{\tom y} \ldots \stackrel{G}{\ot} h_{m_2}\\ &\ot_{\tom^2 y}\ldots \stackrel{G}{\ot}
h_{m_N} \ot_{\tom^N y} \ldots \ot_{\tom^N y} h_{m_{N}+k}
\end{align*}
and
\begin{align*}
 & h'=h'_0 \ot_y h'_1 \ldots \ot_y h'_{m'_1-1} \stackrel{G}{\ot} h_{m'_1}
\ot_{\tom y} \ldots \stackrel{G}{\ot} h_{m'_2}\\ &\ot_{\tom^2 y}\ldots \stackrel{G}{\ot}
h_{m'_{N}}\ot_{\tom^N y} \ldots \ot_{\tom^N y} h_{m'_{N}+k'}.
\end{align*}

\vspace{1mm}

Recall that the original $X$-chains (being part of an $X$-path) have the property that $h_{m_N+k}=h_{m'_N+ k'}=inf_\preceq (h,h')$. The two $y$-chains above meet at $inf_\preceq (h,h')$ and so in that case we are done: the path from $h$ to $h'$ is a $y$-path.
\end{proof}

Taking all this together, $last$ is a surjective $p$-morphism from $\bM$ to $\mathcal{M}$.
\end{proof}

This completes the proof for  Theorem \ref{theorem:Repr}.
\end{proof}

\section{Proof for Fact \ref{fact:phi-model-c2}}\label{appendix:proof-for-fact:phi-model-c2}

\begin{proof}
Clauses V1 and V2 guarantee `Determinism' and `Dep-Reflexivity' respectively. We move to considering for `Dep-Transitivity'. Assume $\alpha\in
\|D_XY\|^{\dag}$ and $\alpha\in \|D_YZ\|^{\dag}$. To prove $\alpha\in \|D_XZ\|^{\dag}$, we consider all possible situations why $\alpha\in
\|D_XY\|^{\dag}$ and $\alpha\in \|D_YZ\|^{\dag}$ hold.

\vspace{1.5mm}

(1).  $\alpha\in \|D_XY\|^{\dag}$ holds by V1. Let us consider the cases for $\alpha\in \|D_YZ\|^{\dag}$.

\vspace{1mm}

(1.1). It holds by V1 or V2. Then, by V1, we have $\alpha\in \|D_XZ\|^{\dag}$ (one can check that the condition imposed in V1 is satisfied).

\vspace{1mm}

(1.2). It holds by V3. Then, $D_{Y'} Z\in\alpha$ for some $Y'\subseteq Y$. As $\alpha\in \|D_XY\|^{\dag}$ holds by V1,
$D_{\tom^m\mathbb{V}_{\Phi}} Y'\in \alpha$ for some $m$ with $\tom^m\mathbb{V}_{\Phi}\subseteq X$ and $m\le {\rm{min}}\{\td(y): y\in Y'\}$,
which gives us $D_{\tom^m\mathbb{V}_{\Phi}} Z\in \alpha$. Now, using V3, we obtain $\alpha\in\|D_XZ\|^{\dag}$.

\vspace{1mm}

(1.3). It holds by V4. Then, there are $Y'\subseteq Y$ and $m\le {\rm{min}}\{\td(z): z\in Z\}$ with $D_{Y'}\tom^m\mathbb{V}_{\Phi}\in\alpha$.
Now, one can check that $D_{\tom^n \mathbb{V}_{\Phi}}Y'\in\alpha$ for some $n\in\mathbb{N}$ s.t. $\tom^n \mathbb{V}_{\Phi}\subseteq X$ and
$n\le {\rm{min}}\{\td(y): y\in Y'\}$. Then, $D_{\tom^n \mathbb{V}_{\Phi}}\tom^m\mathbb{V}_{\Phi}\in\alpha$. By V4, it holds that
$\alpha\in\|D_XZ\|^{\dag}$.

\vspace{1.5mm}

(2). $\alpha\in \|D_XY\|^{\dag}$ holds by V2. Then, $Y\subseteq X$. Again, let us consider different cases for $\alpha\in \|D_YZ\|^{\dag}$.

\vspace{1mm}

(2.1). If it holds by V1 or V2, then by the same clause, we have $\alpha\in \|D_XZ\|^{\dag}$.

\vspace{1mm}

(2.2). If it holds by V3, then $D_{Y'}Z\in\alpha$ for some $Y'\subseteq Y$. From $Y\subseteq X$ and V3 it follows that
$\alpha\in\|D_XZ\|^{\dag}$.

\vspace{1mm}

(2.3). If it holds by V4, then there are $Y'$ and $m\in \mathbb{N}$ such that $Y'\subseteq Y$, $m\le {\rm{min}}\{\td(z): z\in Z\}$ and
$D_{Y'}\tom^m\mathbb{V}_{\Phi}\in\alpha$. Now, using $Y\subseteq X$ and V4, we can obtain $\alpha\in\|D_XZ\|^{\dag}$.

\vspace{1.5mm}

(3). $\alpha\in \|D_XY\|^{\dag}$ holds by V3. So, we have $D_{X'}Y\in \alpha$ for some $X'\subseteq X$. Now let us consider $\alpha\in
\|D_YZ\|^{\dag}$.

\vspace{1mm}

(3.1). If it holds by V1, then $\tom^m\mathbb{V}_{\Phi}\subseteq Y$ for some $m\le {\rm {min}}\{\td(z): z\in Z\}$. Then,
$D_{X'}\tom^m\mathbb{V}_{\Phi}\in\alpha$. Now, using V4, we have $\alpha\in \|D_XZ\|^{\dag}$.

\vspace{1mm}

(3.2).  If it holds by V2, then $Z\subseteq Y$. By $D_{X'}Y\in \alpha$, it holds that $D_{X'}Z\in \alpha$. Now, using V3 we can obtain
$\alpha\in \|D_XZ\|^{\dag}$.

\vspace{1mm}

(3.3). If it holds by V3, then  $D_{Y'}Z\in\alpha$ for some $Y'\subseteq Y$. Now, $D_{X'}Y'\in\alpha$, which gives us $D_{X'}Z\in\alpha$.
Recall $X'\subseteq X$, and so from V3 we get $\alpha\in\|D_XZ\|^{\dag}$.

\vspace{1mm}

(3.4). If it holds by V4, then we have $D_{Y'}\tom^m\mathbb{V}_{\Phi}\in\alpha$ for some $Y'\subseteq Y$ and $m\le {\rm{min}}\{\td(z): z\in
Z\}$. From $D_{X'}Y\in\alpha$, it follows that $D_{X'}Y'\in\alpha$, which then gives us $D_{X'}\tom^m\mathbb{V}_{\Phi}\in\alpha$. Now, using
V4, we get $\alpha\in\|D_XZ\|^{\dag}$.

\vspace{1.5mm}

(4). $\alpha\in \|D_XY\|^{\dag}$ holds by  V4. Then,   $D_{X'}\tom^m{\mathbb{V}_{\Phi}}\in\alpha$ for some $X'\subseteq X$ and
$m\le {\rm{min}}\{\td(y): y\in Y\}$. Now we move to analyzing $\alpha\in\|D_YZ\|^{\dag}$.

\vspace{1mm}

(4.1). It holds by V1. Then, there is some $n\in\mathbb{N}$ such that $\tom^n\mathbb{V}_{\Phi}\subseteq Y$ and $n\le {\rm {min}}\{\td(z): z\in
Z\}$. Also, $m\le n$. Now, using V4, we have $\alpha\in\|D_XZ\|^{\dag}$.

\vspace{1mm}

(4.2). It holds by V2. Then, we obtain $Z\subseteq Y$. So, ${\rm{min}}\{\td(z): z\in Z\}\ge {\rm{min}}\{\td(y): z\in Y\}$. From  $m\le
{\rm{min}}\{\td(y): y\in Y\}$, we know that $m\le {\rm{min}}\{\td(z): z\in Z\}$. By V4, $\alpha\in\|D_XZ\|^{\dag}$.

\vspace{1mm}

(4.3). It holds by V3. So, $D_{Y'}Z\in\alpha$ for some $Y'\subseteq Y$. Notice that we have $D_{\tom^{m}\mathbb{V}_{\Phi}}Y'\in\alpha$. Thus,
$D_{X'}Y'\in\alpha$, and so $D_{X'}Z\in\alpha$. As $X'\subseteq X$, using V3 we have $\alpha\in\|D_XZ\|^{\dag}$.

\vspace{1mm}

(4.4). It holds by V4. Then, we have $D_{Y'}\tom^n\mathbb{V}_{\Phi}\in\alpha$ for some $Y'\subseteq Y$ and $n\le {\rm{min}}\{\td(z): z\in Z\}$. Since
$m\le {\rm{min}}\{\td(y): y\in Y\}$ while also $Y'\subseteq Y$, it holds that $m\le {\rm{min}}\{\td(y): y\in Y'\}$. Now, from
$D_{X'}\tom^m{\mathbb{V}_{\Phi}}\in\alpha$, it follows that $D_{X'}Y'\in\alpha$. Therefore, $D_{X'}\tom^n\mathbb{V}_{\Phi}\in\alpha$.
Finally, using V4 we obtain $\alpha\in\|D_XZ\|^{\dag}$.
\end{proof}

\section{Proof for Fact \ref{fact:phi-model-c3}}\label{appendix:proof-for-fact:phi-model-c3}

\begin{proof}
When $\alpha=\beta=\emptyset$, it is not hard to check $\alpha\approx_Y \beta$ and $\beta\vDash D_XY$. In what follows, we show that for
$\alpha\not=\emptyset$. There are different cases.

\vspace{1.5mm}

(1). First, assume that $\alpha \approx_X \beta $ holds by E1. Then, $\td(X)\le\td(\alpha)$.  We consider the different reasons why
$\alpha\in\|D_XY\|^{\dag}$ holds.

\vspace{1mm}

(1.1). It holds by V1. Then, $\tom^m\mathbb{V}_{\Phi}\subseteq X$ for some $m\le {\rm{min}}\{\td(y): y\in Y\}$. As
$\td(X)\le\td(\alpha)$, we have $D_X\tom^m \mathbb{V}_{\Phi}\in\alpha$. Now, using E1, we can infer $D_X\tom^m \mathbb{V}_{\Phi}\in\beta$. From
V4, it follows that $\beta\vDash D_XY$. It remains to show $\alpha\approx_Y\beta$.

\vspace{1mm}

(1.1.1). Suppose $\td(Y)\le \td(\alpha)$. Then, from $\alpha\in\|D_XY\|^{\dag}$,  it follows that $D_XY\in\alpha$. Now, recall that
$\alpha\approx_X\beta$, and by Fact \ref{fact:observation-on-relations-of-phi-model}, it holds that $\alpha\approx_Y\beta$.

\vspace{1mm}

(1.1.2). Suppose $\td(Y)> \td(\alpha)$. Now, by the same reasoning as that in (1.1.1), but now using $\tom^m\mathbb{V}_{\Phi}$ in place of
$Y$, we can show $\alpha\approx_{\tom^m\mathbb{V}_{\Phi}}\beta$, where $m$ has already been specified. Now, using V2, we have
$\alpha\approx_Y\beta$.

\vspace{1mm}

Thus, when $\alpha\in\|D_XY\|^{\dag}$ holds by V1, we have $\alpha\approx_Y\beta$ and $\beta\vDash D_XY$.

\vspace{1mm}

(1.2). $\alpha\in\|D_XY\|^{\dag}$ holds by V2. Then, $Y\subseteq X$. So, $\td(Y)\le \td(\alpha)$. Then, $D_XY\in\alpha$. Since
$\alpha\approx_X\beta$, $D_XY\in\beta$. Now, by Fact \ref{fact:observation-on-relations-of-phi-model}, it holds that $\alpha\approx_Y\beta$.

\vspace{1mm}

(1.3). $\alpha\in\|D_XY\|^{\dag}$ holds by V3. Then, $D_{X'}Y\in\alpha$ for some $X'\subseteq X$. Recall that $\td(X)\le \td(\alpha)$. Thus,
$D_XX'\in\alpha$. Then, $D_XY\in\alpha$. Now, from Fact \ref{fact:observation-on-relations-of-phi-model} it follows that
$\alpha\approx_Y\beta$.

\vspace{1mm}

(1.4). It holds by V4. Then, it holds that  $D_{X'}\tom^n\mathbb{V}_{\Phi}\in\alpha$ for some  $X'\subseteq X$ and $n\le{\rm{min}}\{\td(y): y\in Y\}$. Using
$X'\subseteq X$ and $\td(X)\le\td(\alpha)$, we can infer $D_XX'\in\alpha$. Then, it holds that $D_X\tom^n\mathbb{V}_{\Phi}\in\alpha$. Recall
that $\alpha\approx_X\beta$ holds by E1. Then, by V4, it holds that $\beta\vDash D_XY$. Now, it suffices to show that
$\alpha\approx_{Y}\beta$. Again, there are different cases: $\td(Y)\le \td(\alpha)$ or $\td(Y)> \td(\alpha)$.

 When $\td(Y)\le \td(\alpha)$, we can show it by the same reasoning as that for (1.1.1). When $\td(Y)> \td(\alpha)$, it
suffices to prove $\alpha\approx_{\tom^n\mathbb{V}_{\Phi}}\beta$, which holds by Fact \ref{fact:observation-on-relations-of-phi-model}.

 \vspace{1mm}

So, when $\alpha\approx_X\beta$ holds by E1, both $\alpha\approx_Y\beta$ and $\beta\vDash D_XY$ hold.

\vspace{1.5mm}

 (2). Next, we consider the case that $\alpha\approx_X\beta$ holds by E2. So, $\td(X)>\td(\alpha)$. Also, there is some
 $m\le{\rm{min}}(\{\td(\alpha)\}\cup\{\td(x): x\in X\})$ such that $\alpha\approx_{\tom^m\mathbb{V}_{\Phi}}\beta$ holds by E1. We now consider
 the cases for the reason why $\alpha\in\|D_XY\|^{\dag}$ holds.

\vspace{1mm}

(2.1). $\alpha\in\|D_XY\|^{\dag}$ holds by V1. Then, it holds that $\tom^n\mathbb{V}_{\Phi}\subseteq X$ for some $n\le {\rm{min}}\{\td(y): y\in
Y\}$. Using V1 again, we can obtain $\beta\in\|D_XY\|^{\dag}$. Moreover, it is obvious that $m\le n$. We now proceed to prove $\alpha\approx_Y\beta$. There
are different situations: $\td(Y)\le\td(\alpha)$ or $\td(Y)>\td(\alpha)$.

\vspace{1mm}

(2.1.1). For the case that $\td(Y)\le\td(\alpha)$, we must prove $\alpha\approx_Y\beta$ with E1. This is  given by
Fact \ref{fact:observation-on-relations-of-phi-model}: to see this, observe that, crucially, $D_{\tom^m\mathbb{V}_{\Phi}}Y\in\alpha$.

\vspace{1mm}

(2.1.2). The case that $\td(Y)>\td(\alpha)$ is trivial: with the observation that $m\le {\rm{min}}\{\td(y): y\in Y\}$, it holds by the same reason
with that of $\alpha\approx_X\beta$.

\vspace{1mm}

(2.2). $\alpha\in\|D_XY\|^{\dag}$ holds by V2.  Then, $Y\subseteq X$. Immediately, by V2 we also have $\beta\in\|D_XY\|^{\dag}$. It remains
to prove $\alpha\approx_Y\beta$, and there are two cases: $\td(Y)> \td(\alpha)$ or $\td(Y)\le \td(\alpha)$. If the former  holds,
 $\alpha\approx_Y\beta$ holds by the same reason as  for $\alpha\approx_X\beta$. Let us move to the latter. Recall $m\le \td(\alpha)$.
As $Y\subseteq X$, we have $D_{\tom^m\mathbb{V}_{\Phi}}Y\in\alpha$. Now, from Fact
\ref{fact:observation-on-relations-of-phi-model}, it follows that $\alpha\approx_Y\beta$.

\vspace{1mm}

(2.3). $\alpha\in\|D_XY\|^{\dag}$ holds by V3. Then, $D_{X'}Y\in\alpha$ for some $X'\subseteq X$. As $m\le{\rm{min}}\{\td(x): x\in X\}$,
we have $D_{\tom^m\mathbb{V}_{\Phi}} X'\in\alpha$. Again, notice that $\alpha\approx_{\tom^m\mathbb{V}_{\Phi}}\beta$ holds by E1, and using
its further relevant clauses, we can  show that  $\beta\in\|D_XY\|^{\dag}$ and $\alpha\approx_Y\beta$.

\vspace{1mm}

(2.4). $\alpha\in\|D_XY\|^{\dag}$ holds by V4. Then,  $D_{X'}\tom^n\mathbb{V}_{\Phi}\in\alpha$ for some $X'\subseteq X$ and
$n\le{\rm{min}}\{\td(y): y\in Y\}$. Now, we also have $D_{\tom^m\mathbb{V}_{\Phi}}X'\in\alpha$. Since
$\alpha\approx_{\tom^m\mathbb{V}_{\Phi}}\beta$, by Fact \ref{fact:observation-on-relations-of-phi-model} it holds that
$\alpha\approx_{X'}\beta$, and then we have $D_{X'}\tom^n\mathbb{V}_{\Phi}\in\beta$, which  gives us $\beta\vDash D_XY$ (using V4).
Moreover, we can prove $\alpha\approx_{\tom^n\mathbb{V}_{\Phi}}\beta$, which implies $\alpha\approx_Y\beta$.

\vspace{1.5mm}

Thus, when $\alpha\approx_X\beta$ holds by E2, we also have $\alpha\approx_Y \beta$ and $\beta\vDash D_XY$.
\end{proof}

\section{Proof for Fact \ref{fact:phi-model-c6}}\label{appendix:proof-for-fact:phi-model-c6}

\begin{proof}
Suppose $\alpha\approx_{\tom X}\beta$ and $\td(\alpha)=i$. Then, it holds that $\td(\beta)=i$. So, $\td(G(\alpha))=\td(G(\beta))$. There are two situations: $1+\td(X)\le i$ or $1+\td(X)> i$.

\vspace{1.5mm}

(1). We begin with the first case. Then, we have $\td(X)\le \td(G(\alpha))$. Assume that $D_XY\in G(\alpha)$.

\vspace{1mm}

First, $D_XY\in G(\alpha)$ implies $\tom D_XY\in \alpha$. Then, by P4,   $\D_X\tom D_XY\in \Phi_i$. Also, by P3, $\tom \D_XD_XY\in
\Phi_i$. Now it is easy to see $\tom \D_XD_XY\in \alpha$, which implies $\D_{\tom X}\tom D_XY\in \alpha$. Obviously, $D_{\tom X}\tom X\in
\alpha$. Thus, from $\alpha\approx_{\tom X}\beta$, it follows that $\D_{\tom X}\tom D_XY\in \beta$. Now, it is simple to see that $D_XY\in
G(\beta)$.

\vspace{1mm}

Next, assume $\D_Y\varphi\in G(\alpha)$. Then, $\tom \D_Y\varphi\in \alpha$. By P4, $\D_{\tom Y} \tom \D_Y\varphi\in \Phi_i$. Then, we can
infer $\D_{\tom Y} \tom \D_Y\varphi\in \alpha$. Also, notice that $D_{\tom X}\tom Y\in\alpha$. So, $\alpha\approx_{\tom X}\beta$ implies
$\D_{\tom Y} \tom \D_Y\varphi\in \beta$. Thus, $\tom \D_Y\varphi\in \beta$, and so $\D_Y\varphi\in G(\beta)$. Hence, we conclude that
$G(\alpha)\approx_X G(\beta)$.

\vspace{1.5mm}

(2). Let us move to the second case. Then, $\td(X)> \td(G(\alpha))$. Now, we have $\alpha\approx_{\tom ^m\mathbb{V}_{\Phi}}\beta$ for some
$m\le
{\rm{min}}(\{\td(\alpha)\}\cup\{\td(\tom x): x\in X\})$.

\vspace{1mm}

When $m\ge 1$, the present case can be reduced to the earlier case (1) and we obtain that
$G(\alpha)\approx_{\tom^{m-1}\mathbb{V}_{\Phi}}G(\beta)$, which implies $G(\alpha)\approx_{X}G(\beta)$.

\vspace{1mm}

When $m= 0$ (i.e., $\alpha\approx_{\mathbb{V}_{\Phi}}\beta$), to show $G(\alpha)\approx_XG(\beta)$, it suffices to prove
$G(\alpha)\approx_{\mathbb{V}_{\Phi}}G(\beta)$. The case that $G(\alpha)=G(\beta)=\emptyset$ is trivial, and we merely consider the case that
they are not empty. Now, $\td(\alpha)=\td(\beta)\ge 1$. Also, it is simple to check that $\alpha\approx_{\tom\mathbb{V}_{\Phi}}\beta$. Again,
this can be reduced to the case (1) and we can get $G(\alpha)\approx_{\mathbb{V}_{\Phi}}G(\beta)$. This completes the proof.
\end{proof}




\end{document}